\begin{document}

\pagestyle{plain}
    \medskip
    \vspace*{2cm}
    \begin{center}
        {\Huge A Mathematical Introduction to \vspace{0.2cm}\\ Geometric Quantization}
        
        \vspace{2cm}
        
        Kadri İlker Berktav\footnote{berktav@metu.edu.tr}$^{,a}$.
        Burak Oğuz\footnote{boguz@ictp.it}$^{,b}$,
        Ömer Önder\footnote{omer.onder@bilkent.edu.tr}$^{,c}$,
        Yunus Emre Sargut\footnote{y.sargut@uky.edu}$^{,d}$,
        Başar Deniz Sevinç\footnote{basardeniz.sevinc@studio.unibo.it}$^{,e}$,
        Deniz Nazif Taştan\footnote{deniznazif.tastan@studio.unibo.it}$^{,e}$.
         
        \bigskip\medskip\centerline{$^a$ \it Department of Mathematics, Middle East Technical University, 06531, Ankara, Türkiye} 
        \smallskip \centerline{$^b$
        \it ICTP, Strada Costiera 11, Trieste 34151, Italy}
        \smallskip \centerline{$^c$ \it Department of Physics, Bilkent University, 06800 Ankara, Türkiye}
        \smallskip \centerline{$^d$ \it Department of Physics and Astronomy, University of Kentucky, Lexington, 40506}  
        \smallskip \centerline{$^e$ \it Dipartimento di Fisica e Astronomia, Universit\`a di Bologna, via Irnerio 46, Bologna 40126, Italy}
    \end{center}
    
    \vspace{1.5cm}
    
    \begin{abstract}
            These notes are based on a series of lectures by Kadri İlker Berktav from May 2024 to November 2024,  providing a detailed exposition of geometric quantization formalism and its essential components. They are organized into three parts: background in symplectic geometry, basics of geometric quantization formalism, and an application related to Edward Witten's work in knot theory and topology. 
    \end{abstract}
\newpage

\section*{Acknowledgment}
This is a collection of notes from a lecture series KİB gave a relatively long time ago, in a galaxy -not- far,
far away. 
\vspace{0.2in}

The main body of the text presents key ideas and standard results gathered from the literature. The arguments we present here are well-known to experts. As a disclaimer, these notes are not intended to offer original or new results, nor do they aim to provide a comprehensive reference list on the topics discussed. Instead, we hope the material in these notes offers a brief introduction and simple guidelines to help non-experts understand the basics of the subject. Therefore, our systematic presentation of various topics may prove useful. Please also be vigilant for typos and errors.
\vspace{0.2in}

 KİB warmly thanks \textbf{Burak Oğuz, Ömer Önder, Yunus Emre Sargut, Başar Deniz Sevinç,} and \textbf{Deniz Nazif Taştan} for their time and collaborative effort in compiling the (very messy) handwritten notes and preparing this current version. KİB appreciates the effort invested in creating beautiful images and diagrams that effectively capture the spirit of our in-class meetings. 
 \vspace{0.2in}
 
 We would like to express our gratitude to all the lecture participants for their valuable contributions, comments, and discussions. Thank you! 
\vspace{0.2in}

Last but not least, we extend our heartfelt gratitude to \textbf{Bayram Tekin} for uniting us and fostering an exemplary research and learning environment. His passion, encouragement, and almost limitless energy have made all of this possible.

\tableofcontents
\newpage
\pagestyle{fancy}

\chapter{Lectures on Symplectic Geometry}\label{ch:symplectic_geometry}
The primary objective of this section is to review the fundamental concepts in symplectic geometry and to investigate their applications. 

We shall focus on two basic applications: \textit{(i)} the formulation of classical mechanics with the aid of symplectic structure on the cotangent bundle (phase space) over a certain base manifold (configuration space), and \textit{(ii)} some applications of the symplectic reduction theorem in gauge theory.

\paragraph{References and our road map.}
We essentially follow \cite{daSilva2008} Ana Cannas da Silva, Lectures on Symplectic Geometry. \cite{HondaTopologicalQFT} Ko HONDA, Lecture notes for MATH 635: Topological Quantum Field Theory.

Topics to be covered and references for further reading are as follows.
\begin{enumerate}
    \item   Symplectic forms, symplectic manifolds and symplectic structure on the cotangent bundle (\cite{daSilva2008} chapter 1 - 2).
    \item  Hamiltonian function, Hamiltonian vector fields and reformulation of Classical Mechanics (\cite{daSilva2008} chapter 18).
    \item   Lie Theory: Lie group, Lie algebras and the Poisson algebra. (\cite{HondaTopologicalQFT} Chapter 1,2, 9.2 and 10.1)
    \item  Moment maps. (\cite{daSilva2008} Chapter 21, 22)
    \item   Symplectic reduction theorem, Noether Principle (\cite{daSilva2008} Chapter 22.3,22.4, 23.1, 24.1)
    \item   Principal $G$-bundles, associated fiber bundle, connection, connection one-form and curvature on a principal $G$-bundle, and an application: The moduli space of flat connections. (\cite{daSilva2008} Chapter 25.1, 25.2, 25.3, 25.4 or F.Schuller online Lectures \cite{SchullerGeometricAnatomy, SchullerGATP_Lec19_21}: Part 19-21.)
\end{enumerate}

\section{Symplectic Vector Spaces}

Let $V$ be an $m$-dimensional vector space over $\mathbb{R}$,
and $\Omega : V \times V \rightarrow \mathbb{R}$ a bilinear map.

\begin{definition}
    $\Omega$ is called \textbf{skew-symmetric} if
    \begin{equation}
        \Omega(u,v) = -\Omega(v,u) \quad \forall u,v \in V.
    \end{equation}
\end{definition}

\begin{theorem}
    Given a skew-symmetric bilinear map $\Omega$ on $V$, with $\dim V < \infty$,  
    there exists a basis $\{u_1, \ldots, u_k, e_1, \dots, e_n, f_1, \dots, f_n\}$ of $V$ such that
    \begin{equation}
        \Omega(u_i, v) = 0 \quad \forall v \in V,
        \quad
        \Omega(e_i, e_j) = \Omega(f_i, f_j) = 0,
        \quad
        \Omega(e_i, f_j) = \delta_{ij} \quad \forall i,j.
    \end{equation}
\end{theorem}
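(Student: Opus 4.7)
The plan is to prove this by splitting off the radical and then inducting on dimension via a symplectic Gram-Schmidt procedure. First I would define the radical
\[
    U = \{u \in V \,:\, \Omega(u,v) = 0 \text{ for all } v \in V\},
\]
pick any basis $u_1,\dots,u_k$ of $U$, and choose a complementary subspace $W$ so that $V = U \oplus W$. By construction, the restriction $\Omega|_{W\times W}$ is non-degenerate: if $w \in W$ paired trivially with all of $W$, it would also pair trivially with $U$ and hence lie in $U \cap W = 0$. This reduces the problem to building the $(e_i, f_i)$ basis inside a finite-dimensional space on which $\Omega$ is non-degenerate.

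Next I would induct on $\dim W$. The base case $\dim W = 0$ is trivial. For the inductive step, pick any nonzero $e_1 \in W$; by non-degeneracy there exists $f_1' \in W$ with $\Omega(e_1, f_1') \neq 0$, and after rescaling I may set $f_1 = f_1'/\Omega(e_1, f_1')$ so that $\Omega(e_1, f_1) = 1$. These two vectors are automatically linearly independent because $\Omega(e_1,e_1) = 0 \neq 1$. Letting $W_1 = \mathrm{span}(e_1, f_1)$, the core structural claim is
\[
    W = W_1 \oplus W_1^{\Omega}, \qquad W_1^{\Omega} := \{w \in W \,:\, \Omega(w,e_1) = \Omega(w,f_1) = 0\}.
\]
For the direct sum, note that $a e_1 + b f_1 \in W_1^{\Omega}$ forces $a = b = 0$ upon pairing against $f_1$ and $e_1$; for the spanning, every $v \in W$ decomposes explicitly as $v = \Omega(v,f_1)\, e_1 - \Omega(v,e_1)\, f_1 + w'$ with $w' \in W_1^{\Omega}$, as one verifies by direct computation.

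Finally I would check that $\Omega$ remains non-degenerate on $W_1^{\Omega}$: if $w \in W_1^\Omega$ paired trivially with all of $W_1^{\Omega}$, then the decomposition $W = W_1 \oplus W_1^{\Omega}$ together with the defining orthogonality relations would make $w$ $\Omega$-orthogonal to all of $W$, forcing $w = 0$ by non-degeneracy of $\Omega|_W$. Since $\dim W_1^{\Omega} = \dim W - 2$, the inductive hypothesis produces the remaining pairs $(e_2,f_2),\dots,(e_n,f_n)$ inside $W_1^{\Omega}$, and concatenating them with $\{u_1,\dots,u_k,e_1,f_1\}$ yields the claimed basis of $V$.

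The main obstacle I expect is not any one calculation but rather keeping the three ingredients of the splitting $W = W_1 \oplus W_1^\Omega$ cleanly separated: independence of $e_1$ and $f_1$, transversality $W_1 \cap W_1^\Omega = 0$, and the explicit projection formula giving $W_1 + W_1^\Omega = W$. Once these are in place, preservation of non-degeneracy under passing to $W_1^{\Omega}$ is immediate, and the induction closes without further subtlety.
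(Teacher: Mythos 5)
Your proof is correct and takes essentially the same route as the paper: split off the radical $U$, choose a complement $W$, extract a normalized pair $(e_1,f_1)$, and iterate on $W_1^{\Omega}$ using exactly the same splitting $W = W_1 \oplus W_1^{\Omega}$ and the same explicit decomposition $v = \Omega(v,f_1)e_1 - \Omega(v,e_1)f_1 + w'$. If anything, you are more careful than the paper, which leaves the non-degeneracy of $\Omega$ on $W$ and its persistence on $W_1^{\Omega}$ implicit.
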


Observe that $V$ can be decomposed into 
\begin{equation}
    V = U \oplus W_1 \oplus W_2 \cdots \oplus W_n,
\end{equation}
where
\begin{align}
    U &= \left\{ u \in V \ | \ \Omega(u, v) = 0 \quad \forall v \in V \right\}, \\
    W &=  \langle e_i, f_i \rangle \quad \text{with} \quad \Omega(e_i, f_i) = 1. 
\end{align}

Also, we have: $\dim{U} = k$ and $\dim{V} = k + 2n$.

\begin{proof}
    Let $U := \left\{ u \in V \ | \ \Omega(u , v) = 0 \quad \forall v \in V \right\}$. Pick a basis $\{u_1, \ldots, u_k\}$ for $U$. Let $W := V - U$ which is the complementary subspace of $U$ in $V$. Then, we have
    \begin{equation}
        V = U \oplus W,
    \end{equation}
    from which we can say: for each $v \in V$, $\exists! \; u \in U$ and $w \in W$ such that $v = u + w$. Also, $ U \cap V = \{0\} $.
\end{proof}

Now, take any nonzero $e_1 \in W$. Then, $\exists f_1 \in W$ such that
$\Omega(e_1, f_1) \neq 0$. We may assume that $\Omega(e_1, f_1) = 1$.
How can we decompose $W$? \\

Let $W_1 := \langle e_1, f_1 \rangle$, a subspace of $W$ spanned by $e_1$ and $f_1$, and let
\begin{equation}
    W_1^\Omega = \{ w \in W \ | \ \Omega(w, v) = 0 \quad \forall v \in W_1 \}.
\end{equation}

\begin{claim}
It follows that
    \begin{equation}
         W_1^\Omega \cap W_1 = \{0\}.
    \end{equation}
\end{claim}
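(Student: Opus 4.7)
The plan is to take an arbitrary element of the intersection and show it must be zero by using the defining conditions of $W_1$ and $W_1^\Omega$ together with skew-symmetry.

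First I would let $w \in W_1^\Omega \cap W_1$. Because $w \in W_1 = \langle e_1, f_1 \rangle$, I can expand $w = a e_1 + b f_1$ for unique scalars $a, b \in \mathbb{R}$. The membership $w \in W_1^\Omega$ then means $\Omega(w, v) = 0$ for every $v \in W_1$; it suffices to test this against the two basis vectors $e_1$ and $f_1$.

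Next I would plug in. Evaluating $\Omega(w, f_1) = a\,\Omega(e_1, f_1) + b\,\Omega(f_1, f_1)$: skew-symmetry forces $\Omega(f_1, f_1) = 0$, and by construction $\Omega(e_1, f_1) = 1$, so this equation collapses to $a = 0$. Evaluating $\Omega(w, e_1) = a\,\Omega(e_1, e_1) + b\,\Omega(f_1, e_1)$: again skew-symmetry gives $\Omega(e_1, e_1) = 0$ and $\Omega(f_1, e_1) = -\Omega(e_1, f_1) = -1$, so this reduces to $-b = 0$, i.e.\ $b = 0$.

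Combining the two, $w = 0 \cdot e_1 + 0 \cdot f_1 = 0$, which is exactly the claim $W_1^\Omega \cap W_1 = \{0\}$. There is no real obstacle here — the argument is a direct computation — the only conceptual point is recognising that nondegeneracy on the two-dimensional block $W_1$ is secured simultaneously by the normalisation $\Omega(e_1, f_1) = 1$ and by skew-symmetry killing the diagonal entries $\Omega(e_i, e_i)$.
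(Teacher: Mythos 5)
Your proof is correct and follows essentially the same route as the paper: expand an element of the intersection in the basis $e_1, f_1$, pair against $f_1$ and $e_1$, and use skew-symmetry together with $\Omega(e_1,f_1)=1$ to force both coefficients to vanish. The only cosmetic difference is that the paper phrases the argument as a proof by contradiction, while you argue directly; the computation is identical.
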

\begin{proof}
    Assume the contrary. Say there exists a nonzero $ z \in W_1^\Omega \cap W_1$. Then, we have
\begin{align}
        (i) \quad & z \in W_1 \implies z = \alpha e_1 + \beta f_1 
        \quad \text{for some} \; \text{nonzero} \quad \alpha, 
        \beta \in \mathbb{R},\\
        (ii) \quad & z \in W_1^\Omega \implies \Omega(z, v_0) 
        = 0 \quad \forall v_0 \in W_1.
    \end{align}

    However, taking $v_0 = e_1$, we get

    \begin{equation}
        0 = \Omega(z, e_1) = \Omega(\alpha e_1 + \beta f_1, e_1)
        = \alpha \Omega(e_1, e_1) + \beta \Omega(f_1, e_1)
        = - \beta,
    \end{equation}
    and taking $v_0 = f_1$ yields
    \begin{equation}
        0 = \Omega(z, f_1) = \Omega(\alpha e_1 + \beta f_1, f_1)
        = \alpha \Omega(e_1, f_1) + \beta \Omega(f_1, f_1)
        = \alpha.
    \end{equation}
    Thus, $\alpha = \beta = 0$ which contradicts our assumption 
that $z$ is nonzero.
\end{proof}

\begin{claim}
    \begin{equation}
        W = W_1 \oplus W_1^\Omega.
    \end{equation}
\end{claim}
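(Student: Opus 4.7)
The plan is to show that $W = W_1 + W_1^\Omega$; combined with the previous claim that $W_1 \cap W_1^\Omega = \{0\}$, this will upgrade the sum to a direct sum decomposition. So the only real content is the existence of a decomposition $v = v_1 + v_2$ with $v_1 \in W_1$ and $v_2 \in W_1^\Omega$ for every $v \in W$.

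To construct $v_1$, I would exploit the fact that $\Omega$ pairs $e_1$ and $f_1$ nondegenerately on $W_1$. Explicitly, I would define
\begin{equation}
    v_1 := \Omega(v, f_1)\, e_1 \,-\, \Omega(v, e_1)\, f_1 \;\in\; W_1,
\end{equation}
which is the natural ansatz because writing $v_1 = \alpha e_1 + \beta f_1$ and using $\Omega(e_1, f_1) = 1$ forces $\alpha = \Omega(v, f_1)$ and $\beta = -\Omega(v, e_1)$ if we want $\Omega(v_1, \cdot)$ to match $\Omega(v, \cdot)$ on the basis $\{e_1, f_1\}$ of $W_1$. Then I would set $v_2 := v - v_1$.

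The key step is then to verify that $v_2 \in W_1^\Omega$, which reduces to checking that $\Omega(v_2, e_1) = 0$ and $\Omega(v_2, f_1) = 0$, since bilinearity extends this to all of $W_1$. A short direct computation using $\Omega(e_i, e_i) = \Omega(f_i, f_i) = 0$ and $\Omega(e_1, f_1) = 1$ gives $\Omega(v_1, e_1) = \Omega(v, e_1)$ and $\Omega(v_1, f_1) = \Omega(v, f_1)$, whence $\Omega(v_2, e_1) = \Omega(v_2, f_1) = 0$. One subtlety to keep in mind is that we also need $v_2 \in W$ (not merely in $V$); this is automatic because $v_1 \in W_1 \subseteq W$ and $v \in W$, so $v_2 = v - v_1 \in W$.

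I do not expect any major obstacle: the main ingredient is simply the nondegeneracy of $\Omega$ restricted to $W_1$, which is built into the choice of $e_1, f_1$. The only thing to be careful about is the direction of signs in the defining formula for $v_1$, since $\Omega$ is skew-symmetric and a wrong sign would produce $v_1$ with the negatives of the correct pairings.
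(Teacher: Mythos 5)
Your proposal is correct and follows essentially the same route as the paper: both define $v_1 = \Omega(v,f_1)\,e_1 - \Omega(v,e_1)\,f_1 \in W_1$ and set $v_2 = v - v_1$, with the direct-sum part coming from the previously established $W_1 \cap W_1^{\Omega} = \{0\}$. Your explicit verification that $\Omega(v_2,e_1)=\Omega(v_2,f_1)=0$ is a detail the paper merely asserts, and you avoid the paper's unnecessary assumption that $\Omega(v,e_1)$ and $\Omega(v,f_1)$ are nonzero.
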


\begin{proof}
    Let $v \in W$, then we have $\Omega(v, e_1) = \alpha \neq 0$, and $\Omega(v, f_1) = \beta \neq 0$. Then, we can write
    \begin{equation}
        v = \underbrace{(\beta e_1 - \alpha f_1)}_{\in W_1 } 
        + \underbrace{(v - \beta e_1 + \alpha f_1)}_{\in W_1^\Omega}.
    \end{equation}
    We can go on and let $e_2 \in W_1^\Omega$, $e_2 \neq 0$. There exists an $f_2 \in W_1^\Omega$ such that $\Omega(e_2, f_2) \neq 0$. Assume that $\Omega(e_2, f_2) = 1$. Then, let $W_2 = \text{span of} \; e_2, f_2$. etc... Since $V$ is finite-dimensional, this process must terminate. Hence,
    we obtain
    \begin{equation}
        V = U \oplus W_1 \oplus W_2 \oplus \ldots \oplus W_n.
    \end{equation}
    Thus, we conclude that a skew-symmetric bilinear map $\Omega$ provides a decomposition of $V$ into "orthonormal" summands with respect to $\Omega$.
\end{proof}

\begin{definition}
    Let $V$ be an $m$-dimensional vector space over $\mathbb{R}$, and
    $\Omega : V \times V \rightarrow \mathbb{R}$ a bilinear map.
    Define
    \begin{equation}
    \begin{aligned}
        \tilde{\Omega} : V &\longrightarrow V^*\\
        u &\longmapsto \tilde{\Omega}(\cdot) = \Omega(u, \cdot),
    \end{aligned}
    \end{equation}
    where $\tilde{\Omega} (v) := \Omega(u, v) $. Then, we define the \textbf{kernel} as
    \begin{equation}
        \ker{\tilde{\Omega}} = \{ u \in V \ | \ \Omega(u, v) = U 
        \quad \forall v \in V \}.
    \end{equation}
    
\end{definition}

\begin{definition}
    A skew-symmetric bilinear map $\Omega$ is called 
    \textbf{symplectic} or a \textbf{non-degenerate} if $\tilde{\Omega}$ is bijective. i.e $\ker{\tilde{\Omega}} = U  = \{0\}$.  
    
    Equivalently: $\Omega(u, v) = 0 \; \forall v \in V \implies u = 0$ (non-degeneracy). Then such an $\Omega$ is called a \textbf{symplectic structure} on $V$ and $(V, \Omega)$ is called a \textbf{symplectic vector space}.
\end{definition}

\begin{observation}
Say $\Omega$ is a symplectic structure on $V$. Then, we have $\ker{\tilde{\Omega}} = 0 $, hence $\dim(\ker{\tilde{\Omega}}) = \dim(U) = 0$. Thus, $\dim(V) = 2 n + \dim(U) = 2n$. Also,
\begin{align}
    V &\xlongrightarrow{\cong} V^* \\
    u &\longmapsto \Omega(u, \cdot),
\end{align}

is a bijection. Lastly, $(V, \Omega)$ has a basis
$\{e_1, e_n, \ldots, f_1, f_n\}$ such that 
$\Omega(e_i, f_j) = \delta_{ij}$, 
$\Omega(e_i, e_j) = \Omega(f_i, f_j) = 0$. This
is called a \textit{symplectic basis}.

\end{observation}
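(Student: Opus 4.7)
The plan is to read off all three conclusions directly from the normal form theorem proved earlier in the section, once we translate non-degeneracy into the statement $U = \{0\}$.

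First I would confirm the equivalence between the two formulations of non-degeneracy given in the definition. By unwinding the definition of $\tilde{\Omega}$, one has $u \in \ker \tilde{\Omega}$ iff $\Omega(u,v) = 0$ for all $v \in V$, which is exactly the condition defining the subspace $U$ from the normal form theorem. Hence symplecticness of $\Omega$ is equivalent to $U = \{0\}$.

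Next, I would apply the normal form theorem, which guarantees a decomposition $V = U \oplus W_1 \oplus \cdots \oplus W_n$ with each $W_i$ two-dimensional spanned by some $e_i, f_i$ satisfying $\Omega(e_i,f_j) = \delta_{ij}$ and $\Omega(e_i,e_j) = \Omega(f_i,f_j) = 0$. Using $\dim U = 0$, counting dimensions yields $\dim V = 2n$, and the list $\{e_1, \ldots, e_n, f_1, \ldots, f_n\}$ (with the $u_i$'s dropped, since $U$ is trivial) supplies the required symplectic basis.

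Finally, for the bijectivity of $\tilde{\Omega} : V \to V^*$, I would combine non-degeneracy with dimension counting: $\ker \tilde{\Omega} = \{0\}$ gives injectivity, while $\dim V = \dim V^* = 2n$ is finite, so by rank-nullity $\tilde{\Omega}$ is automatically surjective. The only mildly tricky step is the first observation — making explicit that the defining condition of $U$ in the normal form theorem is literally the kernel of $\tilde{\Omega}$ — but once this identification is made, everything else is an immediate corollary of the structure theorem, so I do not anticipate any real obstacle.
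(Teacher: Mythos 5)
Your proposal is correct and follows essentially the same route as the paper: the observation is read off as an immediate corollary of the earlier normal form theorem, identifying $\ker\tilde{\Omega}$ with the subspace $U$ so that non-degeneracy forces $U=\{0\}$, whence $\dim V = 2n$, the $e_i,f_i$ give the symplectic basis, and injectivity plus $\dim V = \dim V^*$ yields bijectivity of $\tilde{\Omega}$. No gaps.
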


\begin{example}
    Write $ (\mathbb{R}^{2n}, \Omega) $ with sympectic basis.

    \textbf{Solution:}
\begin{align}
        e_1 = (1, 0, \cdots, 0) &\quad e_n = (0, 0, \cdots 1, \cdots, 0) ,\\
        f_1 = (0, 1, \cdots, 0) &\quad f_n = (0, 0, \cdots, 1).
    \end{align}
\end{example}

\begin{definition}
    Let $ (V, \Omega) $, $(V', \Omega')$ be two symplectic vector spaces. A linear map $\varphi : V \rightarrow V'$ is called a \textbf{symplectomorphism} if $\varphi^* \Omega' = \Omega$ (pullback).

    \begin{equation}
        \begin{tikzcd}[sep=1.5cm]
            V\times V \arrow[thick,r,"{(\varphi,\varphi)}"]\arrow[thick, rd, "\varphi^*\Omega'"] & V'\times V' \arrow[thick, d,"\Omega'"] \\ 
            {} & \mathbb{R}
        \end{tikzcd}
    \end{equation}
\end{definition}
So, a symplectomorphism is a map relating two symplectic manifolds with symplectic structures $\Omega$ and $\Omega'$:
\begin{equation}
    \phi^* \Omega' := \Omega' \circ (\varphi,\varphi) \implies \phi^*\Omega' (a,b) = \Omega'(\varphi(a),\varphi(b)).
\end{equation}
This then introduces an equivalence relation $\sim$:

\begin{equation}
    (V, \Omega) \sim (V', \Omega') \quad \text{iff} \quad
    \exists \; \text{symplectomorphism} \; \varphi : V \rightarrow V'.
\end{equation}

\begin{corollary}
    Every $2n$ dimensional symplectic vector space $(V, \Omega)$ is
    symplectomorphic to $(\mathbb{R}^{2n}, \Omega_0)$.
\end{corollary}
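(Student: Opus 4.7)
The plan is to invoke the symplectic basis theorem proved above (the case $k=0$, i.e.\ $U=\{0\}$, of the earlier structure theorem) and build the symplectomorphism explicitly on basis vectors, then extend linearly. Since non-degeneracy forces $\dim U = 0$, the basis produced by the theorem has the form $\{e_1,\dots,e_n,f_1,\dots,f_n\}$ with $\Omega(e_i,f_j)=\delta_{ij}$ and $\Omega(e_i,e_j)=\Omega(f_i,f_j)=0$. On the $\mathbb{R}^{2n}$ side, the example above provides a symplectic basis $\{e_1',\dots,e_n',f_1',\dots,f_n'\}$ satisfying exactly the same relations with respect to $\Omega_0$.

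First, I would define the candidate map $\varphi: V \to \mathbb{R}^{2n}$ to be the unique linear isomorphism determined by $\varphi(e_i) = e_i'$ and $\varphi(f_i) = f_i'$ for $i=1,\dots,n$. Because both bases have $2n$ elements, $\varphi$ is automatically a linear isomorphism of vector spaces.

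Next, I would verify the symplectic condition $\varphi^*\Omega_0 = \Omega$. By bilinearity, it suffices to check equality on basis pairs. For any two basis elements $x,y \in \{e_i,f_i\}$, we compute $(\varphi^*\Omega_0)(x,y) = \Omega_0(\varphi(x),\varphi(y))$ and compare with $\Omega(x,y)$; both sides agree on all three types of pairs $(e_i,e_j)$, $(f_i,f_j)$, $(e_i,f_j)$ since by construction the structure constants match. Extending by bilinearity gives $\varphi^*\Omega_0 = \Omega$ on all of $V \times V$, so $\varphi$ is a symplectomorphism.

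There is essentially no obstacle once the symplectic basis theorem is in hand; the only subtle point worth flagging is that non-degeneracy is what guarantees $k=0$ in the decomposition, so the basis has exactly $2n$ elements and the target $\mathbb{R}^{2n}$ has matching dimension. Everything else is bookkeeping on basis vectors.
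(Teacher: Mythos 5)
Your proof is correct and follows exactly the route the paper intends: the corollary is stated as an immediate consequence of the symplectic basis observation, and sending one symplectic basis to the other and checking $\varphi^*\Omega_0=\Omega$ on basis pairs by bilinearity (with skew-symmetry handling the remaining pairs) is precisely that argument. No gaps.
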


Observe that the "modular space" of $(V, \Omega)$ up to a
symplectomorphism is just $n \in \mathbb{Z}_{\ge 0}$.

\newpage

\section{Symplectic Manifolds}

Let $M$ be an $n$-manifold, and $\Omega$ a closed 2-form on $M$. Say:

\begin{equation}
    \omega \in \Gamma (\Lambda^2 T^*M) =: \Omega^2(M),
\end{equation}

i.e., for each $p \in M$, $\omega_p : T_p M \times T_p M \rightarrow \mathbb{R}$ is a skew-symmetric bilinear map on $T_p M$. Note that $\omega_p$ smoothly depends on $p$.

\begin{definition}
    $\omega \in \Omega^k (M)$ is a \textbf{closed form} if $dw = 0$, where $d$ is the
    exterior derivative. 
\end{definition}

\begin{example}
    Let $M = \mathbb{R}^2 - \{0, 0\}$ with usual coordinates $(x,y,z)$. Let

    \begin{equation}
        \omega = \frac{x}{x^2 + y^2} dy - \frac{y}{x^2 + y^2} dx .
    \end{equation}

    Show that $\omega$ is a closed form.

    \textbf{Solution:} Let $\dfrac{x}{x^2 + y^2} = f$, $\dfrac{y}{x^2 + y^2}=g$. Then,
    \begin{align*}
        d\omega &= df \wedge dy - dg \wedge dx\\
        &= \left( \frac{x^2 + y^2 - x \cdot 2x}{(x^2 + y^2)^2} dx 
        (\dots) dy\right) \wedge dy
        - \left( \frac{x^2 + y^2 - y\cdot2y}{(x^2 + y^2)^2} dy
        + (\dots) dx \right) \wedge dx\\
        &= \frac{y^2 - x^2}{(x^2 + y^2)^2} dx \wedge dy
        - \frac{x^2 - y^2}{(x^2 + y^2)^2} dy \wedge dx\\
        &= 0.
    \end{align*}

    Thus, $\omega$ is closed.
    
\end{example}
\begin{definition}
    The 2-form $\omega \in \Omega^2 (M)$ is \textbf{symplectic} if $\omega$ is closed and $\omega_p : T_p M \times T_p M \rightarrow \mathbb{R}$ is symplectic (i.e it is non-degenerate: $\omega_p (X, Y) = 0 \; \forall y_p \in T_p M \implies x_p = 0 $ ) for all $p \in M$. 
\end{definition}
\begin{minipage}{0.50\textwidth}
    \begin{remark}
       $\dim T_p M = \dim M$. Hence, if $\omega$ is a symplectic 2-form on $M$, then $\dim T_p M$ must be even, so is $\dim M$.
    \end{remark}
\end{minipage}
\begin{minipage}{0.45\textwidth}
    \centering
    \tikzset{every picture/.style={line width=0.75pt}} 
    \begin{tikzpicture}[x=0.75pt,y=0.75pt,yscale=-1,xscale=1]
    
    \draw   (251.67,123.67) .. controls (293.67,86.33) and (332.67,161.83) .. (303.67,193.67) .. controls (274.67,225.5) and (241.67,182.33) .. (201.67,222.33) .. controls (161.67,262.33) and (111.67,223) .. (130.33,191.67) .. controls (149,160.33) and (209.67,161) .. (251.67,123.67) -- cycle ;
    \draw    (205.4,151.8) .. controls (195,166.2) and (211,201.8) .. (227.8,206.2) ;
    \draw  [dash pattern={on 3.75pt off 3pt on 7.5pt off 1.5pt}]  (205.4,151.8) .. controls (224.6,161) and (232.2,197.8) .. (227.8,206.2) ;
    \draw    (151.4,202.6) .. controls (167,206.6) and (172.6,200.2) .. (177,189.4) ;
    \draw    (157,203) .. controls (155.4,206.2) and (160.95,187.68) .. (173.75,196.88) ;
    \draw   (217.09,75.49) -- (302.27,44.48) -- (281.41,100.76) -- (196.23,131.77) -- cycle ;
    \draw  [dash pattern={on 3.75pt off 3pt on 7.5pt off 1.5pt}]  (249.25,88.13) -- (268,149.25) ;
    \draw  [fill={rgb, 255:red, 0; green, 0; blue, 0 }  ,fill opacity=1 ] (266.25,151) .. controls (266.25,150.03) and (267.03,149.25) .. (268,149.25) .. controls (268.97,149.25) and (269.75,150.03) .. (269.75,151) .. controls (269.75,151.97) and (268.97,152.75) .. (268,152.75) .. controls (267.03,152.75) and (266.25,151.97) .. (266.25,151) -- cycle ;
    \draw  [fill={rgb, 255:red, 0; green, 0; blue, 0 }  ,fill opacity=1 ] (247.5,88.13) .. controls (247.5,87.16) and (248.28,86.38) .. (249.25,86.38) .. controls (250.22,86.38) and (251,87.16) .. (251,88.13) .. controls (251,89.09) and (250.22,89.88) .. (249.25,89.88) .. controls (248.28,89.88) and (247.5,89.09) .. (247.5,88.13) -- cycle ;
    
    \draw (272.5,143.9) node [anchor=north west][inner sep=0.75pt]  [font=\footnotesize]  {$p$};
    \draw (229.5,83.9) node [anchor=north west][inner sep=0.75pt]  [font=\footnotesize]  {$X_{p}$};
    \draw (318.5,137.9) node [anchor=north west][inner sep=0.75pt]    {$M$};
    \draw (307,39.4) node [anchor=north west][inner sep=0.75pt]    {$T_{p} M$};

    \end{tikzpicture}
    \hypertarget{fig1}{Figure 1.} A tangent space $T_pM$ at a point $p\in M$.
\end{minipage}

\begin{definition}
    A \textbf{symplectic manifold} is a pair $(M, w)$ where
    \begin{itemize}
        \item $M$ is an even-dimensional manifold.
        \item $w \in \Omega^2 (M)$ is a symplectic  form.
    \end{itemize}
\end{definition}

As an example, consider the following: $(M , w) = (\mathbb{R}^{2n}, w_0)$ with usual coordinate charts $(x_1, \dots, x_n, y_1, \dots, y_n)$, and $w_0 = \sum_{i = 1}^{n} dx_i \wedge dy_i$ is symplectic. 
The basis vectors for $T_p M$ are $ \left\{ \frac{\partial}{\partial x_1}|_p, \dots \frac{\partial}{\partial x_n}|_p, \frac{\partial}{\partial y_1}|_p, \dots \frac{\partial}{\partial y_n}|_p \right\} $. Later, we will see that the identification of the cotangent bundle $T^* M = \bigcup_p T_p^* M$, which is $\mathbb{R}^{2n}$, and the relation between $T^*M$ and the formulation of classical mechanics.

    \begin{remark}
        The ``non-degeneracy" condition is purely algebraic. That is, it corresponds to the existence of an isomorphism
        \begin{align}
            T_pM &\xrightarrow{\cong} T_p^* M \quad \quad \quad \forall p\\
            X_p &\mapsto \omega_p(X_p , \;.).
        \end{align}
    \end{remark}

    \begin{remark}
        ``Closedness" is geometric.
        \begin{itemize}
            \item $\omega$ represents a cohomology class:
            \begin{equation}
                a := [\omega] \in H^2(M; \mathbb{R}).
            \end{equation}
            \item If $M$ is closed (i.e. compact with/without boundary), then the cohomology class $\alpha^n \in H^{2n} (M; \mathbb{R})$ is replaced by the volume form $\omega^n \in \Omega^{2n} (M)$, with $\int_M \omega^n \neq 0$. So $\omega$ cannot be exact. This can easily be seen by letting $\omega = d \eta$. Then, $\int_M \omega = \int_M d \eta = \int_{\partial M = \phi} \eta = 0$ by using Stokes' theorem.
            \item There exist orientable even-dimensional manifolds which do not admit a symplectic structure. A simple example is $S^{2n}$ with $n\geq 2$.
        \end{itemize}
    \end{remark}

Is it true that for any compact even-dimensional manifold, with a non-degenerate $2$-form $\tau \in \Omega^2 (M)$ and a suitable cohomology class $a \in H^2 (M; \mathbb{R})$, there exists a symplectic structure that represents the class $a$ and is isotropic to the given form $\tau$?\\
A question to consider is whether, given the data $(M, \tau, a)$, where $\tau \in \Omega^2(M)$ and $a \in H^2(M; \mathbb{R})$, there exists a symplectic form $\omega$ that satisfies the following conditions:

\begin{itemize}
    \item $(M, \omega)$ is a symplectic manifold,
    \item There is an isotopy $\omega \sim_{\omega_\tau } \tau$,
    \item $[\omega] = a$.
\end{itemize}

The answer to this question is provided by the works of Donaldson (1996) and Jacobs (1994), which present the method for constructing a manifold $M$ with suitable $(\tau, a)$ so that $M$ does not have any symplectic structure.
\begin{example}
    Let $M = \mathbb{C}^n$ with complex coordinates
    $z_1, \dots, z_n$. Define
    \begin{equation}
        w := \frac{i}{2} \sum_{k = 1}^{n} dz_k \wedge d\bar{z}_k.
    \end{equation}
    Observe that $\mathbb{C}^n \simeq \mathbb{R}^{2n}$ together with $z_k = x_k + i y_k$, and the following holds:
    \begin{equation}
        \frac{\partial}{\partial z_k} := \frac{1}{2} \left(
        \frac{\partial}{\partial x_k} - i \frac{\partial}{\partial y_k}
        \right) \quad \text{and} \quad
        \frac{\partial}{\partial \bar{z}_k} := \frac{1}{2} \left(
        \frac{\partial}{\partial x_k} + i \frac{\partial}{\partial y_k}
        \right),
    \end{equation}
    along with
    \begin{equation}
        dz_k = dx_k + i dy_k \quad \text{and} \quad
        d\bar{z}_k = dx_k - i dy_k.
    \end{equation}
    So, we have
    \begin{align*}
        dz_k \wedge d\bar{z}_k &= (dx_k + i dy_k) \wedge (dx_k - i dy_k)\\
        &= - i dx^k \wedge dy^k + i dy^k \wedge dx^k\\
        &= -2 i dx^k \wedge dy^k .
    \end{align*}
    Hence, in the real coordinates, we get
    \begin{equation}
        w := \frac{i}{2} \sum_{k = 1}^{n} dz_k \wedge d\bar{z}_k
        = \sum_{k = 1}^{n} dx_k \wedge dy_k
        = w_0,
    \end{equation}
    which we already know is symplectic.
\end{example}

\begin{example}
    Let $M = S^2 = \left\{ 
    (x, y, z) \in \mathbb{R}^3 \; | \; x^2 + y^2 + z^2 = 1
    \right\}$. Consider the usual inner product $\expval{\cdot , \cdot}$ on $\mathbb{R}^3$
    and restrict to an inner product on $S^2$

    \begin{equation}
        v_p \in T_p S^2 \Longrightarrow p \perp v_p
        \quad \text{i.e.}
        \expval{p, v_p} = 0,
    \end{equation}

    \begin{equation}
        T_p S^2 \simeq \{p\}^\perp 
        :=
        \{ v \in \mathbb{R}^3 | \expval{p, v} = 0 \}.
    \end{equation}

    Define $\omega_p : T_p S^2 \times T_p S^2 \to \mathbb{R}$
    by 

    \begin{equation}
        \omega_p (u, v) := \expval{p, u \times v}.
    \end{equation}

    Is $\omega_p$ a symplectomorphism?\\
    \textbf{Solution:}
    \begin{itemize}
        \item Since $u \times v = - v \times u$, we have, $\omega_p (u,v) = - \omega_p (v,u)$. So skew-symmetricity holds.
        \item Bilinear
        \item $d\omega=0$, since $d\omega$ is a 3-form but
        dim$S^2 = 2$.
        \item If $\expval{p, u \times v} = 0 \quad \forall v \in T_p S^2$
        then by definition of $\times$, $u=0$
    \end{itemize}
    \begin{center}
        \includestandalone[width=0.3\textwidth]{Figures/Part1/p8}
    \end{center}
\end{example}

\begin{definition}
     Let $(M_1 , \omega_1)$ and $(M_2 , \omega_2)$ be 2n-dimensional symplectic manifolds. Let $\varphi: M_1 \rightarrow M_2$ be a diffeomorphism. Then, $\varphi$ is called a \textbf{symplectomorphism} if $\varphi^* \omega_2 = \omega_1$. Here $\varphi^*$ is the pull-back map
    \begin{equation}
        \begin{tikzcd}[sep=1.5cm]
            T_pM_1 \times T_pM_1  \arrow[thick, rd, "\varphi^*\omega_{2q}"]\arrow[thick,r, "{(\varphi_*,\varphi_*)}"] & T_qM_2\times T_qM_2 \arrow[thick, d, "\omega_{2q}"] \\ 
             & \mathbb{R},
        \end{tikzcd}
    \end{equation}
    which sends
    \begin{equation}
        \begin{tikzcd}[sep=1.5cm]
            (x_p,y_p)  \arrow[thick, swap, rd,"{\omega_{2q}\circ(\varphi_*,\varphi_*)}"]\arrow[thick,r, "{(\varphi_*,\varphi_*)}"] &(\varphi_*(x_p),\varphi_*(y_p)) \arrow[thick, d, "\omega_{2q}"] \\ 
             & \omega_{2q}(\varphi_*(x_p),\varphi_*(y_p))
        \end{tikzcd}
    \end{equation}
\end{definition}

Schematically, one can show a symplectomorphism via
\begin{figure}[h!]
    \centering
    \includestandalone[width=0.8\textwidth]{Figures/Part1/p11}\\
    \hypertarget{fig2}{Figure 2.} Diagram of symplectomorphism between two symplectic manifolds, $(M_1,\omega_1)$ and $(M_2,\omega_2)$.
\end{figure}

\begin{remark}
    \begin{enumerate}
        \item As in the case of "the moduli" space of symplectic vector spaces, we would like to classify symplectic manifolds up to symplectomorphism. This can be done locally as \textit{Darboux} theorem suggets. I.e., the dimension (as before) is the only local invariant of symplectic manifolds up to symplectomorphism.
        \item Any symplectic manifold $(M,\omega)$ locally looks like $(\mathbb{R}^{2n},\omega_0)$. I.e., Any symplectic manifold $(M,\omega)$ is locally symplectomorphic to $(\mathbb{R}^{2n},\omega_0)$. There exists a symplectomorphism $\varphi: U \xrightarrow{\sim} V$ where $U \subseteq_{open} M$, $V \subseteq_{open} \mathbb{R}^{2n}$ such that 
        \newline
        $$\varphi^{*} \omega_2 |_{V} = \omega_1 |_{U}, $$
        where
        $\omega_{1} \in \Omega^2 (U), \omega_1 : T_{p}U \times T_p U \rightarrow \mathbb{R} $ and
        $\omega_{2} \in \Omega^2 (V), \omega_1 : T_{q=\varphi(p)}V \times T_q V \rightarrow \mathbb{R}. $
    \end{enumerate}
\end{remark}

\begin{theorem}
    \textbf{Darboux:} Let $(M,\omega)$, $\dim M = 2n$ be a symplectic manifold, $p \in M$. Then there exists a local coordinate chart $(U,x_1,...,x_n,y_1,...,y_n)$ at $p$ s.t.
    \begin{equation}
        \omega = \sum_{i = 1}^{n} dx_i \wedge dy_i \:\: \text{on} \:\: U.
    \end{equation}
\end{theorem}

The local coordinate chart is also called \textit{Darboux's Chart}. Locally $\omega$ looks like a symplectic form $\omega_0$ on $\mathbb{R}^{2n}$. A fact is that there exists no local invariant in symplectic geometry, in contrast to Riemannian geometry, where the curvature provides such invariants.

\section{Symplectic Form on the Cotangent Bundle}
Let us start with some preliminary definitions and properties regarding general vector bundles. 


\begin{definition}
    An \textbf{$n$-dimensional vector bundle} over a base space $X$ is a tuple $(E, \pi, X)$ such that:
    \begin{enumerate}
        \item $E$ and $X$ are both topological manifolds.
        \item $\pi: E\to X$ is a continuous map, called the \textit{projection map}, such that for each $p \in X$, $\pi^{-1}(p)$
        is an n-dimensional vector space over $\mathbb R$ (or $\mathbb C$). One can also denote $\pi^{-1}(p)$ by $E_p$; formally called a \textit{fiber }over p.
        \item Local Trivialization: $E$ locally looks like $U \cross \mathbb{R}^{n}$.
    \end{enumerate}
    \begin{center}
        \includestandalone[width=0.5\textwidth]{Figures/Part1/p13.1} \\ 
        \hypertarget{fig3}{Figure 3.} A vector bundle over $X$. 
    \end{center}
\end{definition}

Locally, this schematic looks like:
\begin{figure}[h!]
    \centering
    \includestandalone[width=0.55\textwidth]{Figures/Part1/p13.2}\\
    \hypertarget{fig4}{Figure 4.} Local picture of the fibers and the base space.
\end{figure}

    One can also see that for each $p\in X$, there exists a neighborhood $U_p$ of $p$ such that
    \begin{enumerate}
        \item There exists a homeomorphism $h_p : \pi^{-1}(U_p)\xrightarrow{\simeq} U \times \mathbb{R}^{n}$,
        \item For each point $q \in U_p$; $h_p$ induces an isomorphism of vector spaces:
        \begin{equation}
            \pi^{-1}(q) \xrightarrow{\simeq} \{ q\} \times \mathbb{R}^{n} \cong \mathbb{R}^n,
        \end{equation}
        i.e. each fiber is isomorphic to $\mathbb{R}^n$.
    \end{enumerate}

\begin{observation}
    \begin{enumerate}
        \item $E = \coprod_{p \in X} E_p $,  and for each $p \in X$, $v_p \in E_p$. That is, each vector $v_p \in \pi^{-1}(p)$ projects  onto a unique point $p$ at which the corresponding fiber $\pi^{-1}(p)$ lives.
        \begin{equation}
            \pi(v_p) = p.
        \end{equation}
        \begin{center}
            \includestandalone[width=0.44\textwidth]{Figures/Part1/p13.3}\\
            \hypertarget{fig5}{Figure 5.} Projections of vectors onto points in the base space.
        \end{center}
        \item Same definition will work if we consider ``smooth" structure on given manifolds:
        \begin{enumerate}
            \item Replace ``continuous map" by ``smooth".
            \item Replace ``homeomorphisms" by ``diffeomorphisms", and so on...
        \end{enumerate}
        \item Local trivialization implies that $X$ admits a cover $\{U_X\}$ for which local trivialization conditions hold and $(U_{\alpha},h)$ is called ``local coordinate chart" for each $\alpha$.
    \end{enumerate}
\end{observation}

\newpage

\begin{definition}
    A vector bundle $(E,\pi,X)$ is called \textbf{trivial} if $E \cong X \cross \mathbb{R}^{n}$. In fact, the following diagram commutes:
    \begin{equation}
        \begin{tikzcd}[sep=1.5cm, remember picture]
            |[alias=E]| E \arrow[r,thick,"\varphi"] \arrow[d,thick,"\pi"] & |[alias = Y]| X \times \mathbb{R}^n \arrow[ld,thick,"\pi'"] \\ 
           |[alias = X]| X & 
        \end{tikzcd}
        \tikz[overlay,remember picture]{%
        \node (Y1) [scale=1.5] at (barycentric cs:X=1,E=1,Y=1) {$\circlearrowright$}}
    \end{equation}
\end{definition}

\begin{remark}
    Local trivialization implies that there exists a cover $\{U_\alpha\}_{\alpha \in \Lambda}$ of X such that $E|_{U_{\alpha}}$ is trivial $(E|_{U_\alpha} \cong U_{\alpha} \cross \mathbb{R}^{n})$.
\end{remark}

\begin{definition}
    Let $(E,\pi,X)$ be a vector bundle. A continuous function $s: X \rightarrow \mathbb{R}$ with $\pi \cdot s = \text{id}_X$ is called a \textbf{section}.
\end{definition}

A section $s$ may not be defined globally in general. Hence, we always consider $s$ defined on an open subset $U \subseteq X$:
\begin{equation}
    \begin{aligned}
         s: & U \subseteq_{open} X \rightarrow E,  \\
            & \pi \cdot s = \text{id}_U.
    \end{aligned}
\end{equation}

\begin{figure}[h!]
    \centering
    \includestandalone[width=0.8\textwidth]{Figures/Part1/p14}\\
    \hypertarget{fig6}{Figure 6.} Sections on the bundle over a two-dimensional base space.
\end{figure}

One can use $\Gamma$ to denote the smooth sections as in $\Gamma(TM)$ of the tangent bundle as well as $\Gamma(T^*M)$ of the cotangent bundle. 

\begin{definition}
    Smooth sections $\Gamma(TM)$ of the tangent bundle are called \textbf{smooth vector fields}:
    \begin{equation}
        \begin{aligned}
            V: X &\longrightarrow TM \\
            p &\longmapsto V_p,
        \end{aligned}
    \end{equation}
    where $V_p \in T_p M$.
\end{definition}

\begin{theorem}
    An $n$-dimensional vector bundle $(E,\pi,X)$ is trivial if and only if there exist $n$ global sections $s_1 , ..., s_n$ such that for each $p \in X$, the vectors $s_1 (p), ..., s_n (p)$ are linearly independent, and hence form a basis for $\pi^{-1}(p)$.
    \begin{center}
        \includestandalone[width=0.7\textwidth]{Figures/Part1/p15}\\
        \hypertarget{fig7}{Figure 7.} Sections on a trivial vector bundle
    \end{center}
\end{theorem}
\begin{proof}
    Suppose that $(E,\pi,X)$ is trivial. Then we have a homeomorphism $\varphi: E \rightarrow X \times \mathbb R^n$ commuting the diagram \begin{equation*}
        \begin{tikzcd}[sep=1.5cm, remember picture]
            |[alias=E]| E \arrow[r,thick,"\varphi"] \arrow[d,thick,"\pi"] & |[alias = Y]| X \times \mathbb{R}^n \arrow[ld,thick,"\pi'"] \\ 
           |[alias = X]| X & 
        \end{tikzcd}
        \tikz[overlay,remember picture]{%
        \node (Y1) [scale=1.5] at (barycentric cs:X=1,E=1,Y=1) {$\circlearrowright$}}
    \end{equation*}Then, for each $i=1,\dots,n,$ we define $s_i:X\rightarrow E$ to be the map sending $p\mapsto \varphi^{-1}(p, e_i),$ where $e_i$ denotes the standard basis element for $\mathbb R^n.$ It is then straightforward to compute
    \begin{align*}
     \pi \circ s_i(p)&=\pi (\varphi^{-1}(p, e_i) ) = (\pi' \circ \varphi) (\varphi^{-1}(p, e_i)=\pi' (p,e_i) = p.
    \end{align*}

    Conversely, given  such  sections $s_1 , ..., s_n$, it suffices to define a map $\varphi: E \rightarrow X \times \mathbb R^n$ by $s_i(p) \mapsto (p,e_i)$. Also, letting $\psi: X \times \mathbb R^n \rightarrow E, \ (p,v=\sum v_ie_i) \mapsto \sum v_i s_i(p)$, it is then straightforward to verify that both $\varphi$ and $\psi$ satisfy the desired properties such that $\psi=\varphi^{-1}$.
\end{proof}
Note that the local triviality condition implies that there exists a cover $ \{U_\alpha\}_{\alpha \in \Lambda} $ of $X$ such that $E|_{U_{\alpha}}$ is trivial.    it follows from the theorem above that there exist  $n$ sections $s_1, ... , s_n$ defined on $U_{\alpha}$ locally for all $\alpha$, with
\begin{equation}
    s_i : U_\alpha \subseteq M \rightarrow E , \ \pi \cdot s_i = \text{id}_{U},
\end{equation}
such that for each $q \in U_\alpha$, \ $s_1 , ..., s_n$ forms a basis for $\pi^{-1}(q)$ (\hyperlink{fig8}{Figure 8.}). That is, each vector bundle admits nowhere dependent $n$ ``local" sections due to \textit{local triviality} condition on $(E,\pi,X)$.\\
\begin{center}
    \includestandalone[width=0.6\textwidth]{Figures/Part1/p16}\\
    \hypertarget{fig8}{Figure 8.} Local sections over different open sets $U_\alpha$.
\end{center}

\begin{definition}
    Let $X$ be an $n$-manifold with a coordinate chart $(U,x_1,...,x_n)$. Then the space
    \begin{equation}
        TX = \coprod_{p\in X}T_pX
    \end{equation}
    is the \textbf{tangent bundle} over $X$, with the map
    \begin{equation}
        \pi : TX \to X.
    \end{equation}
    Then the preimage
    \begin{equation}
        \pi^{-1}(p):=T_pX
    \end{equation}
    is said to be a \textbf{fiber} on the bundle. Similarly, one can define the \textbf{cotangent bundle}
    \begin{equation}
        T^*X = \coprod_pT_p^*X,
    \end{equation}
    and another appropriate map 
    \begin{equation}
        \pi: T^*X \to X.
    \end{equation}
    \begin{center}
        \includestandalone[width=0.45\textwidth]{Figures/Part1/p12} \\
        \hypertarget{fig9}{Figure 9.} A tangent bundle $TX$ over the base manifold, $X$.
    \end{center}
\end{definition}

Now consider the cotangent bundle $\pi : T^{*}X \rightarrow X$. Any covector over $p$ is locally represented by:
\begin{equation}
    \xi = \xi_{i}(p) \: dx^{i} \;\;,\;\; \xi_{i} \; \in C^{\infty}(U;\mathbb R).
\end{equation}

It induces a map:
\begin{equation*}
    \begin{aligned}
        T^{*}U & \longrightarrow \mathbb{R}^{2n},
        \\
        (p,\xi) &\longmapsto (x_1 (p), ..., x_{n}(p), \xi_1(p),...,\xi_{n}(p)).
    \end{aligned}
\end{equation*}
Here, the map is a homeomorphism onto its image, and the last line is a coordinate chart for $T^{*}X$.
\begin{figure}[h!]
    \centering
    \includestandalone[width=0.7\textwidth]{Figures/Part1/p17}\\
    \hypertarget{fig10}{Figure 10.} Mapping from a cotangent bundle to $\mathbb{R}^n$.
\end{figure}

\begin{definition}
    $(x_1 (p), ..., x_{n}(p), \xi_1(p),...,\xi_{n}(p))$ are the \textbf{cotangent coordinates} associated to the coordinates $x_1 , ...,x_n$ on $U$.
\end{definition}
Given another chart $(V,y_1,...,y_n)$ about $p$, on the overlap $p \in U \cap V$, if $\xi \in T_{p}^{*}X$ we have:
\begin{equation*}
    ^{(x)} \xi_i = \frac{\partial y_\tau}{\partial x_i} \xi_\tau \;\;\; , \;\;\; ^{(y)} \xi_j = \frac{\partial x_i}{\partial y_j} \xi_i,
\end{equation*}
which are smooth. This shows that:
\begin{center}
    $\boxed{M = T^{\ast }X  \text{ is a $2n$-dimensional manifold.} }$
\end{center}
\newpage 

Now, let us construct symplectic structure on $M$:\\
1) Given $T^*X =: M \xlongrightarrow{\pi} X$, with $m = (p,\xi)$, where $\xi \in T^{\ast}_p X$,  $p \in X$ , $\pi(m) = p$ and $\xi : T_p X \rightarrow \mathbb{R}$, define a canonical (tautological) 1-form $\alpha \in \Omega^1(M)$ as follows:
For each $m \in M$, let
\begin{equation*}
    \begin{aligned}
        \alpha_M : T_m M &\longrightarrow \mathbb{R}
        \\
        v & \longmapsto \alpha_{m}(v),
    \end{aligned}
\end{equation*}
where $\alpha_m(v) = \xi(\pi_{\ast,m}(v)) $. Note that one can naturally define such an object once the coordinate chart is fixed.
Locally on $T^{*}U$, observe $\alpha=\xi_{i}d x_{i}$ at $m\in M; \text{ hence } \alpha_{m}=\xi_{i}(\pi(m))d x^{i}$, where $\xi=\xi_{i}d x^{i}$ and $\xi_{i} \in C^{\infty}(U,\mathbb{R})$.\\
2) Define a 2-form $\omega$ on M by $\omega \coloneqq -d\alpha$, and locally (on $T^{*}U$), one has
\begin{equation*}
    \omega = -d (\xi_{i} d x^{i}) = d x^{i} \wedge d \xi_{i}, \quad i = 1,2,...,n , 
\end{equation*}
where $\omega = \sum_{i=1}^{n}{dx^{i}\wedge d\xi_{i}}$.
\begin{example}
    $M = T^{*}\mathbb{R}^{n} \cong \mathbb{R}^{2n}$ with the coordinate chart $(U,q_{1},...,q_{n},p_{1},...,p_{n})$, where $q$ is position and $p$ is momentum. A canonical 1-form is $\alpha \coloneqq \sum_{i=1}^{n}p_{i}d q_{i}$, and hence $\omega = \sum_{i}{d q_{i} \wedge dp_{i}}$ $(\omega = -d \alpha)$.
\end{example}
\begin{observation}
    In that case $M=T^{*}\mathbb{R}^{n}$, $\omega$ is an exact form (globally). 
\end{observation}

\section{Hamiltonian Functions and  Vector Fields}

Let $(M,\omega)$ be a symplectic manifold. Then we can define the \textbf{space of smooth vector fields} on $M$ as $X \in \Gamma(TM) =: \mathfrak{X}(M),$ where $X$ is any smooth vector on the thangent bundle.
\begin{definition} 
    We define the \textbf{interior derivative} or the \textbf{contraction} as $\iota_{x}\omega \ (Y) \coloneqq \omega(X,Y)$, $\forall Y \in \mathfrak{X}(M)$. In general, $\iota_{x} : \Omega^{k}(M) \rightarrow \Omega^{k-1}(M)$, and $\alpha \rightarrow \iota_{x}\alpha =\alpha (X,\cdot,\cdot,...)$. 
\end{definition}
\begin{definition}
    Given $f \in C^{\infty}(M)$ (called ``observables"), a vector field $X_{f}$ is called \textbf{Hamiltonian vector field} if $\iota_{X_{f}} = d f$.  Such a function is then called \textbf{Hamiltonian function}.
\end{definition}

\begin{example}
    Consider $(\mathbb{R}^{2},\omega)$, where $\omega = d q \wedge d p$ with coordinates $(q,p)$. We claim that there exists a Hamiltonian vector field 
    \begin{equation*}
        \quad X_{f} \coloneqq -\frac{\partial f}{\partial q}\frac{\partial}{\partial p}+\frac{\partial f}{\partial p}\frac{\partial}{\partial q}.
    \end{equation*}
    Let $Y=\frac{\partial}{\partial p}$, then we have:
    \begin{equation*}
        \begin{aligned}
            \iota_{X_{f}}\omega \left(\frac{\partial}{\partial p}\right) &= \omega \left(X_{f},\frac{\partial}{\partial p}\right) = \omega \left(-\frac{\partial f}{\partial q}\frac{\partial}{\partial p},\frac{\partial}{\partial p}\right)+\omega \left(\frac{\partial f}{\partial p}\frac{\partial}{\partial q},\frac{\partial}{\partial p}\right)\\
            &= \frac{\partial f}{\partial p}\omega \left(\frac{\partial}{\partial q},\frac{\partial}{\partial p}\right) = d_{f}\left(\frac{\partial}{\partial p}\right).
        \end{aligned}
    \end{equation*}
    
    For $Y=\frac{\partial}{\partial q}$, similarly,
    \begin{equation*}
        \iota_{X_{f}}\omega \left(\frac{\partial}{\partial q}\right) = \frac{\partial f}{\partial q} = d_{f} \left(\frac{\partial}{\partial q}\right).
    \end{equation*}
    
    In general, for $(\mathbb{R}^{2n},\omega)$ with coordinates $(q_{1},...,q_{n},p_{1},...,p_{n})$ and symplectic structure $\omega = \sum_{i=1}^{n}{d q_{i}\wedge d p_{i}}$; given a Hamiltonian function $f\in C^{\infty}(\mathbb{R}^{2n})$, the associated Hamiltonian vector field $X_{f}$ is given by,
    \begin{equation*}
        X_{f} = \sum_{i=1}^{n} \frac{\partial f}{\partial p_{i}}\frac{\partial}{\partial q_{i}} -\frac{\partial f}{\partial q_{i}}\frac{\partial}{\partial p_{i}}.
    \end{equation*}
    Here, $X_{f}\in \mathfrak{X}(M)$, where $M=\mathbb{R}^{2n}$.
\end{example}

In fact, $X_{f} \in \textsf{Ham}(M) \subset \mathfrak{X}(M)$, where $\textsf{Ham}(M)$ is the \textbf{space of Hamiltonian vector fields}. 

\begin{center}
    $\qty{X_{g} \in \Gamma(TM) \;|\;  \iota_{X_{g}}\omega = d g \
    \text{for some} \; g \in C^{\infty}(M)}$.
\end{center}

\begin{figure}[h!]
    \centering
    \includestandalone[width=0.7\textwidth]{Figures/Part1/p20.1}\\
    \hypertarget{fig11}{Figure 11.} A flow, $\gamma$, mapping an interval to the manifold $M$.
\end{figure}

Let $X \in \text{Ham(M)}$. Say $X=X_H$ for some smooth function $H$. Then there exists a local flow $\gamma$ of $X$ defined as, $\gamma : (-\varepsilon,\varepsilon) \rightarrow M \quad$ such that
\begin{equation}
    \begin{aligned}    
        \gamma(0)=p \quad &\text{and} \quad \frac{d}{dt}\bigg|_{t=0}\gamma(t)=X_{p}, \\
        \frac{d}{dt}\gamma(t) &= X_{\gamma(t)}.
    \end{aligned}
\end{equation}

Say $\gamma(t)=(q_{1}(t),...,q_{n}(t),p_{1}(t),...,p_{n}(t))$, $\dot{\gamma}(t)=X_{\gamma(t)} \iff \dot{q}_{i}=\frac{\partial H}{\partial p_{i}}, \; \dot{p}_{i} = -\frac{\partial H}{\partial q_{i}}$,  where $\gamma$ is a local flow (integral curve) for $X_{H}$. In other words, we obtain the standard Hamiltonian equations of motion:
\begin{equation*}
    \dot{q}_{i}=\frac{\partial H}{\partial p_{i}}, \quad \dot{p}_{i} = -\frac{\partial H}{\partial q_{i}}.
\end{equation*}

\newpage 
\begin{example}
    $(X,\omega)=(\mathbb{R}^{3},\omega)$ with coordinates $(q_{1},q_{2},q_{3})=q$. Consider a particle of a mass m moving in configuration space $\mathbb{R}^{3}$ with coordinates $q=(q_{1},q_{2},q_{3})$. 
    \begin{center}
        \includestandalone[width=0.4\textwidth]{Figures/Part1/p20.2}
    \end{center}
    \begin{itemize}
        \item Newton's $2^{nd}$ law in $\mathbb{R}^{3} \implies$ it moves along a curve $q(t)$ s.t. (consider a vector potential),
            \begin{equation*}
                \quad m\frac{d^{2}q}{dt^{2}} = -\nabla V(q).
            \end{equation*}
        \item Introduce momenta,
            \begin{equation*}
                p_{i}=m\frac{dq_{i}}{dt}, \; i =1,2,3.
            \end{equation*}
        \item $H(p,q) \coloneqq \frac{1}{2m}\abs{p}^{2}+V(q) = \frac{1}{2m}\sum^{3}_{i=1}{p_{i}^{2}}+V(q)$.
    \end{itemize}
    
    Consider $T^{*}\mathbb{R}^{3} \cong \mathbb{R}^{6} \; (\text{the corresponding phase space})$ together with $\omega = \sum_{i=1}^{3}{dq_{i}\wedge dp_{i}}$.
    
    Then, Newton's $2^{nd}$ law in $\mathbb{R}^{3}$ (configuration space) reads as the Hamiltonian equation for phase space in $\mathbb{R}^{6}$,
    \begin{equation*}
        \begin{aligned}
            & \frac{dq_{i}}{dt} = \frac{1}{m}p_{i} = \frac{\partial H}{\partial p _{i}}, \\
            & \frac{dp_{i}}{dt} = m\frac{d^{2}q_{i}}{dt^{2}} = \frac{\partial V(q)}{\partial q_{i}} = - \frac{\partial H}{\partial q_{i}},
        \end{aligned}
    \end{equation*}
    
    where we obtain last equality using $V(q) = H(p,q)-\frac{1}{2m}\abs{p}^{2}$.
\end{example}

\section{Some Lie Theory}
\subsection{Lie Groups}
    \begin{definition}
        A \textbf{Lie group} $(G,\cdot)$ is a group together with a smooth complex manifold structure such that the maps
        \begin{enumerate}
            \item[i)] $\mu: G\times G \longrightarrow G$\quad with \quad $(g,h)\longmapsto g\cdot h$, 
            \item[ii)] $i: G\longrightarrow G$ \quad with \quad $g\longmapsto g^{-1}$,
        \end{enumerate}
    are both smooth.
    \end{definition}
    \begin{example}
        Some examples are: 
        \begin{itemize}
        \item  $G = \mathbb{R}^n$ with usual group structure (pointwise addition $+_{\mathbb{R}^n}$) and with the usual smooth manifold structure is a Lie group.
            \item Likewise, $\mathbb{C}^n$ is a Lie group under addition.
            \item $\mathbb{R}^*=\mathbb{R}-\set{0}$ is a 1-dimensional Lie group under multiplication (view as $GL(1,\mathbb{R})$).
            \item Given Lie groups $G_1,..., G_n$, the product space $G_1\times...\times G_n$ is a Lie group with component-wise multiplication. 
        \end{itemize}
    \end{example}

    \begin{example}
         Let us verify that the \textbf{circle group} $S^1 := \{z \in \mathbb{C} : |z| = 1\}$, with the group operation 
        \begin{equation}
            e^{i\theta_1} \cdot_\mathbb{C} e^{i\theta_2} = e^{i(\theta_1 + \theta_2)},
        \end{equation} indeed admits a Lie group structure. Note that 
        \begin{itemize}
            \item $\lrp{e^{i\theta}}^{-1}=e^{-i\theta}$,
            \item $e^{i\theta}\cdot e^{i\psi} = e^{i(\theta+\psi)}$ \quad (corresponds to rotations), and
            \item there is an identity element $1$, along with other standard conditions satisfied. 
        \end{itemize}
    Therefore, $(S^1,\cdot_\mathbb{C})$ is a group.
    \end{example}
    Moreover, $S^1\subseteq \mathbb{R}^2$ and it \textit{inherits the topology} from $\mathbb{R}^2$ as a subspace topology. To see this:
    \vspace{0.7mm}

    \begin{wrapfigure}{l}{0.4\textwidth}
       \vspace{-\baselineskip} 
       \centering
       \begin{tikzpicture}
            \draw[black, thick] (-2,0) -- (2,0);
            \draw[black,thick] (0,-2) -- (0,2);
            \draw[black] (0,0) circle (1);
            \draw [red,ultra thick,domain=11:78] plot ({cos(\x)}, {sin(\x)});
            \node[red] at (0.4,0.6) {$V$};
            \draw[black, dotted, thick] (0.7,0.7) circle (0.6);
            \node[black] at (1.3,1.3) {$U$};
       \end{tikzpicture}\\
       \hypertarget{fig12}{Figure 12.} An open set $U$ containing portion of $S^1$.
    \end{wrapfigure}
    $V\subseteq S^1$ is open if and only if there exists 
    $U\underset{\mathrm{open}}{\subset}\mathbb{R}^2$ such that $V = S^1 \cap U$. Next, we have to show that $S^1$ has the \textit{structure of a 1-manifold}. Informally, we can say that $S^1$ looks like an open subset in $\mathbb{R}^1$. More formally, however, we say that it is locally homeomorphic to $\mathbb{R}^1$ for all $p \in S^1$. Consider the chart $(U,\varphi)$:
    \begin{align*}
        \varphi : U = S^1-\set{(0,1)} &\longrightarrow \varphi(U) \subseteq \mathbb{R}^1 \\
        (u,v) &\longmapsto \frac{u}{1-v}.
    \end{align*}
    Defining $\xi :=u/(1-v)$,  
    \begin{equation}
        \xi^2 = \frac{u^2}{(1-v)^2} = \frac{1-v^2}{(1-v)^2} = \frac{1+v}{1-v},
    \end{equation}
    where we used the fact that $u$ and $v$ are related by the Pythagorean relation, we get
    \begin{equation}
        v = \frac{\xi^2-1}{\xi^2+1}.
    \end{equation}
    By definition,
    \begin{equation}
        u = \xi(1-v) = \xi\frac{2\xi}{\xi^2+1}.
    \end{equation}
    Hence, the inverse map is
    \begin{align*}  
        \varphi^{-1}: \varphi(U) &\longrightarrow U \\
        \xi &\longmapsto \lrp{\frac{2\xi}{\xi^2+1}, \frac{\xi^2-1}{\xi^2+1}}.
    \end{align*}
    Since both $\varphi$ and $\varphi^{-1}$ are continuous, $\varphi$ is a homeomorphism. Now consider another chart $(V,\psi)$ such that
    \begin{align*}
        \psi: V = S^1-{(0,-1)} &\longrightarrow\psi(V)\subseteq \mathbb{R}^1\\
        (u,v)&\longmapsto \frac{u}{1+v} .
    \end{align*}
    Similar to before, we have
    \begin{equation}
        \eta := \frac{u}{1+v} \implies \eta^2 = \frac{u^2}{(1+v)^2} = \frac{1-v^2}{(1+v)^2} = \frac{1-v}{1+v},
    \end{equation}
    and
    \begin{equation}
        u = \eta(1+v) = \frac{2\eta}{\eta+1} ,\qquad   v = \frac{1-\eta^2}{\eta^2+1},
    \end{equation}
    which gives the inverse map as 
    \begin{align*}
        \psi^{-1}: \psi(V) &\longrightarrow V  \\
        \eta &\longmapsto \lrp{\frac{2\eta}{\eta^2+1},\frac{1-\eta^2}{\eta^2+1}}.
    \end{align*}
    Again, since both are continuous, $\psi$ is a homeomorphism. Then, $\set{(U,\varphi), (V,\psi)}$ form a (maximal) atlas $\mathcal{A}$, which is the main guideline to decipher  the geometry of $S^1$.\\
    
    Now we need to verify that $S^1$ admits a \textit{smooth structure}. To see this, we check whether the transition functions of the atlas $\mathcal{A}$ are smooth. Looking at $\varphi\circ\psi^{-1}$ on the overlap $U\cap V$. Let $\eta \in U\cap V$. Then we get
    \begin{equation}
        \varphi\circ\psi^{-1}(\eta) = \varphi\lrp{\frac{2\eta}{\eta^2+1},\frac{1-\eta^2}{\eta^2+1}} = \frac{\frac{2\eta}{\eta^2+1}}{1-\frac{1-\eta^2}{\eta^2+1}} = \frac{2\eta}{2\eta^2}=\frac{1}{\eta}.
    \end{equation}
    We also have, for $\xi\in U\cap V$, 
    
    \begin{equation}
        \psi\circ\varphi^{-1}(\xi) = \psi\lrp{\frac{2\xi}{\xi^2+1},\frac{\xi^2-1}{\xi^2+1}} = \frac{\frac{2\xi}{\xi^2+1}}{1+\frac{\xi^2-1}{\xi^2+1}}=\frac{1}{\xi}.\footnote{Note that $\eta\neq0$ and $\xi\neq0$ since $\psi^{-1}(0)=(0,1)\not\in U$ and $\varphi^{-1}(0)=(0,1)\not\in V.$}
    \end{equation}
    
    Since both of these are smooth, we have confirmed that $S^1$ is a smooth manifold. Another way to see this is the proposition: $S^1 \cong U(1) \cong \mathbb{R}/\mathbb{Z}\cong SO(2)$. The first isomorphism is the canonical one by the definition of $U(1)$; others follow from the exponential mapping, and Euler's identity correspondence of $e^{i\theta}$ with the rotation matrix of $SO(2)$, respectively. \\
    Now, the only thing left to do is to show \textit{$\mu$ and $i$ are smooth maps}. Starting with $i: S^1 \longrightarrow S^1$:
    \begin{equation}
        \begin{tikzcd}[sep=1.5cm]
            e^{i\theta} \subseteq U \arrow[r, thick, "i"] \arrow[d, thick, "\varphi"] & e^{i\theta} \subseteq V \arrow[d, thick, "\psi"] \\ 
            \mathbb{R} \arrow[r,thick,"\tilde{i}"] & \mathbb{R}
        \end{tikzcd}
    \end{equation}
    Here, we can see that $\tilde{i} = \psi \circ i \circ \varphi^{-1}$. Then,
    \begin{align}
        \tilde{i}(\xi) &= \psi\circ i\circ \varphi^{-1}(\xi) = \psi\circ i\lrp{\frac{2\xi}{\xi^2+1}, \frac{\xi^2-1}{\xi^2+1}}\notag\\
        &=\psi\lrp{\frac{2\xi}{\xi^2+1}, \frac{1-\xi^2}{\xi^2+1}} = \frac{\frac{2\xi}{\xi^2+1}}{1+\frac{1-\xi^2}{\xi^2+1}}= 2\xi \quad \quad \quad (\text{hence differentiable}).
    \end{align}
    Likewise, we can also write $\tilde{i}=\varphi\circ i \circ \psi^{-1}$, where
    \begin{align}
        \tilde{i}(\eta) &= \varphi \circ i \psi^{-1}(\eta) = \varphi \circ i\lrp{\frac{2\eta}{\eta^2+1},\frac{1-\eta^2}{\eta^2+1}} \notag \\
        &= \varphi\lrp{\frac{2\eta}{\eta^2+1}, \frac{\eta^2-1}{\eta^2+1}}=\frac{\frac{2\eta}{\eta^2+1}}{1-\frac{\eta^2-1}{\eta^2+1}} = \eta.
    \end{align}
    \paragraph{The matrix Lie groups.} We begin by defining the \textbf{general linear group},
    \begin{equation}
        GL(n,\mathbb{R}) = \set{A\in\mathbb{R}^{n\times n} : \det A \neq 0},
    \end{equation}
    equipped with the matrix multiplication operation. This group has the dimension $\dim_\mathbb{R}=n^2$. Note that 
    \begin{equation}
        [AB]_{ij} = \sum_k^n a_{ik}b_{kj}
    \end{equation}
    is a polynomial, and thus $\mu$ is a smooth map. Similarly, 
    \begin{equation}
        (A^{-1})_{ij} = \frac{(-1)^{i+j}}{\det A}\lrb{(j,i)-\text{minor of }A},
    \end{equation}
    
    which means that $i$ is also a smooth map. The determinant map $\det: \mathbb{R}^{n\times n}\longrightarrow\mathbb{R}$ which is a continuous map.
    
    One can also consider $GL(n,\mathbb{C})$ which is an open submanifold of $\mathbb{C}^{n\times n}$, and hence has $\dim_\mathbb{R}=2n^2$. Similarly, $GL(n,\mathbb{R})$ is an open submanifold of $\mathbb{R}^{n\times n}$, and thus a manifold of $\dim_\mathbb{R}=n^2$. This follows from the fact that since $\mathbb{R}-\set{0}$ is open, $\det^{-1}(\mathbb{R}-\set{0})=GL(n,\mathbb{R})$ must also be open. \\
    
    The next group is the \textbf{special linear group}.
    \begin{equation}
        SL(n,\mathbb{R}) = \set{A\in\mathbb{R}^{n\times n} : \det A = 1}.
    \end{equation}
    $SL(n,\mathbb{R})$ is a codimension-1 subgroup of $GL(n,\mathbb{R})$, and thus has $\dim_\mathbb{R}SL(n,\mathbb{R})=n^2-1$. This follows from the \textbf{regular value theorem} for the determinant map: $SL(n,\mathbb{R})= det^{-1}(A)$. Similar to before, $SL(n,\mathbb{C})$ is a codimension-1 (with respect to $\mathbb{C}$)subgroup of $GL(n,\mathbb{C})$ with $\dim_\mathbb{R}SL(n,\mathbb{C})=2n^2-2$. \\
    
    The \textbf{orthogonal group} is a subgroup of the general linear group defined as
    \begin{equation}
        O(n) = \set{A\in GL(n,\mathbb{R}): A^TA=\text{id}}=f^{-1}(\text{id}) ,
    \end{equation}
    where $f(A)=A^TA$. The dimension of the orthogonal group is
    \begin{equation}
        \dim_\mathbb{R}O(n)= \frac{n^2-n}{2}.
    \end{equation}
    One can note that for $A\in O(n)$,
    \begin{equation}
        \det\text{id} = \det(A^TA)=(\det A)^2 = 1 \implies \det A =\pm 1.
    \end{equation}
    Thus, all elements of the orthogonal group have determinant $\pm 1$. Those with $\det A = 1$ form the \textbf{Special orthogonal group}, 
    \begin{equation}
        SO(n) = O(n) \cup SL(n,\mathbb{R}) = \set{A\in\mathbb{R}^{n\times n}: A^TA=\text{id}, \det A = 1}.
    \end{equation}
    The dimension of the special orthogonal group is equal to the dimension of the orthogonal group. \\
    Similar to orthogonal matrices, unitary matrices also define a group called the \textbf{unitary group},
    \begin{equation}
        U(n) = \set{A\in\mathbb{C}^{n\times n}:A^*A=\text{id}}\subseteq GL(n,\mathbb{C}),
    \end{equation}
    with dimension $\dim_\mathbb{R}=n^2$. Using the same argument as before, we define the \textbf{special unitary group},
    \begin{equation}
        SU(n) = U(n)\cap SL(n,\mathbb{C}) = \set{A\in\mathbb{C}^{n\times n}: \det A=1, A^*A = \text{id}}
    \end{equation}
    with dimension $\dim_\mathbb{R}SU(n)=n^2-1$. 
    
    Now, let us consider some particular cases. Looking at $SU(2)$, by definition, we have
    \begin{equation}
        SU(2) = \lrc{\begin{bmatrix}
            \bar{\beta} & -\bar{\alpha} \\
            \alpha & \beta 
        \end{bmatrix}\in \mathbb{C}^{2\times 2} : \abs{\beta}^2 + \abs{\alpha}^2 =1}.
    \end{equation}
    Writing $\alpha = x_1+iy_1$ and $\beta=x_2+iy_2$, we get that 
    \begin{equation}
        SU(2) \cong \lrc{(x_1,y_1,x_2,y_2)\in\mathbb{R}^4 : x_1^2+y_1^2+x_2^2+y_2^2=1} = S^3.
    \end{equation}
    Furthermore, $\boxed{SU(2)$ is diffeomorphic to $S^3}$. What about $S^2$? \\
    $S^n$ is simply connected for $n\geq 2$, implying that $S^n$ serves as the universal cover of $\mathbb{RP}^n$. For $n=1$, $\pi_1(S^1)=\mathbb{Z}$. 
    \begin{equation}
        \pi_1(\mathbb{RP}^n)\cong \mathbb{Z}_2, \ \text{for} \ n>1 \implies \mathbb{RP}^n = S^n/\mathbb{Z}_2 \qquad \text{(by Antipodal action)}.
    \end{equation}
    This implies that $S^2 \cong SU(2)/U(1)$ and 
    \begin{equation}
        SU(2)/\mathbb{Z}_2 \cong \mathbb{RP}^3\cong SO(3).            
    \end{equation}

    Then, $SU(2)$ is the double cover  of $SO(3)$. Note that $SO(3)$ is the classical rotation group while $SU(2)$ is the quantum rotation group. \\
    
    Next, consider $SL(2,\mathbb{C})$:
    \begin{equation}
        SL(2,\mathbb{C}) = \lrc{\begin{bmatrix}
            a & b\\
            c & d 
        \end{bmatrix}\in\mathbb{C}^{2\times 2}:ad-bc=1}.
    \end{equation}
    To show that $SL(2,\mathbb{C})$ is in fact a Lie group, we can apply the same processes as we did for $S^1$. It should be noted that  $SL(2,\mathbb{C})$ is the quantum Lorentz group, which is the double cover of the Lorentz group $SO(1,3)$, the symmetry group of special relativity, where
    \begin{equation}
        SO(1,3) \cong SL(2,\mathbb{C})/\mathbb{Z}_2.
    \end{equation}
    Furthermore, one has the identification
    \begin{equation}
        SO(1,3)\cong \mathbb{R}^3\times(S^3/\mathbb{Z}_2).
    \end{equation}
    Now, let us define two charts. Start with $(U,\varphi)$, where 
    \begin{equation}
        U = \lrc{\begin{bmatrix}
        a & b \\ c& d 
        \end{bmatrix}\in SL(2,\mathbb{C}): a\neq 0}.
    \end{equation}
    Note that $U$ is an open subset:
    \begin{align}
        f: SL(2,\mathbb{C}) &\longrightarrow \mathbb{C}^4 \\
            \begin{bmatrix}
            a & b \\ c& d
            \end{bmatrix} &\longmapsto (a,b,c,d),
    \end{align}
    where we inherit the usual topology of $\mathbb{C}^4$. Considering the hypersurface $\lrc{(a,b,c,d) : a\neq 0}$, which is open in $\mathbb{C}^4$, we get
    \begin{equation}
        U = \underbrace{\underbrace{f^{-1}}_\text{cont.}\lrp{\lrc{(a,b,c,d): a \neq 0}}}_\text{open}.
    \end{equation}
    The map, $\varphi$, is then defined by
    \begin{align}
        \varphi: U &\longrightarrow \varphi(U) \\
                \begin{bmatrix}
                    a & b \\ c & d
                \end{bmatrix}&\longmapsto (a,b,c),
    \end{align}
    via the determinant $ad-bc=1 \implies d = \frac{1+bc}{a}$. Then, the inverse map is
    \begin{equation}
        \varphi^{-1}: (a,b,c) \longmapsto \begin{bmatrix}
        a & b \\ c & \frac{1+bc}{a}
        \end{bmatrix}.
    \end{equation}
    The second chart we define is $(V,\psi)$, where 
    \begin{equation}
        V = \lrc{\begin{bmatrix}
        a & b \\ c & d
        \end{bmatrix} \in SL(2,\mathbb{C}): b\neq 0},
    \end{equation}
    and 
    \begin{align}
        \psi: V &\longrightarrow \psi(V) \\
            \begin{bmatrix}
            a & b \\ c & d
            \end{bmatrix} &\longmapsto (a,b,d), 
    \end{align}
    with 
    \begin{equation}
        \psi^{-1}: (a,b,d) \longmapsto \begin{bmatrix}
        a & b \\ \frac{ad-1}{b} & d 
        \end{bmatrix}.
    \end{equation}
    Now, one can check the other conditions for the maximal atlas $\lrc{(U,\varphi), (V,\psi)}$. \\
    
    At last, let us now put forward some key theorems.
    \begin{theorem}
        $S^n$ admits a Lie group structure if $n=0,1$ and 3.
    \end{theorem}
    \begin{theorem}
        \textbf{Hairy Ball Theorem: } There exists no nowhere-vanishing vector field on $S^2$.
    \end{theorem}
    \begin{lemma}
        Let $G$ be a Lie group. Then there exists a nowhere-vanishing vector field on $G$. 
    \end{lemma}
    \begin{theorem}
        Let $G$ be a compact connected Lie group. Then the Euler characteristic vanishes, i.e. $\chi(G)=0$.
    \end{theorem}

\subsection{Lie Algebras}
    \begin{definition}
         A \textbf{Lie algebra}, $\mathfrak{g}$, is a vector space over a field $K$ equipped with a bilinear map, $[\cdot,\cdot]: \mathfrak{g}\times\mathfrak{g}\longrightarrow \mathfrak{g}$, called the \textbf{Lie bracket} such that
        \begin{enumerate}
            \item $[x,y]=-[y,x] \quad\forall x,y\in\mathfrak{g}$ (skew-symmetry),
            \item $[x,[y,z]]+[y,[z,x]]+[z,[x,y]]=0 \quad\forall x,y,z\in\mathfrak{g}$ (\textbf{Jacobi's identity}).
        \end{enumerate}
    \end{definition}
    \begin{example}
        As an example, consider the space $\Gamma(TG)$ of smooth vector fields on $G$.
        \begin{itemize}
            \item $\Gamma(TG)$ is an infinite dimensional $C^\infty(G)$-module (or $\mathbb{R}$-vector space).
            \item The Lie bracket is defined as the usual commutator in differential geometry,
                \begin{equation}
                    [x,y] := xy-yx = [x,y]_o.        
                \end{equation}
        \end{itemize}
        With these two conditions, $(\Gamma(TG), [\cdot,\cdot]_o)$ forms a Lie algebra. 
    \end{example}
    \begin{definition}
        Let $(G,\cdot)$ be a Lie group. Then for any $g\in G$, one can define a map
        \begin{align}
            l_g = G &\longrightarrow G \\
                  h &\longmapsto l_g(h):=g\cdot h,
        \end{align}
        which is called a \textbf{left-translation}. One can similarly define a right-translation. 
    \end{definition}
    \begin{observation}
        \begin{enumerate}
            \item Since the product operation is smooth, $l_g$ is also smooth.
            \item $l_g$ is invertible: It's inverse is defined as $(l_g)^{-1}=l_{g^{-1}}\quad \forall g\in G$. Taking the inverse is a smooth operation and thus, $l_{g^{-1}}$ is also smooth. This means that $l_g$ is a diffeomorphism. ($l_g: G \xlongrightarrow{\cong}G$).
            \item Since $(l_g)^{-1}$ exists, $l_g$ is an isomorphism considering $G$ as a set. However, it is not a group isomorphism as $l_g(h\cdot h')=g\cdot h \cdot h'$ does not necessarily equal $gh \cdot gh'$. 
        \end{enumerate}
    \end{observation}
    
    Since $l_g$ is a diffeomorphism, we can define the push-forward $l_{g^*}$. 
    \begin{center}
        \includestandalone[width=0.4\textwidth]{Figures/Part1/p30.1}\\ 
        \hypertarget{fig13}{Figure 13.} Pushforward map of left-translation.
    \end{center}
    \begin{equation}
        l_{g^*,h}: T_hG \longrightarrow T_{l_g(h)}G = T_{gh}G.
    \end{equation}
    \begin{definition}
        A vector field $X \in \Gamma(TG)$ is called \textbf{left-invariant} if 
        \begin{equation}
            l_{g^*}X = X,
        \end{equation}
        that is; for any $h\in G$,
        \begin{equation}
            l_{g^*,h}(X_h) = X_{gh}.
        \end{equation}
    \end{definition}

    We denote the \textbf{space of left-invariant vector fields} by $L\Gamma(TG)$. This is indeed a $C^\infty(G)$-submodule of $\Gamma(TG)$. 

    \begin{observation}
        Let $f\in C^\infty(G)$ and $X\in L\Gamma(TG)$. Then for $h,g\in G$, we have
        \begin{equation}
           \boxed {(l_{g^*}X_h)f = X_{gh}f}.
        \end{equation}
        Looking at the left-hand side,
        \begin{equation}
            (l_{g^*}X_h)f = X_h(f\circ l_g) = (X(f\circ l_g))(h).
        \end{equation}
        Now the right-hand side reads as
        \begin{equation}
            X_{gh}f = (X_f)(gh) = (X_f\circ l_g)(h).
        \end{equation}
        \begin{center}
            \includestandalone[width=0.2\textwidth]{Figures/Part1/p30.2}
        \end{center}
        Then, we reach an important characterization (to decide whether a vector field is left-invariant or not):
        \begin{equation}
            \boxed{X\in L\Gamma(TG) \iff X(f\circ l_g) = X_f\circ l_g.}
            \label{eq:star}
        \end{equation}
    \end{observation}
    \begin{claim}
        $(L\Gamma(TG), [\cdot,\cdot]_o)\subseteq (\Gamma(TG), [\cdot,\cdot])$ is a \textbf{Lie subalgebra}. 
    \end{claim}
    \begin{proof}
        We need to show that for $x,y\in L\Gamma(TG)$, $[x,y]_o\in L\Gamma(TG)$. Here, we will make use of (\ref{eq:star}). As we stated before, $L\Gamma(TG)$ is a $C^\infty(G)$-submodule of $\Gamma(TG)$ or a $\mathbb{R}$-subvector space of $\Gamma(TG)$. Then, let $X,Y$ be left-invariant vector fields. For $g\in G$ and $f\in C^{\infty}(G)$, we have
        \begin{align}
            [X,Y]_o(f\circ l_g) &= (XY-YX)(f\circ l_g) \notag\\ 
                                &= X(Y(f\circ l_g))-Y(X(f\circ l_g))\notag\\
                                &= X(Yf \circ l_g)- Y(Xf \circ l_g)\notag\\ 
                                &= X(Yf)\circ l_g - Y(Xf)\circ l_g\notag\\
                                &= (XY-YX)(f)\circ (l_g).
        \end{align}
        This means that
        \begin{equation}
            [X,Y]_o (f\circ l_g) = [X,Y]_o f \circ l_g,
        \end{equation}
        which, by (\ref{eq:star}), implies that $[X,Y]_o\in L\Gamma(TG)$. 
    \end{proof}
    \begin{claim}
        $T_eG \cong L\Gamma(TG)$. 
        An immediate corrollary from this proposition is: $\dim L\Gamma(TG)=\dim T_eG = \dim G <\infty$\footnote{When $\dim G<\infty$.} even if $\Gamma(TG)$ is an infinite dimensional Lie algebra.
    \end{claim}
    \begin{proof}
         We define a map 
        \begin{align*}
            \varphi : T_eG &\longrightarrow L\Gamma(TG) \\ 
                    A &\longmapsto \varphi(A),
        \end{align*}
        where for each $g\in G$, $\varphi (A)_g := l_{g^*}A$ defines a vector field. We need to show that $\varphi(A)\in \Gamma(TG)$ is indeed a left-invariant vector field, i.e. to show that $l_{h^*}\varphi(A)_g = \varphi(A)_{hg}$. 
        \begin{center}
            \includestandalone[width=0.4\textwidth]{Figures/Part1/p31}
        \end{center}
        To see this, let us look at how it behaves when it acts on a function $f\in C^{\infty}(G)$:
        \begin{equation}
            l_{h^*}(l_{g^*}A)f = (l_{g^*}A)(f\circ l_{h})= A(f\circ l_h\circ l_g) = A(f\circ l_{hg}) = (l_{(hg)^*}A)f,
        \end{equation}
        \begin{equation}
            \therefore l_{h^*}\varphi(A)_g = \varphi(A)_{hg}.
        \end{equation}
        Hence, $\varphi(A)$ is a left-invariant vector field. Furthermore, its inverse is given by
        \begin{align*}
            \psi: L\Gamma(TG) &\longrightarrow T_eG \\
                    X &\longmapsto X_e.
        \end{align*}
        Then, we need to check that 
        \begin{enumerate}
            \item[(i)] $(\psi\circ\varphi)A = \psi(\varphi(A))=\varphi(A)_e=l_{e^*}A = A$,
            \item[(ii)] $(\varphi \circ\psi)(X)=\varphi(X_e)=X$.  
        \end{enumerate}
        The second point arises from the fact that for each $g\in G$;
        \begin{equation}
            \varphi(X_e)_g = l_{g^*}X_e = X_{ge} = X_g. 
        \end{equation}
        Hence, $T_eG \cong L\Gamma(TG)$ as a vector space.
    \end{proof}
    
    Our main result is then that there exists a \textit{Lie algebra homomorphism}\footnote{Recall that  $\beta$ is called a \textbf{Lie algebra homomorphism} if 
    \begin{equation}
        \beta: \mathfrak{g}\longmapsto \mathfrak{h}
    \end{equation}
    such that $\forall X,Y \in \mathfrak{g}$,
    \begin{equation}
        \beta ([X,Y]_\mathfrak{g}) = [\beta(X),\beta(Y)]_\mathfrak{h}.
    \end{equation}}
    \begin{equation}
        (T_eG,[\cdot,\cdot]) \cong (L\Gamma(TG), [\cdot,\cdot]_o), 
    \end{equation}
    where
    \begin{equation}
        [A,B] = [\varphi(A),\varphi(B)]_o\bigg\vert_e.
    \end{equation}
    \paragraph{The matrix Lie algebras.}
    \begin{table}[h!]
        \centering
        \begin{tabular}{c|c}
            Lie group	&  Lie algebra \\
            \hline
            $GL(n,\mathbb{R})$ & $\mathfrak{gl}(n,\mathbb{R})$ \\
            $SO(n)$ & $\mathfrak{so}(n)$ skew symmetric matrices \\
            $U(n)$ & $\mathfrak{u}(n)$ skew Hermition matrices \\ 
            $SL(n,\mathbb{C})$ & $\mathfrak{sl}(n,\mathbb{C})$ traceless matrices
        \end{tabular}
    \end{table}
    
    Calculating such Lie algebras from definitions is lengthy and quite messy. Instead, we use a \emph{practical method} via the notion of the exponential map. \\ 

    Let us try computing the Lie algebra of $SU(2)$ using the exponential map.
        \begin{equation}
            SU(2) = \set{A\in GL(2,\mathbb{C}) : AA^* = \text{id}, \det A = 1}.
        \end{equation}
        Let $M$ be a smooth manifold. The idea of exponential maps is that any "sufficiently close" two points in $M$ can be joined through a geodesic given by the exponential map. \\ 
        \begin{center}
            \includestandalone[width=0.6\textwidth]{Figures/Part1/p33} \\ 
            \hypertarget{fig14}{Figure 14.} Exponential map from a tangent space to the manifold.
        \end{center}
    Here one can define a local diffeomorphism: $U \longrightarrow v := \exp(U)$ such that
    \begin{equation}
        \begin{aligned}
            \gamma(t) &:= \exp(t\v{v}_p), \\ 
            \gamma(0) &= p, \\ 
            \gamma(1) &= \exp(\v{v}_p).
        \end{aligned}
    \end{equation}
    Then any point $q \in V$ can be "hit" by a geodesic $\gamma$ starting from the point $p$. For this, the data we need are the initial point $p$ and the initial velocity $\v{v}_p$. \\ 

    Similarly, since $\left(\mathfrak{g},[\cdot,\cdot]_0\right) \simeq \left(T_eG, [\cdot,\cdot]\right)$, we have a surjection
    \begin{equation}
        \exp : \mathfrak{g} \longrightarrow G.
    \end{equation}
    For $\mathfrak{su}(2)$, the Lie algebra of $SU(2)$, we let $X\in \mathfrak{su}(2)$. Then, by the surjective map defined above, we get $\exp(t X)\in SU(2)$. So, we have,
    \begin{enumerate}
        \item $\exp(tX)\exp(tX)^* = \text{id}$ and,
        \item $\det\exp(tX)$.
    \end{enumerate}
    Now, observe that we have
    \begin{equation}
        \exp(tX) = e^{tX} = \text{id} + tX + \frac{(tX)^2}{2!} + \cdots.  = \text{id} + tX + \mathcal{O}(t^2).
    \end{equation}
    Then we have
    \begin{equation}
        (\text{id} + tX)\bar{(\text{id} + tX)^T} = (\text{id}+tX)(\text{id} + tX^*) = \text{id} + tX^* + tX + \mathcal{O}(t^2) = \text{id}.
    \end{equation}
    Thus, if $\text{id} + t(X^*+X)=\text{id}$ for all $t$, then we must have 
    \begin{equation}
        X^* = -X.
    \end{equation}
    Now, recall that for any $X\in \mathbb{R}^{n\times n}$, $\det(e^X)=e^{\tr X}$. To prove this, we make the following observations. 
    \begin{enumerate}
        \item[i)] Any complex (real) square matrix $X$ can be triangularised, i.e., there exists a non-singular complex square matrix $A$ such that $AXA^{-1}$ is upper triangular. 
        \item[ii)] $X$ and $AXA^{-1}$ share the same eigenvalues:
        \begin{equation}
        \begin{aligned}
            \det(\lambda\text{id} - AXA^{-1}) &= \det(\lambda AA^{-1}-AXA^{-1}) 
            \\
            &= \det(A(\lambda\text{id}-X)A^{-1})
            \\
            &= \det(\lambda\text{id} - X).
        \end{aligned}
        \end{equation}
        \item[iii)] Then, the eigenvalues $\lambda_1,\dots,\lambda_n$ of $AXA^{-1}$ (also those of $X$), must lie along its diagonal. Due to this observation, it is enough to assume $X$ is an upper triangular matrix with eigenvalues $\lambda_1,\cdot,\lambda_n$. 
            \begin{equation}
                X = \begin{bmatrix}
                    \lambda_1 & & \star \\
                     & \ddots & \\ 
                     0 & & \lambda_n
                \end{bmatrix}.
             \end{equation}
    \end{enumerate}
    Then, we have 
    \begin{align}
        e^X &= \sum_k \frac{1}{k!}X^k = \sum_k \frac{1}{k!} \begin{bmatrix}
                    \lambda_1^k & & \star \\
                     & \ddots & \\ 
                     0 & & \lambda_n^k
                \end{bmatrix} \notag \\ 
            &= \begin{bmatrix}
                \sum_k \frac{1}{k!}\lambda_1^k & & \star \\ 
                 & \ddots & \\
                 0 & & \sum_k \frac{1}{k!}\lambda_n^k 
                \end{bmatrix} = \begin{bmatrix}
                e^{\lambda_1} & & \star \\ 
                & \ddots & \\ 
                0 & & e^{\lambda_n}
            \end{bmatrix}.  
    \end{align}
    Therefore,
    \begin{equation}
        \det(e^X) = \prod_{i=1}^n e^{\lambda_i} = e^{\sum_i\lambda_i} = e^{\tr X}.
    \end{equation}
    With this, the second statement then reads as
    \begin{equation}
        1 = \det e^{tX} = e^{\tr(tX)} \implies \tr(tX)=t\tr{X} = 0 \quad\forall t.
    \end{equation}
    \begin{equation}
        \therefore \tr X = 0.
    \end{equation}
    Then the Lie algebra $\mathfrak{su}(2)$ is
    \begin{equation}
        \mathfrak{su}(2) = \set{X\in \mathbb{C}^{2\times2} : X^*=-X, \tr X=0}.
    \end{equation}
    The bases of this algebra are given by the \textit{Pauli matrices} with some scale factors.
    \begin{equation}
        i\sigma_1 = u_1 = \begin{pmatrix}
            0 & i \\ 
            i & 0 
        \end{pmatrix} \quad,\quad i\sigma_2 = u_2 = \begin{pmatrix}
            0 & -1 \\ 
            1 & 0 
        \end{pmatrix} \quad,\quad i\sigma_3 = u_3 = \begin{pmatrix}
            i & 0 \\ 
            0 & -i
        \end{pmatrix}.
    \end{equation}
    The \textit{structure relations} are :
    \begin{equation}
        \begin{aligned}
            [u_3,u_1] = 2u_2, \qquad
            [u_1,u_2] = 2u_3, \qquad
            [u_2,u_3] = 2u_1. 
        \end{aligned}
    \end{equation}
    \begin{remark}
        There is a Lie algebra isomorphism $\mathfrak{su}(2) \simeq \mathfrak{so}(3)$. However, their corresponding Lie groups, $SU(2)$ and $SO(3)$, are \emph{not} isomorphic. In fact, $SU(2)$ is the \textit{double cover} of $SO(3)$. In some sense, $SO(3)$ corresponds to 3D rotations in the macroworld while $SU(2)$ corresponds to infinitesimal rotations in the microworld. 
    \end{remark}
    \begin{example}
        Compute the Lie algebra $\mathfrak{sl}(2,\mathbb{C})$ of $SL(2,\mathbb{C})=\set{A\in \mathbb{C}^{2\times 2}: \det A =1}$. \\
        Consider the determinant function $\det : GL(n,\mathbb{R})\longrightarrow \mathbb{R}$ (or $\mathbb{C}$ instead of $\mathbb{R}$). Then, we have
        \begin{equation}
            \det\nolimits_{*,A}: T_nGL(n,\mathbb{R})\longrightarrow T_{\det A}\mathbb{R}.
        \end{equation}
        In particular, consider $A=\text{id}$. Then,
        \begin{equation}
            \det\nolimits_{*,\text{id}} : T_\text{id} GL(n,\mathbb{R}) \longrightarrow T_1 \mathbb{R} \simeq \mathbb{R}.
        \end{equation}
        We then claim, $\det\nolimits_{*,\text{id}}(X)=\Tr X$. To prove this, let $X\in T_\text{id} GL(n,\mathbb{R})$. Then\textbf{} there exists a curve $\gamma: (-\varepsilon,\varepsilon)\longrightarrow GL(n,\mathbb{R})$ such that $\gamma(0)=\text{id}$ and $d\gamma /d t \vert_{t=0}=X$ with $\gamma(t)=\exp(tX)$. Then, by the definition of the pushforward, we have
        \begin{equation}
            f_{*,p}(X_p) = \deriv{}{t}\bigg\vert_{t=0}f \circ \gamma.
        \end{equation}
        This gives
        \begin{align}
            \det\nolimits_{*,\text{id}}(X) = \deriv{}{t}\bigg\vert_{t=0}\det\circ\gamma = \deriv{}{t}\bigg\vert_{t=0}\det(e^{tX})= \deriv{}{t}\bigg\vert_{t=0}e^{t\tr(X)} = \tr(X)e^{t\tr(X)}\bigg\vert_{t=0}=\tr(X).
        \end{align}
    \end{example}

\newpage
\section{Poisson Algebra}
Let $(M,\omega)$ be a symplectic manifold, and let $f \in C^\infty(M)$. We then define the Hamiltonian vector field $X_f$ of $f$ by 
\begin{equation}
    \iota_{X_f} \omega = df.
\end{equation}
Then we have
\begin{equation}
   \begin{aligned}
    C^\infty(M) &\longrightarrow \textsf{Ham}(M) \subset \Gamma(TM),\\
    f &\longmapsto X_f.
\end{aligned} 
\end{equation}
Here, $\textsf{Ham}(M)$ is the space of Hamiltonian vector fields as we defined before. Now, define a bilinear map $\{ \cdot, \cdot\} : C^{\infty}(M) \times C^\infty(M) \longrightarrow C^{\infty}(M)$ by
\begin{equation} \label{Poisson bracket related with Hamiltonian vector fields}
    \{ f, g \} := - \omega(X_f, X_g).
\end{equation}
This bilinear map is called the \textbf{Poisson bracket} and satisfies the following properties.
\begin{lemma}
    \begin{enumerate}
    \item $\{f,g\} = - \{g, f\} $,
    \item $ \{f, \{ g, h \} \} + \text{cyclic permutations} =0$,
    \item $ \{fg, h\} = f \{g, h\} + \{f,h\} g $.
\end{enumerate}
\end{lemma}
With the Poisson bracket, one has a Lie algebra structure. Namely, $\big( C^\infty(M) , \{ \cdot, \cdot \} \big)$ is a Lie algebra where the Lie algebra bracket is identified with the Poisson bracket 
\begin{equation}
    [\cdot, \cdot] \equiv \{ \cdot, \cdot \}.
\end{equation}

\begin{claim}
$\big( \textsf{Ham}(M), [\cdot, \cdot ]_0 \big)$ is a Lie algebra, where $[\cdot, \cdot]_0$ is the bracket that descends from the commutator of vector fields in $\big(  \Gamma(TM), [\cdot, \cdot ]_0 \big)$. 
\end{claim}

\begin{proof}
To prove this claim, it suffices to show that the commutator of two Hamiltonian vector fields $X_f$ and $X_g$ gives rise to a Hamiltonian vector field for any two $f,g \in C^\infty(M)$. Namely, we need to show that, for any $X_f, X_g \in \textsf{Ham}(M)$, \begin{equation}
    [X_f, X_g]_0 \in \textsf{Ham}(M).
\end{equation}
For a moment, suppose that $[X_f, X_g]_0 \equiv X_h$ for some $h \in C^{\infty}(M)$. How does $h$ relate to $f$ and $g$? A very natural guess would be $h \equiv \{f,g\}$ so that 
\begin{equation} 
    [X_f, X_g]_0 = X_{\{f,g\}}.
\end{equation}
In other words, the Hamiltonian vector field corresponding to $\{f,g\}$ is given by the commutator $[X_f, X_g]$. While this appears quite intuitive and appealing, how can we show it to be true? The proof uses two things:

\begin{enumerate}
    \item By Cartan's identity, 
        \begin{equation}
            \mathcal L_X \alpha = d \ \iota_X \alpha + \iota_X \ d \alpha.
        \end{equation}
    If we choose $X = X_f \in \textsf{Ham}(M)$, and $\alpha = \omega$ with $\omega$ being the symplectic form on $M$, we get 
        \begin{equation}
            \mathcal L_{X_f} \omega = d (\underbrace{\iota_{X_f} \omega}_{df} ) + \iota_{X_f} \ \underbrace{d\omega}_{0 },
        \end{equation}
    and since $d^2 = 0$ holds on any object, we have 
        \begin{equation}
            \boxed{\mathcal L_{X_f} \omega = 0,}
        \end{equation}
    meaning that Hamiltonian vector field $X_f$ preserves the symplectic structure $\omega$ in the sense that $\omega$ is invariant under the Hamiltonian flow generated by $X_f$. Remember also that $\mathcal L_{X_f} f = 0$. That is, Hamiltonian vector field preserves its Hamiltonian function. To see this, one can imagine $f$ physically as the Hamiltonian, which commutes with itself and is preserved under time translations, the latter being the physical interpretation of Hamiltonian flows. Mathematically, one proves $\mathcal L_{X_f} f = 0$ as
        \begin{equation}
            \mathcal L_{X_f} f = X_f(f) = df(X_f) = \iota_{X_f} \ \omega(X_f) = \omega( X_f, X_f) = 0.
        \end{equation}
    \item For any $X,Y \in \Gamma(TM)$, one has 
        \begin{equation}
            \iota_{[X,Y]} = [\mathcal L_X, \iota_Y].
        \end{equation}
    Then, 
        \begin{equation}
        \begin{aligned}
            \iota_{[X_f,X_g]} \omega &= [\mathcal L_{X_f} , \iota_{X_g} ] \omega \\
            &= \mathcal L_{X_f} ( \underbrace{\iota_{X_g} \omega}_{\eta} ) - \iota_{X_g} ( \underbrace{\mathcal L_{X_f} \omega}_{0 \text{ (by item 1)} } ) \\
            &= d ( \iota_{X_f} \iota_{X_g} \omega) - \iota_{X_f } d (\underbrace{\iota_{X_g} \omega}_{dg} ) \\
            &= d \left[ \iota_{X_f} (\omega (X_g, \cdot) ) \right] \\
            &= d \big( \omega( X_g, X_f) \big) \\
            &= d ( \{f,g\}) \\
            &= \iota_{X_{ \{f,g\} }} \omega.
        \end{aligned}
        \end{equation}
\end{enumerate}
\end{proof}

Thus, we have established the following result. $\textsf{Ham}(M), [\cdot, \cdot]$ is a Lie algebra with

\begin{equation}
    \begin{aligned}
    [\cdot, \cdot]: \textsf{Ham}(M)\times \textsf{Ham} (M) &\longrightarrow \textsf{Ham} (M)\\
    (X,Y) &\longrightarrow [X, Y] = X_{\qty{f,g}}.
    \end{aligned}
\end{equation}

Note that there exists a natural map $\varphi$ from the two Lie algebras 
\begin{equation}
\begin{aligned}
    \varphi: \quad \big( C^\infty(M), \{\cdot, \cdot\} \big) \ \ \ &\longrightarrow \ \ \ \big( \textsf{Ham}(M), [\cdot, \cdot]_0 \big) \\
    f \ \ \ \hspace{10.5mm} &\longmapsto \ \ \ \hspace{10.5mm} X_f \\
    \{g,h\} \ \ \ \hspace{7mm} &\longmapsto \ \ \ X_{ \{g,h\} } = [X_g, X_h].
\end{aligned}
\end{equation}
\begin{itemize}
    \item $\varphi$ is a Lie algebra homomorphism as the algebra structure is preserved under $\varphi$:
    \begin{equation}
        \varphi(\qty{f,g}) = X_{\qty{f,g}} = [X_f, X_g] = [\varphi (f), \varphi(g)].
    \end{equation}
    \item $\varphi$ assigns to $f$ a differential operator $X_f$.
    \item The condition $``X_{ \{f,g\} } = [X_f, X_g]"$ resembles \emph{Dirac's quantization condition}, where one takes the Poisson bracket $\{f,g\}$ and makes it an operator $\widehat{\{f,g\}}$ such that 
        \begin{equation}
            \widehat{\{f,g\}} = \frac{1}{i\hbar} [ \hat{f} , \hat{g} ],
        \end{equation}
    where the quantities with hats are operators acting on the Hilbert space $\mathcal H$ of the quantum theory, and $\hbar$ is the quantization parameter (Planck's constant), governing the transition from classical phase space to the quantum Hilbert space.
\end{itemize}
One may suppose that this is the end of Geometric Quantization, and that the "deformation" above does the job, but there are some loose ends.
\begin{enumerate}
    \item What exactly is the suitable Hilbert space on which $\hat f$ acts? 
    \item Is the assignment $f \mapsto X_f$ suitable for our purposes?
\end{enumerate}
One problem with the assignment $f \mapsto X_f$ is that it is not injective. This is because if $f = c$ for some constant $c$, then $0 = df = \iota_{X_f} \omega( \cdot) = \omega(X_f, \cdot)$. However, $\omega$ is a non-degenerate bilinear form, so $X_f$ must be 0 identically! Hence, the map has a kernel, where any constant maps to the "0-operator". \\

To remedy the situation above and provide better construction, we will need some additional structures. To this end, we will first introduce some useful concepts in the following sections.
\newpage

\section{Group Actions on Manifolds}
\subsection{Lie Group Actions}

\begin{definition}

Let $(G,\cdot)$ be a Lie group, and $M$ be a smooth manifold. We define the \textbf{action of $G$ on $M$} as a smooth map:
\begin{equation}
    \begin{aligned}
        G \times M &\xlongrightarrow{\triangleright} M\\
        (g,p) &\longmapsto g \triangleright p
    \end{aligned}
\end{equation}
The operation $\triangleright$ is a smooth \textbf{left-action}, and the following two properties hold: $\forall p \in M, \; \forall g_1, g_2 \in G$
\begin{equation}
    1) \ e \triangleright p = p , \quad \quad 2) \ g_1 \triangleright (g_2 \triangleright p) = (g_1 \cdot g_2) \triangleright p,
\end{equation}
where $\cdot$ denotes the multiplication on the group $G$, and $e$ is the identity element.
\end{definition}
Similarly, one can define a smooth right-action by $(g,p) \longmapsto p \triangleleft g$. 
\begin{definition}

Let $(G, \cdot_G)$ and $(H, \cdot_H)$ be two Lie groups, and $\rho : G \longrightarrow H$ be a group homomorphism. Also let $M,N$ be two smooth manifolds, acted upon by $G,H$ as $G \times M \xlongrightarrow[]{\blacktriangleright} M$ and $H \times N \xlongrightarrow[]{\triangleright} N$. A function $f : M \longrightarrow N$ is \textbf{$\rho$-equivariant} if  the   diagram
    \begin{equation}
        \begin{tikzcd}[sep=1.5cm]
            G\times M \arrow[thick, r, "{(\rho,f)}"] \arrow[thick, d, "\blacktriangleright"] & H\times N \arrow[d,thick, "\triangleright"] \\ 
            M \arrow[thick, r, "f"] & N \arrow[to path={node[midway,scale=2] {$\circlearrowright$}}]{}
        \end{tikzcd}
    \end{equation}
commutes. One then calls such $f$ a \textbf{$\rho$-equivariant map}. The above diagram commuting means
\begin{equation}
    \begin{tikzcd}[sep=1.5cm]
        (g,p) \arrow[thick,r] \arrow[thick,d] & (\rho(g),f(p)) \arrow[d,thick] \\ 
        g\blacktriangleright p \arrow[thick, r] & f(g\blacktriangleright p) = \rho(g)\triangleright f(\rho)
    \end{tikzcd}
\end{equation}
The equality at the bottom right is the condition of equivariance.
    
\end{definition}

\begin{definition}

Let $G \longrightarrow \text{Diff}(M)$ be a left action $g \longmapsto g \triangleright \cdot$.

\begin{enumerate}
    \item For any point $p$, the \textbf{orbit} of the $G$-action on $p$ is defined as the set
    \begin{equation}
        \mathcal O_p = \qty{ q \; | \; q = g \triangleright p \ \ \forall g \in G }.
    \end{equation}
    \newpage
    \textbf{Example}: Consider the $SO(2)$ action on the $\mathbb R^2$ plane (or equivalently, the $S^1$ action on the $\mathbb C$ plane) 
    \begin{equation}
        A \triangleright p := A \cdot 
        \begin{bmatrix}
            p_1 \\p_2
        \end{bmatrix},  
    \end{equation}
    
    where $\cdot$ denotes matrix multiplication. Recall that $S^1 \simeq SO(2)$ from the homomorphism $e^{i\theta} \mapsto \begin{pmatrix}
        \cos\theta & \sin\theta \\ -\sin\theta & \cos\theta
    \end{pmatrix}$, so on the complex plane, this action corresponds to multiplication by a phase.
    \item The \textbf{stabilizer} subgroup of the orbit group is the set 
    \begin{equation}
        \text{stab}(p) = \{ g \in G \ | \ g \triangleright p = p \}.
    \end{equation}
    The action $\triangleright$ on the manifold $M$ is called \textbf{free} if the stabilizer subgroup is trivial, namely $\text{stab}(p) = \{ e \} \ , \ \forall p \in M$, with $e$ being the identity element of $G$.

    \textbf{Example:    }    
    Let $\mathbb C^* = \mathbb R^2 - \{0,0\}$. The $S^1$ action on $\mathbb C^*$ is free 
    \begin{equation}
        \text{stab} \Bigg
        ( \begin{bmatrix}
            0 \\0
        \end{bmatrix} \Bigg)  = G  
        \Longrightarrow e^{i\theta} z = r e^{i(\theta+\psi)},
    \end{equation}
    so that $e^{i\theta} z = z$ only when $\theta = 2\pi k \in \text{ker}(\exp)$.

    \item Define an equivalence relation using the $G$ action. Let $p, q \in M$:
    \begin{equation}
        p \sim q  \Longleftrightarrow \exists \ g \in G \ \text{  s.t. } \ q = g \triangleright p \ \ ( \text{i.e}. \;q \in \mathcal O_p).
    \end{equation}
    For example, in $U(1)$ gauge theory, two gauge potentials $A$ and $A'$ are declared the same if there exists a $g \in U(1)$ such that $A' = g \triangleright A$. Having defined the equivalence relation, one can also consider the quotient \\ $M / \sim \ := \ M / G$, which is called the \textbf{orbit space}.
\end{enumerate}
    
\end{definition}
\paragraph{Smooth $\mathbb R$-action on a smooth manifold:} Given a smooth manifold $M$, let $X$ be a complete vector field. For each $p \in M$, there is a unique flow

\begin{center}
    \includestandalone[width=0.7\textwidth]{Figures/Part1/p46.2}\\ 
    \hypertarget{fig15}{Figure 15.} A one-parameter flow $\rho_t$ on $M$.
\end{center}

where $\rho_t : M \longrightarrow M$ is a map $p \longmapsto \rho_t(p)$ such that 

\begin{itemize}
    \item $\rho_t$ is smooth,
    \item $\rho_t^{-1} = \rho_{t^{-1}}$ is also smooth (hence $\rho_t \in \text{Diff}(M), \; \forall t$),
    \item by uniqueness of $\rho_t$, we have $\rho_{t+s} = \rho_t \circ \rho_s$.
\end{itemize}

\begin{center}
    \includestandalone[width=0.46\textwidth]{Figures/Part1/p46.3}
\end{center}

Then, we can equip the family  $\{ \rho_t \; | \; t \in \mathbb R \}$ with the structure of a group 
using $\circ$. Such a family is called the \textbf{one-parameter group of diffeomorphisms}. We also have the group homomorphism $\rho: \mathbb R \longrightarrow \text{Diff}(M)$ acting as $t \longmapsto \exp(tX)$, which gives a smooth $\mathbb R$ action on $M$.

\begin{center}
    \includestandalone[width=0.66\textwidth]{Figures/Part1/p47}\\
    \hypertarget{fig16}{Figure 16.} A smooth $\mathbb{R}$-action on $M$ through the one-parameter flow acting as an exponential map.
\end{center}

Furthermore, we have the bijection 
\begin{equation}
\begin{aligned}
    \{ \text{complete vector fields} \} &\longleftrightarrow \{ \text{smooth } \mathbb R \text{ actions on } M \} \\
    X \hspace{15mm} \hspace{1.7mm} &\longmapsto  \hspace{15mm} \exp (tX) \\
    X_p = \frac{d}{dt} \big|_{t=0} \psi_t(p) \hspace{5mm} \hspace{1.7mm} \hspace{-.8mm} &\longmapsfrom  \hspace{5mm} \psi_t
\end{aligned}
\end{equation}
for each $p$. 

\subsection{Symplectic Actions}
Let us now discuss symplectic actions. 

\begin{definition}
    
On a symplectic manifold $(M,\omega)$, the \textbf{action} $G \curvearrowright M$ is defined by a map:
\begin{equation}
\begin{aligned}
    \psi : G &\longrightarrow \text{Diff}(M)  \\
    g \ &\longmapsto \  \psi_g.
\end{aligned}
\end{equation}
$\psi$ is a \textbf{symplectic action} if $\psi : G \longrightarrow \text{Symp}(M) \subset \text{Diff}(M)$, namely, $\psi_g$ is a diffeomorphism such that $\psi_g^* \omega = \omega$, so it respects the symplectic form of $M$.

\end{definition}

\begin{definition}
A symplectic action $\psi$ of $\mathbb R$ (or $S^1$) on $(M,\omega)$ is \textbf{Hamiltonian} if the vector field generated by $\psi$ is Hamiltonian. 
\end{definition}

\begin{example}
Consider $(M,\omega) = (\mathbb R^2, dx\wedge dy)$ with the standard $x, y$ coordinates. Let 
\begin{equation}
    X = - \frac{\partial}{\partial y} \in \Gamma(T\mathbb R^2),
\end{equation}
and 
\begin{equation}
\begin{aligned}
    \psi^X : \mathbb R &\longrightarrow \text{Diff}(\mathbb R^2) \\
    t \ &\longmapsto \ \rho_t^X = \exp(tX)
\end{aligned}
\end{equation}
be the smooth action generated by $X$. Define $\rho_t^X(p) = \begin{pmatrix}
    \alpha_t(p) \\ \beta_t(p)
\end{pmatrix}$. Recall that $\rho_0^X(p) = p$ and $\frac{d}{dt} \rho_t^X(p) = X_{\rho_t^X(p)}$, so $\alpha_0(p) = p^1$, and $\beta_0(p) = p^2$ while 
\begin{equation}
    \dot\alpha_t(p) = 0 \quad ; \quad \dot\beta_t(p) = -1,
\end{equation}
since $X = 0 \partial_x - 1 \partial_y = (0,-1)$. Integrating this, we obtain 
\begin{equation}
    \rho_t^X(p) = \begin{pmatrix}
        p^1 \\ p^2-t
    \end{pmatrix},
\end{equation}
hence, the orbit $\mathcal O_p$ is all the vertical lines $\mathcal O_p = \{ \rho_t(p) \; | \; t\in \mathbb R \}$:
\begin{equation}
    p \sim q = \Bigg\{ \begin{pmatrix} p^1 \\ p^2-t \end{pmatrix} \Bigg| \ t \in \mathbb R \Bigg\}
\end{equation}
which is the line at $x= p^1$. If we look at $\cup_{p \in \mathbb R^2} \mathcal O_p$, we get all the lines parallel to the $y$-axis. 

Now, let us show that $X$ is a Hamiltonian vector field, that is to say, it obeys
\begin{equation}
    \iota_X \omega = df \quad \text{ for some }f.
\end{equation}
Looking at 
\begin{equation}
    \iota_X \omega(Y) = \omega \Big( - \frac{\partial}{\partial y} , Y \Big) = dx (Y) \quad \text{ for any }Y \in \Gamma(T\mathbb R^2),
\end{equation}
we see that $f(x,y) = x$ does the job, namely $\iota_X \omega = dx$. Hence, we conclude that the above action is Hamiltonian.
\end{example}

It is instructive to repeat the process for $(M, \omega) = (S^2, \omega = d\theta \wedge dh)$. Here $\theta$ and $h$ are the components of cylindrical coordinate system. By following similar steps, one can see that each orbit is a horizontal cycle and the action is Hamiltonian.

We have understood how to define a Hamiltonian action for $\mathbb R$ (or $S^1$) on a symplectic manifold $(M,\omega)$ using the 1-parameter group of diffeomorphisms. How about the Hamiltonian action of a more generic group $G$? Which notion is the right analogy to "Hamiltonian functions"? To build this, we need further tools.

\subsection{Adjoint and Coadjoint Actions}

Let $G$ be a Lie group, and $\mathfrak{g}$ its Lie algebra. We have seen before that the Lie algebra $\mathfrak g$ is  isomorphic to the space of left-invariant vector fields $\mathfrak g = L\Gamma(TG)$, which follows from the fact that the  Lie algebra isomorphism
\begin{equation}
    \big( L\Gamma(TG) , [\cdot,\cdot]_0 \big) \cong \big( T_e G, [\cdot,\cdot] \big).
\end{equation}
 Observe that any Lie group $G$ acts on itself by \textit{conjugation}:
\begin{equation}
\begin{aligned}
    \psi : G &\longrightarrow \text{Diff}(G) \\
    g &\longmapsto (\psi_g : G \to G),
\end{aligned}
\end{equation}
where $\psi_g : h \longmapsto ghg^{-1}$. Let us look at the pushforward of $\psi_g$, namely $(\psi_g)_{*,e}$, acting as
\begin{equation}
\begin{aligned}
    (\psi_g)_{*,e} : \underbrace{T_eG}_{\mathfrak g} & \xlongrightarrow{(d\psi_g)_e} \underbrace{T_{geg^{-1}} G}_{\mathfrak g},
\end{aligned}
\end{equation}
so we have induced a natural isomorphism from $\mathfrak g$ to $\mathfrak g$. We denote $(\psi_g)_{*,e} \in GL(\mathfrak g) $ by $$ \text{Ad}_g  : \mathfrak g \xlongrightarrow{\cong} \mathfrak g.$$

We then get the action of $G$ on the Lie algebra by this adjoint map
\begin{equation}
\begin{aligned}
    \text{Ad} : G & \longrightarrow GL(\mathfrak g)\\
    g &\longmapsto \text{Ad}_g,
\end{aligned}
\end{equation}
where $\text{Ad}_g : \mathfrak g \xrightarrow[]{\cong} \mathfrak g$. This is the \textbf{adjoint representation} of $G$ on the Lie algebra $\mathfrak g$. 

We calculate its pushforward $\text{Ad}_{*,e}$, which we call $\text{ad}$
\begin{equation}
    \text{ad} : T_e G \Longrightarrow T_{\text{Ad}_e} GL(\mathfrak g),
\end{equation}
or, equivalently
\begin{equation}
\begin{aligned}
    \text{ad}: \mathfrak g &\longrightarrow \mathfrak{gl(g)} \\
    X &\longmapsto \text{ad}_X,
\end{aligned}
\end{equation}
where $\text{ad}_X: \mathfrak g \to \mathfrak g$.

\begin{center}
    \includestandalone[width=0.82\textwidth]{Figures/Part1/p51}\\
    \hypertarget{fig17}{Figure 17.} The action of a group endomorphism, $\psi_g$, and its adjoint action, $(\psi_g)_{*,e}.$
\end{center}

Now, let $Y \in \mathfrak g$ and look at 
\begin{equation}
\begin{aligned}
    \text{ad}_X(Y) &= \text{Ad}_{*,e}(X)(Y) \\
    &= \frac{d}{dt} \bigg|_{t=0} \text{Ad} \big( \exp(tX) \big) (Y)  \\
    &= \frac{d}{dt} \bigg|_{t=0} \text{Ad}_{\exp(tX)}(Y)\\
    &= \frac{d}{dt} \bigg|_{t=0} \exp(tX) Y \exp(-tX) \\
    &= [X, Y]_0 .
\end{aligned}
\end{equation}
To get the last sign, one can either take the derivative straightforwardly, or use the Baker-Campbell-Hausdorff expansion to leading order in $t$, which are essentially equivalent.

\begin{example}
    
For $G$ a matrix Lie group, let $g \in G$ 
\begin{equation}
\begin{aligned}
    (\psi_g)_{*,e}(X) &= \frac{d}{dt} \bigg|_{t=0} \psi_g \circ \gamma(t) \\
    &= \frac{d}{dt} \bigg|_{t=0} \psi_g\big( \exp(tX) \big) \\
    &= \frac{d}{dt} \bigg|_{t=0} g \ \exp(tX) \ g^{-1} \\
    &= g X g^{-1},
\end{aligned}
\end{equation}
so $\text{Ad}_g: G \longrightarrow GL(\mathfrak g)$ acts as $g \longmapsto \text{Ad}_g$, where $\text{Ad}_g(X) = g X g^{-1}$, and thus 
\begin{equation}
\begin{aligned}
    \text{ad}: \mathfrak g &\longrightarrow \text{End}(\mathfrak g) \\
    X &\longmapsto \text{ad}_X,
\end{aligned}
\end{equation}
where $\text{ad}_X : \mathfrak g \longrightarrow \mathfrak g$, \ $Y \longmapsto \text{ad}_X(Y) = [X,Y]_0$. This is the adjoint representation of the Lie algebra $\mathfrak g$ on itself.
\end{example}

\begin{definition}

Consider a natural pairing between $\mathfrak g^*$ and $\mathfrak g$ (where $\mathfrak g^* = \text{Hom}(\mathfrak g, \mathbb R))$:
\begin{equation}
\begin{aligned}
    \langle \ , \ \rangle : \mathfrak g^* \times \mathfrak g &\longrightarrow \mathbb R \\
    (\xi, X) &\longmapsto \langle \xi, X \rangle = \xi(X).
\end{aligned}
\end{equation}
For some $\xi \in \mathfrak g^*$, we define $\text{Ad}^*_g \xi$ by 
\begin{equation}
\begin{aligned}
    \text{Ad}^*_g : \mathfrak g^* &\longrightarrow \mathfrak g^* \\
    \xi &\longmapsto \text{Ad}_g^* \xi \in \text{Hom}(\mathfrak g, \mathbb R).
\end{aligned}
\end{equation}
For $X \in \mathfrak g$, one has 
\begin{equation}
    (\text{Ad}^*_g \xi)(X) := \xi \big( \text{Ad}_{g^{-1}}(X) \big).
\end{equation}
We define the \textbf{coadjoint representation} of $G$ on $\mathfrak g^*$ to be the map
\begin{equation}
\begin{aligned}
    \text{Ad}^* : G &\longrightarrow GL(\mathfrak g^*) \\
    g  &\longmapsto ( \text{Ad}^*_g : \mathfrak g^* \xlongrightarrow[]{\cong} \mathfrak g^*).
\end{aligned}
\end{equation}
    
\end{definition}
 
\section{Moment Map}
We are now ready to define the so-called \textbf{moment map}. Recall that for  a symplectic manifold $(M,\omega)$, a symplectic action $\psi : M \longmapsto \text{Symp}(M) \subset \text{Diff}(M)$, with $\psi^*_g \omega = \omega$ for all $g\in G$, has been given for the cases $G = \mathbb R, \text{ or }S^1$ only. Now we would like to do the same for a generic group.

\begin{definition}

The action $\psi$ above is called a \textbf{Hamiltonian action} if there exists a map 
\begin{equation}
    \mu : M \longrightarrow \mathfrak g^*,
\end{equation}
called the \textbf{moment map} (or more appropriately the momentum map), such that 
\begin{enumerate}
    \item[(1)] $d \mu^X = \iota_{X^\sharp} \omega$, for all $X\in \mathfrak g$ where for each $X$,
    \begin{equation}
    \begin{aligned}
        \mu^X : M &\longrightarrow \mathbb R \\
        p &\longmapsto \mu^X(p),
    \end{aligned}
    \end{equation}
    such that $\mu^X(p) := \langle \mu(p), X \rangle$. The $\mu^X$ are called the \textit{components} of $\mu$ along $X$. That is to say, each component of $\mu$ is an ordinary Hamiltonian function. \\
    \big($\mu^X \rightsquigarrow X_{\mu^X} =: X^\sharp$, s.t. $f \mapsto X_f$, $df = \iota_{X_f} \omega$\big)

    $X^\sharp$ is a vector field on $M$ generated by one parameter family $\{ \exp(tX) \; | \; t \in \mathbb R \}$, i.e., $\mathfrak g \longrightarrow \Gamma(TM)$, $X \longmapsto X^\sharp$. We have 
    \begin{equation}
        X^\sharp_p f = \frac{d}{dt} \bigg|_{t=0 } f \big( \underbrace{ \overbrace{\exp(tX)}^{\in G} \overbrace{\blacktriangleright}^{\text{left-action } G \curvearrowright M} p }_{\in M} \big).
    \end{equation}
    \item[(2)] $\mu$ is $G$-equivariant with respect to the $G$ action on $M$ and with respect to the coadjoint action $\text{Ad}^*$ of $G$ on $\mathfrak g^*$:
    \begin{equation}
        \begin{tikzcd}[sep=1.7cm]
            G\times M \arrow[thick, r, "\psi_g (\text{ or } \blacktriangleright)"] \arrow[thick, d, "\mu", font=large] & H\times N \arrow[thick, d, "\mu"]\\ 
            \mathfrak{g}^* \arrow[thick, r, "\mathrm{Ad}^*"] & \mathfrak{g}^* \arrow[to path={node[midway,scale=2] {$\circlearrowright$}}]{}
        \end{tikzcd}
    \end{equation}
    where $\text{Ad}^*_g \circ \mu = \mu \circ \psi_g$ ($\mu$ must be compatible with both actions). 
\end{enumerate}
Such tuple will be called $(M,\omega,G,\mu)$ a \textbf{Hamiltonian $G$-space}.

\end{definition}

\begin{remark}
    Hamiltonian action for connected Lie groups can be recast by means of a \textbf{co-moment map}:
    \begin{equation}
        \mu^* : \mathfrak{g} \longrightarrow C^\infty (M).
    \end{equation}
\end{remark}

\begin{observation}
When $G = \mathbb R$ or $S^1$, $\mathfrak g \cong \mathbb R \cong \mathfrak g^*$, so $\mu$ reduces to $\mu : M \longrightarrow \mathbb R$ satisfying
\begin{enumerate}
    \item 
    \begin{equation}
        \begin{aligned}
            \mu^X : M &\longrightarrow \mathbb R\\
            p &\longmapsto \mu^X(p) = \mu(p) \cdot 1 = \mu(p).
        \end{aligned}
    \end{equation}
    So, $\mu^X = \mu$, $\forall X\in \mathfrak g \in \mathbb R$.
    \item $X^\sharp$ turns out to be the standard vector field generated by the $\mathbb R(\text{or} \;S^1)$-action: $\iota_{X^\sharp} \omega = d \mu$.
    \item $\mu$-equivariance $\equiv$ invariance of $\mu$, in the sense of
    \begin{equation}
        \mathcal L_{X^\sharp} \mu = X^\sharp \mu = d\mu(X^\sharp) = \iota_{X^\sharp} \omega(X^\sharp) = 0. 
    \end{equation}
\end{enumerate}
\end{observation}

As a result, we recovered $\mu$ as a usual Hamiltonian function, and the previous $\mathbb{R}$-action on $M$.

\begin{example}
    
Consider $(\mathbb C, \omega)$ where $\omega = \frac{i}{2} dz \wedge d\overline z = dx \wedge dy = r dr \wedge d\theta$, or if we write $z = x + iy$, $\omega = dx \wedge dy = r dr \wedge d\theta$, where $(r,\theta)$ is the polar coordinates of the plane parameterized by $(x,y)$. Let $S^1 \xlongrightarrow[]{\psi} \text{Diff}(\mathbb C)$ be the standard $S^1$-action on $\mathbb C$ given as
\begin{equation}
\begin{aligned}
    \psi_t : \mathbb C &\xlongrightarrow{\cong} \mathbb C \quad \quad \quad (\forall t \in S^1) \\
    z &\longmapsto t \cdot z,
\end{aligned}
\end{equation}
where $\psi_t(z) := t\cdot z = t z$, and in the last term, the multiplication is the ordinary complex multiplication. Observe that $T_e S^1 \cong \mathfrak g \cong \mathbb R$ is the Lie algebra of the circle group. The dual algebra $\mathfrak g^*$ is clearly again $\mathbb R$ as a vector space $\mathfrak g^* \cong \mathbb R$. \\
\end{example}

\begin{claim}

$\psi$ is a Hamiltonian action with a moment map
\begin{equation}
\begin{aligned}
    \mu: \mathbb C &\longrightarrow \mathfrak g^* \cong \mathbb R \\
    z &\longmapsto - \frac{1}{2} |z|^2 + c, \quad c \in \mathbb R.
\end{aligned}
\end{equation}
Using the $(r,\theta)$ coordinate chart, we have $\mu(z = re^{i\theta}) = -\frac{1}{2} r^2 + c$. As discussed before, $\mu^X = \mu$ for all $X \in \mathfrak g \cong \mathbb R$. Hence, $\mu$ reduces to the usual Hamiltonian function. \\
\end{claim}

\begin{claim}
$\mu$ is the Hamiltonian function for $X^\sharp := \frac{\partial}{\partial \theta}$, i.e. $\iota_{X^\sharp} \omega = d\mu$. Note that in the setup above, $X^\sharp$ reduces to the standard vector field generated by $\mathbb R$- or $S^1$-action. \\
\end{claim}

\begin{proof}

Note that $d\mu = -r dr$. Then, observe that
\begin{equation}
\begin{aligned}
    \iota_{X^\sharp} \omega &= \omega( X^{\sharp}, \cdot) = \omega \left( \frac{\partial}{\partial \theta} , \cdot  \right) \\
    &= rdr \otimes d\theta \left(\frac{\partial }{\partial \theta} , \cdot\right) - rdr \otimes d\theta \left(\cdot , \frac{\partial}{\partial \theta}\right) \\
    &= -rdr(\cdot) = d\mu.
\end{aligned}
\end{equation}
(Since $\mathcal L_{X^\sharp} \mu = 0$, $\mu$ is equivariant.) Thus, we have shown the desired equality. \\
\end{proof}

\begin{observation}

\begin{enumerate}
    \item The same result holds for $(\mathbb C^n, \omega)$ with $\omega = \frac{i}{2} \sum_i^n dz_i \wedge d\overline z_i$ or writing $z_i = x_i+iy_i$, $\omega = \sum_i^n dx_i \wedge dy_i = \sum_i^n r_i dr_i \wedge d\theta_i$. The moment map in this case is
    \begin{equation}
    \begin{aligned}
        \mu : \mathbb C^n &\longrightarrow \mathbb R  \\
        (z_1,\cdots z_n) &\longmapsto -\frac{1}{2} \sum_i^n |z_i|^2 + c.
    \end{aligned}
    \end{equation}
    Take $\psi$ to be the $S^1$-action on $\mathbb C^n$, it is again a Hamiltonian action. In that case, we have
    \begin{equation}
    \begin{aligned}
        d\mu &= -\dfrac{1}{2 } d \left( \sum_i^n r_i^2 \right) = - \sum_i^n r_i dr_i, \\
        X^\sharp &= \frac{\partial }{\partial \theta_1} + \cdots +\frac{\partial}{\partial \theta_n}, \\
        i_{X^\sharp} \omega &= - \sum_i^n r_i dr_i = d\mu.
    \end{aligned}
    \end{equation}
    \item When $c= \frac{1}{2}$, we have
    \begin{equation}
        \mu^{-1}(0) = \qty{ z \in \mathbb C^n \ \bigg| \ |z|=1 } \cong S^{2n-1}.
    \end{equation}
    Quotiening out by the $S^1$-action, we obtain
    \begin{equation} \label{mu inverse / S1 equals CPn-1}
        \mu^{-1}(0) \ / \  S^1 = S^{2n-1} \ / \ S^1 = \mathbb {CP}^{n-1}.
    \end{equation}
    \textbf{Special case:} For the case of $n=2$, we have the \textbf{Hopf fibration}:
    \begin{equation}
    \begin{aligned}
        S^1 \hookrightarrow\ &S^3 = \mu^{-1}(0) \\
        & \downarrow \\
        & S^2 = \underbrace{S^3\ / \ S^1}_{\mu^{-1}(0) \ / \ S^1} = \mathbb{CP}^1.
    \end{aligned}
    \end{equation}
    Considering $S^3 \cong SU(2)$ and $S^1 \cong U(1)$ as groups, we have the coset $S^2 \cong SU(2) /  U(1)$. In fact, the spheres in general can be defined as $S^{n-1} \cong SO(n)  / SO(n-1)$. 
    
    Equation (\ref{mu inverse / S1 equals CPn-1}) is related to the symplectic reduction theorem, which will be discussed later. We will see that, via the moment map, one can form a principal $G$-bundle. 
\end{enumerate}
\end{observation}
Recall also that $SU(2)$ is  the double cover of $SO(3)$, meaning that $SU(2) / \mathbb Z_2 \cong  SO(3)$. Moreover, $\mathbb R P^3 (\cong S^3/\mathbb Z_2)$ is diffeomorphic to $SO(3)$,   hence admits a group structure.\\
\newpage 
\begin{example}
    
Consider $\mathbb R^6$ with coordinates $(x_1,x_2,x_3,y_1,y_2,y_3)$ and with the usual symplectic form $\omega = \sum_{i=1}^3 dx_i \wedge dy_i$. Consider the $\mathbb R^3$-action on $\mathbb R^6$ given by translations
\begin{equation}
\begin{aligned}
    \mathbb R^3 &\xlongrightarrow[]{\psi} \text{Symp}(\mathbb R^6)\\
    \Vec{a} &\longmapsto \psi_{\Vec{a}},
\end{aligned}
\end{equation}
where 
\begin{equation}
    \psi_{\Vec{a}}(\Vec{x},\Vec{y}) := ( \Vec{x} + \Vec{a} , \Vec{y}). 
\end{equation}
Then we have
\begin{equation}
\begin{aligned}
    &(1)\quad \quad  X^\sharp = a_1 \frac{\partial}{\partial x_1} + a_2 \frac{\partial}{\partial x_2} + a_3 \frac{\partial}{\partial x_3} \quad \text{ for } X = \vec{a}, \\
    &(2)\quad \quad \mu: \mathbb R^6 \longrightarrow \mathbb R^3 \cong \mathfrak g^* \text{  is defined by  } \mu( \vec{x}, \vec{y})  := \vec{y} \text{  with components  } \\
    & \hspace{4cm} \mu^{\vec{a}} : \mathbb R^6 \longrightarrow \mathbb R \\
    &\hspace{4.26cm} (\vec{x}, \vec{y}) \longmapsto \mu(\vec{x}, \vec{y}) \cdot \vec a = \vec y \cdot \vec a,
\end{aligned}
\end{equation}
where $\cdot$ denotes the usual inner product: $\vec y \cdot \vec a = y_1a_1 + y_2a_2 + y_3a_3$. In this setup, $\vec y$ is the \textit{momentum vector} corresponding to \textit{position vector} $\vec x$, and $\mu$ is the \textit{linear momentum} map. \\
\end{example}

\begin{example}
    
Consider $SO(3)$-action on $\mathbb R^3$ by rotation 
\begin{equation}
    SO(3) \xlongrightarrow[]{\psi} \text{Diff}(\mathbb R^3).
\end{equation}
This lifts to a symplectic action $\psi$ on the cotangent bundle $T^* \mathbb R^3 \equiv \mathbb R^6$. The infinitesimal version of this action is
\begin{equation}
\begin{aligned}
    \mathbb R^3 &\longrightarrow \Gamma_{\text{symp}}(T^* \mathbb R^3) \\
    \vec a &\longmapsto d\psi (\vec a) ,
\end{aligned}
\end{equation}
where for any pair $(\vec x, \vec y) \in \mathbb R^6$, we have 
\begin{equation}
    d\psi(\vec a)(\vec x, \vec y)  = (\vec a \times \vec x, \vec a \times \vec y),
\end{equation}
with $\times$ the cross product $\vec a \times \vec x = \sum_{ijk} \varepsilon_{ijk} a_i x_j \hat{e}_k$ with $a_i$ the $i$th component of $\vec a$, $\varepsilon_{ijk}$ is the Levi-Civita symbol, and $\hat{e}_k$ is a basis of $\mathbb R^3$. We thus define the moment map
\begin{equation}
\begin{aligned}
    \mu : \mathbb R^6 \hspace{1mm} &\longrightarrow\hspace{3mm} \mathbb R^3 \cong \mathfrak g^* \\
    (\vec x, \vec y) &\longmapsto \mu(\vec x, \vec y) = \vec x \times \vec y,
\end{aligned}
\end{equation}
with components 
\begin{equation}
    \mu^{\vec a}(\vec x,\vec y) = \mu(\vec x,\vec y) \cdot \vec a = (\vec x \times \vec y) \cdot \vec a.
\end{equation}
In this context, $\mu$ is called the \textit{angular momentum}.
\end{example}

\begin{remark}

The name "moment map" comes from its role as the generalization of \emph{linear} and \emph{angular momenta} in classical mechanics.

\end{remark}

\section{Symplectic Reduction Theorem and Noether Principle}

\paragraph{Recalling our setup.} We defined the notion of moment map and Hamiltonian $G$-space as follows:

Let $(M,\omega)$ be a symplectic manifold, and $G$ a Lie group. Let $\psi: G \longrightarrow \text{Symp}(M)$ be a symplectic action of $G$ on $M$, that is to say, for each $g \in G$, $\psi_g: M \xlongrightarrow[]{\cong} M$ is a diffeomorphism such that $\psi^* \omega = \omega$ namely, we want for each $X_p,Y_p \in T_pM$ the following to hold
\begin{equation}
    \psi^*_\omega( X_p,Y_p) := \omega( \psi_{g*,p}(X_p), \psi_{g*,p}(Y_p) ) = \omega (X_p,Y_p).
\end{equation}
Diagrammatically, we have
\begin{equation}
\begin{tikzcd}[sep=1.7cm]
    T_pM\times T_pM \arrow[r,thick,"{\psi_{g,*},\psi_{g,*}}"]\arrow[d,thick, "\psi_\omega^*"] & T_qM\times T_qM \arrow[ld, thick, "\omega"]\\ 
    \mathbb{R} & {}
\end{tikzcd}
\end{equation}
where $q = \psi_g(p)$. Then, $\psi$ is called a \textbf{Hamiltonian action} if there exists a moment map
\begin{equation}
\begin{aligned}
    \mu : M &\longrightarrow \mathfrak g^* \\
    p &\longmapsto \mu(p),
\end{aligned}
\end{equation}
satisfying:
\begin{enumerate}
    \item $\boxed{\text{For each } X \in \mathfrak g$, $i_{X^\sharp} \omega = d\mu^X}$ with $X^\sharp$, a vector field on $M$, generated by $\{ \exp(tX): t\in \mathbb R \}$ and a $G$-action on $M$. That is to say, each "component" $\mu^X$ gives a Hamiltonian function for $X^\sharp$. Recall that $\mu^X$, the component of $\mu$ along $X$, is defined as
    \begin{equation}
    \begin{aligned}
        \mu^X : M &\longrightarrow \mathbb R \\
        p &\longmapsto \langle \mu^*(p) , X \rangle ,
    \end{aligned}
    \end{equation}
    where we use the natural pairing between $\mathfrak g^*$ and $\mathfrak g$.

    To see that each component gives a Hamiltonian function, observe that there indeed exists a natural map
    \begin{equation}
        \mathfrak g \longrightarrow \Gamma(TM), \;
        X \mapsto X^\sharp,
    \end{equation}
    where for each $p \in M$ and $f \in C^\infty(M)$, we define
    \begin{equation}
        X^\sharp_p f = \frac{d}{dt} \Big|_{t=0} f \circ \gamma := \frac{d}{dt} \Big|_{t=0} f \underbrace{\Big( \exp(tX) \bullet p  \Big)}_{:= \gamma(t)} .
    \end{equation}

    \begin{center}
     \includestandalone[width=0.6\textwidth]{Figures/Part1/p61}\\
     \hypertarget{fig18}{Figure 18.} The point, $p$, is taken by the path through multiplication with exponentiation of tangent vector. 
    \end{center}

    The first equality is the usual definition of a vector field acting on a $C^{\infty}(M)$ function, and the second is where the setup comes into play. In the right-hand side, $\bullet$ is the left action of $\exp(tX)\in G$ on the point $p$. 
    \item $\boxed{\mu-\text{equivariance}}$ with respect to the $G$-action and co-adjoint action $\text{Ad}^*$. Namely, for each $g \in G$, the following diagram commutes 
    \begin{equation}
        \begin{tikzcd}[sep=1.5cm]
            M \arrow[thick,r,"\psi_g"] \arrow[thick,d,swap,"\mu"] & M \arrow[thick,d,"\mu"]
            \\
            \mathfrak{g} \arrow[thick,r, "\mathrm{Ad}_g^*"] &  \mathfrak{g}\arrow[to path={node[midway,scale=1.5] {$\circlearrowright$}}]{}
        \end{tikzcd}
    \end{equation}
    This means that $\mu \circ \psi_g = \text{Ad}_g^* \circ \mu$, where 
    \begin{equation}
    \begin{aligned}
        \text{Ad}^* : G &\xlongrightarrow[\phantom{as}]{} \text{GL}(\mathfrak g^*) \\
        g \hspace{0.5mm} &\longmapsto \phantom{a} \text{Ad}^*_g,
    \end{aligned}
    \end{equation}
    and $\text{Ad}^*_g : \mathfrak g^* \xlongrightarrow[]{\cong} \mathfrak g^*$ is an isomorphism of the dual Lie algebra $\mathfrak g^*$.
\end{enumerate}
The space $(M,\omega)$ equipped with a Hamiltonian action (symplectic action + the existence of moment map above) $G$ on $M$ and corresponding moment map $\mu$ is called a Hamiltonian $G$-space, denoted by $(M,\omega, G,\mu)$. \\

\begin{theorem}

\textbf{(Marsden-Weinstein-Meyer) Symplectic Reduction Theorem:} \\
Let $(M, \omega, G, \mu)$ be a Hamiltonian $G$-space for a compact Lie group $G$. Let $i: \mu^{-1}(0) \hookrightarrow M$ be the natural inclusion map. Assume that the $G$-action is free on $\mu^{-1}(0) \subset M$. Then
\begin{enumerate}
    \item The orbit space $\mu^{-1}(0)/G$ is a manifold,
    \item $\pi: \mu^{-1}(0) \longrightarrow \mu^{-1}(0)/G$ is a principal $G$-bundle,
    \item $\omega$ on $M$ induces a symplectic form $\tilde\omega$ on $\mu^{-1}(0)/G$ such that $i^* \omega = \pi^* \tilde\omega$.
\end{enumerate}
Diagrammatically,
\begin{equation}
    \begin{tikzcd}[sep=1.5cm]
       \boxed{ i^*\omega = \pi^*\tilde{\omega}} \arrow[dr, thick, dashrightarrow, no head] & {} & {} & \boxed{\omega }\arrow[lll, thick, swap, dashrightarrow,"\text{pull-back via }i "] \arrow[dl, thick, dashrightarrow, no head]\\ 
        {} & \mu^{-1}(0) \arrow[r, thick, hook, "i"] \arrow[d, thick, "\pi"] & M & {} \\
        \boxed{\tilde{\omega}} \arrow[r,thick, dashrightarrow, no head] \arrow[uu, thick, "\text{pull-back via } \pi", dashrightarrow, swap] & \mu^{-1}(0)/G & {} & {}
    \end{tikzcd}
\end{equation}
The data $(\mu^{-1}(0)/G, \tilde\omega)$ is called the \textbf{reduced space}, or the \textbf{symplectic quotient} with respect to the $G$-action and the moment map $\mu$.
    
\end{theorem}

There will be an extensive discussion on principal $G$-bundles later, but roughly speaking a $G$-bundle is
\begin{itemize}
    \item Fiber bundle $\pi :P \longrightarrow M$ such that each fiber is a ``copy" of $G$, i.e., for each $p\in M$, $\pi^{-1}(\pi(p)) = G$.
    \item Furthermore, the orbit $\mathcal O_p$ of point $p$ under $G$ is isomorphic to $G$, $\mathcal O_p \cong G$, since the action is free. Namely, if $p,q \in \pi^{-1}(p)$, then there exists a unique $g \in G$ such that $q = g \bullet p$.

   \begin{center}
    \includestandalone[width=0.7\textwidth]{Figures/Part1/p62}\\
        \hypertarget{fig19}{Figure 19.} Local picture of a principal $G$-bundle.
    \end{center}

\end{itemize}

\begin{example}
    
We discussed $(\mathbb C^n, \omega)$ with $S^1$-action before, where $\omega = \frac{i}{2} \sum_{j=1}^n dz_j \wedge d \overline z_j$ and $z_i = x_i + i y_i$, $\overline z_i = x_i - iy_i$ are the coordinates of $\mathbb C^n$. The free action of $S^1$ is given by ordinary complex multiplication
\begin{equation}
    g \bullet z = g\bullet (z_1,\cdots z_n) := (g z_1, \cdots g z_n), \quad \forall z \in \mathbb C^n.
\end{equation}
Define a moment map $\mu: \mathbb C^n \longrightarrow \mathfrak g^* \cong \mathbb R$ by 
\begin{equation}
    \mu(z) := - \frac{|z|^2}{2} + \frac{1}{2}.
\end{equation}
We observe that $(\mathbb C^n, \omega, U(1) \cong S^1, \mu)$ forms a Hamiltonian $G$-space with $G = U(1)$. The inverse of the moment map reads
\begin{equation}
    \mu^{-1}(0) = \big\{ z \in \mathbb C^n \big| \ |z|=1 \big\} \cong S^{2n-1} \hookrightarrow \mathbb C^n,
\end{equation}
and 
\begin{equation}
    \mu^{-1}(0)/G = S^{2n-1}/S^1 \cong \mathbb{CP}^{n-1}.
\end{equation}
By the symplectic reduction theorem, we have a principal $G$-bundle:
\begin{equation}
\begin{aligned}
    S^1 \ \curvearrowright \ \mu^{-1}(0) &= S^{2n-1} \hookrightarrow \mathbb C^n \\
    \pi &\Bigg\downarrow \\
    \mu^{-1}(0)&/S^1 \cong \mathbb{CP}^{n-1}.
\end{aligned}
\end{equation}
For the case $n=2$, we get the Hopf fibration
\begin{equation}
\begin{aligned}
    S^1 \ \curvearrowright \ &  S^3 \\
    \pi &\Bigg\downarrow \\
    S^3&/S^1 \cong S^2.
\end{aligned}
\end{equation}
\end{example}

\begin{theorem}
\textbf{(Noether):} 

Let $(M,\omega, G,\mu)$ be a Hamiltonian $G$-space. A function $f: M \longrightarrow \mathbb R$ is $G$-invariant (in the sense of $\mathcal L_{X^\sharp} f = 0$, with $X^\sharp$ a vector field generated by the $G$-action on $M$) or a conserved quantity if and only if $\mu$ is constant on the trajectories of the Hamiltonian vector field $X_f$ of $f$ (in the sense of $\mathcal L_{X_f} \mu^X = 0$ $\forall X \in \mathfrak g$).

\end{theorem} 

\begin{proof}
    
Let $X_f$ be the Hamiltonian vector field of $f$ (i.e., $i_{X_f} \omega = df$). Let $X \in \mathfrak g$, and 
\begin{equation}
\begin{aligned}
    \mu^X: M &\longrightarrow \mathbb R  \\
    p &\longmapsto \langle \mu^X(p), X \rangle,
\end{aligned}
\end{equation}
be the component of $\mu$ along $X$ so that $\mu^X(p) \in \mathfrak g^*$. 

Then we have 
\begin{equation}
\begin{aligned}
    \mathcal L_{X_f} \mu^X = X_f \mu^X &= d\mu^X(X_f) = i_{X_f} d\mu^X \\
    &= i_{X_f} i_{X^\sharp} \omega \quad \Big( = \omega(X^\sharp, X_f) = - \omega(X_f, X^\sharp) \Big) \\
    &= -i_{X^\sharp} i_{X_f} \omega \\
    &= -i_{X^\sharp} df  \\
    &= df(X^\sharp) \quad \Big( = -X^\sharp f \Big) \\
    &= -\mathcal L_{X^\sharp} f.
\end{aligned}
\end{equation}
In the first line, we use the pairing of $d\mu^X$, a one-form, with $X_f$, a vector field, to produce a scalar; in the second, we use the definition of the interior product; in the third, we use $d\mu^X = i_{X^\sharp} \omega$ for all $X \in \mathfrak g$; in the fourth, we use the anti-symmetry of the $\omega$, the symplectic form;  in the fifth, we use the definition of a Hamiltonian vector field $i_{X_f}\omega = df$; in the sixth, we again use the definition of interior product; and in the seventh line, we use the construction of $df$ via $i_{X^\sharp}$. Therefore, we have
\begin{equation}
    \mathcal L_{X_f} \mu^X = 0 \Longleftrightarrow \mathcal L_{X^\sharp} f = 0,
\end{equation}
which proves the theorem. 
\end{proof} 

\begin{definition}

A $G$-invariant function $f: M \longrightarrow \mathbb R$ is called an \textbf{integral of motion} of $(M,\omega, G, \mu)$.

If $\mu$ is constant along the trajectories of a Hamiltonian vector field $X_f$ for $f$, then the corresponding one-parameter group of diffeomorphisms $\{ \exp(tX_f) \ | \ t\in \mathbb R\}$ is called a \textbf{symmetry} of $(M,\omega, G, \mu)$.
\end{definition}

\begin{observation}
Noether's theorem says that there exists a one-to-one correspondence between symmetries of the system and integrals of motion. 
\end{observation}

\begin{example}
\begin{enumerate}
    \item Translation invariance $\Longleftrightarrow$ conservation of linear momentum. The moment map is given as $\mu: \mathbb R^6 \longrightarrow \mathbb R^3$, $(\vec{x},\vec{y}) \longmapsto \vec{y}$, with components
\begin{equation}
\begin{aligned}
    \mu^{\vec{a}} : \mathbb R^6 &\xlongrightarrow[\phantom{asd}]{} \mathbb R \\
    (\vec x, \vec y) &\longmapsto  \mu(\vec x, \vec y) \cdot \vec a= \vec y \cdot \vec a.   
\end{aligned}
\end{equation}
    \item Rotational invariance $\Longleftrightarrow$ conservation of angular momentum. The moment map is given by $\mu: \mathbb R^6 \longrightarrow \mathbb R^3$, $(\vec{x},\vec{y}) \longmapsto \vec{x} \times \vec{y}$, with components
    \begin{equation}
    \begin{aligned}
        \mu^{\vec{a}} : \mathbb R^6 &\xlongrightarrow[\phantom{asd}]{} \mathbb R  \\
        (\vec x, \vec y) &\longmapsto \mu(\vec x, \vec y) \cdot \vec{a} =  (\vec x \times \vec y) \cdot \vec a.
    \end{aligned}
    \end{equation}
    \item Time translation invariance $\Longleftrightarrow$ conservation of energy.
\end{enumerate}
\end{example}

\section{Theory of Principal $G$-bundles}

\subsection{Some Terminology and Background}
\begin{definition}
    Let $G$ be a Lie group, a bundle $(P,\pi,M)$ is called a \textbf{principal $G$-bundle} if:
    \begin{enumerate}
        \item $P$ is a right $G$-space (i.e., there exists a right smooth $G$-action on $P$):
        \begin{equation}
            \begin{aligned}
                P \times G &\longrightarrow P
                \\
                (p, g) &\longmapsto p \cdot g.
            \end{aligned}
        \end{equation}
        \item $G$-action on $P$ is free (i.e., $\text{stab}(p) = \{e\}$ $\forall p \in P$).
        \item One can write a diagram of the \textbf{bundle map}
        \begin{equation}
            \begin{tikzcd}[sep= 1.5cm]
                P \arrow[d, thick, "\pi"] & P \arrow[d, thick, "\pi'"] \\ 
                M & P/G \arrow[to path={node[midway,scale=1.5] {$\cong$}}]{}
            \end{tikzcd}
        \end{equation}
        
        where $P/G$ is the orbit space and 
        \begin{equation}
            \begin{aligned}
                \pi' : P &\longrightarrow P/G \; \;\text{quotient map}
                \\
                x &\longrightarrow [x].
            \end{aligned}
        \end{equation}
    \end{enumerate}
\end{definition}

\begin{definition}
A \textbf{bundle map} between two bundles $(E \xlongrightarrow[]{\pi} M) \longrightarrow (E'\xlongrightarrow[]{\pi'} M')$ is a pair of smooth maps:
\begin{equation}
    f: E \longrightarrow E' \;\; \text{and} \;\; g: M \longrightarrow M'
\end{equation}
such that the diagram
\begin{equation}
    \begin{tikzcd}[sep = 1.5cm]
        E \arrow[d,thick, "\pi"] \arrow[r, thick, "f"] & E' \arrow[d, thick, "\pi'"] \\
        M \arrow[r,thick,"g", swap] & M'
        \arrow[to path={node[midway,scale=1.5] {$\circlearrowright$}}]{}
    \end{tikzcd}
\end{equation}
commutes: $\pi'\circ f = g\circ \pi$.

The third item in the previous definition gives
\begin{equation}
    \begin{tikzcd}[sep= 1.5cm]
        P \arrow[d, thick, "\pi"] \arrow[r, thick, "\text{id}_p"] & P \arrow[d, thick, "\pi'"] \\ 
        M \arrow[r, thick, "\cong", swap]& P/G \arrow[to path={node[midway,scale=1.5] {$\circlearrowright$}}]{}
    \end{tikzcd}
\end{equation}
which can also be written as 
    \begin{equation}
        \begin{tikzcd}[sep=1.5cm, remember picture]
            {} & |[alias=E]| P \arrow[ld,thick,"\pi", swap] \arrow[rd, thick, "\pi'"]& {} \\
            |[alias = Y]| M \arrow[rr,thick,"\cong", swap] & & |[alias = X]| P/G & 
        \end{tikzcd}
        \tikz[overlay,remember picture]{%
        \node (Y1) [scale=1.5] at (barycentric cs:X=1,E=1,Y=1) {$\circlearrowright$}}
    \end{equation}
\end{definition}

\begin{observation}
    Here, we make some immediate observations. Given a principle $G$-bundle, $(P, \pi,M)$, $(P \xlongrightarrow[]{\pi} M)$:
\begin{enumerate}
    \item Since $G$-action is free we have: \textbf{$\theta_p \cong G$} $\forall p \in P$. Indeed, we define:
    \begin{equation}
        \begin{aligned}
            \theta_p &\xlongrightarrow{\varphi} G
            \\
            q &\longrightarrow g_q
            \\
            p &\longrightarrow e.
        \end{aligned}
    \end{equation}

As a side note, if $q\in\theta_p$ then there exists a unique $g_q \in G$ such that $p \cdot g_q = q$. Furthermore, since the action is free $p\cdot g = p \iff g=e$. Otherwise the map would be ill-defined.\\
Conversely, we define:
\begin{equation}
    \begin{aligned}
        \psi: G &\longrightarrow \theta_p 
        \\
        g &\longrightarrow p \cdot g.
    \end{aligned}
\end{equation}
Notice that:
\begin{equation}
    \begin{aligned}
        &\psi \circ \varphi \;  (q) = \psi(g_q) = p \cdot g_q = g 
        \\
        & \varphi \circ \psi \; (g) = \varphi(p \cdot g =q) = g
    \end{aligned}
\end{equation}
implies $\theta_p \cong_\varphi G $.

\item Since
\begin{equation}
    \begin{tikzcd}[sep= 1.5cm]
        P \arrow[d, thick, "\pi"] & P \arrow[d, thick, "\pi'"] \\ 
        M & P/G \arrow[to path={node[midway,scale=1.5] {$\cong$}}]{}
    \end{tikzcd},
\end{equation}

each fiber $(\pi')^{-1} ([x]) \cong \theta_{p\in(\pi')^{-1} ([x])} \cong G \; \;\forall p \;.$ 

Each fiber is a copy of $G$.
\hyperlink{fig19}{Figure 19} shows the local picture of a principal $G$-bundle.

Here, both $p$ and $q$ lie on the same fiber $\pi(p)$ whenever there exists a unique $g\in G$ such that $q=p\circ g$. This construction will be much clearer later.
\end{enumerate}
\end{observation}

\begin{definition}
    Let 
    \begin{equation*}
        \begin{tikzcd}[sep=1.2cm]
            P \arrow[r, thick, "\cdot G"] & P \arrow[d,thick, "\pi"] \\ {} & M 
        \end{tikzcd} \qquad\text{and}\qquad
        \begin{tikzcd}[sep=1.2cm]
            P' \arrow[r, thick, "\cdot'G'"] & P'\arrow[d, thick, "\pi'"] \\ {} & M'
        \end{tikzcd}
    \end{equation*}
     be two principal bundles. Also let $\rho : G \longrightarrow G'$ be a Lie Group homomorphism. Then \textbf{a principle bundle map} is a pair of smooth maps $(u,f)$,

    \begin{equation}
        u: P\longrightarrow P' \;\; , \;\; f: M\longrightarrow M'
    \end{equation}

    such that the diagram

    \begin{equation}\label{eq:longdiagram}
        \begin{tikzcd}[sep=1.5cm, remember picture]
            |[alias=A]| p \arrow[r, thick, "u"] &|[alias=B]| p' \\
           |[alias=C]| p \arrow[u,thick, "\cdot G"] \arrow[r,thick, "u"] \arrow[d, thick, swap, "\pi"] & |[alias=D]|p' \arrow[u, thick, swap, "\cdot' G'"] \arrow[d, thick, "\pi'"] \\ 
           |[alias=E]| M \arrow[r, thick, "f"] &|[alias=F]| M'
        \end{tikzcd}
        \tikz[overlay,remember picture]{%
        \node (Y1) [scale=1.5] at (barycentric cs:A=1,B=1,C=1,D=1) {$\circlearrowright$};
        \node (Y2) [scale=1.5] at (barycentric cs:C=1,D=1,E=1,F=1) {$\circlearrowright$}}
    \end{equation}
    commutes. 
\end{definition}

The upper square in Eq.~\ref{eq:longdiagram} comes from the "principal" nature of a bundle, i.e., compatibility condition with group action:
    \begin{equation}
        u(p\cdot g) = u(p) \cdot' \rho(g) \ \  \forall g \in G,
    \end{equation}
where $\cdot $ is the $G$-action, and $\cdot'$ is the $G'$-action. The bottom diagram in the definition comes from the "bundle" structure, and being in fact a bundle map.
    \begin{equation}
         \pi' \circ u = f \circ \pi.
    \end{equation}
Therefore, one can say that a principal bundle map is a bundle map compatible with "group actions".
\begin{definition}
    A principal $G$-bundle 
    \begin{equation*}
        \begin{tikzcd}[sep = 1.2cm]
            P \arrow[r, thick, "\cdot G"] & P \arrow[d, thick, "\pi"] \\ {} & M
        \end{tikzcd}
    \end{equation*}
    is called \textbf{trivial}, if there exists a principal bundle map
    \begin{equation}
        \varphi : P \longrightarrow M \times G.
    \end{equation} 
    Here, we take $M \xlongrightarrow{\text{id}_M} M$. In the above notation: $u \longrightarrow \varphi$ and $f \longrightarrow id_M$.
    \begin{equation}
        \begin{tikzcd}[sep=1.5cm, remember picture]
            |[alias=A]| p \arrow[rr, thick, "\varphi"]& {} & |[alias=B]| M\times G \\ 
            |[alias=C]| p \arrow[rd, thick, swap, "\pi"] \arrow[u, thick, "\cdot G"] \arrow{rr}{\cong}[swap, thick]{\varphi} & {} & |[alias=D]| M\times G \arrow[u,thick, swap, "\circ G"] \arrow[ld, thick, "\pi'"] \\ 
            {} & |[alias=E]| M & {}
        \end{tikzcd}
        \tikz[overlay,remember picture]{%
        \node (Y1) [scale=1.5] at (barycentric cs:A=1,B=1,C=1,D=1) {$\circlearrowright$};
        \node (Y2) [scale=1.5] at (barycentric cs:C=1,D=1,E=1) {$\circlearrowright$}}
    \end{equation}
\end{definition}
Note that:
\begin{enumerate}
    \item $G$-action on $P$ induces right $G$-action on $M \times G$:
    \begin{equation}
        \begin{aligned}
            (M\times G)\times G &\longrightarrow M\times G
            \\
            ((x,g),\Tilde{g}) &\longrightarrow (x,g\Tilde{g}).
        \end{aligned}
    \end{equation}
    i.e. \; $(x,g) \circ \Tilde{g} := (x,g\Tilde{g}) \;\; \forall \Tilde{g} \in G$.
    \item $P \xlongrightarrow[\varphi]{\Tilde{\circ}}$ is a diffeomorphism. 
    \item $\pi' (x,g) = x$ \; $\forall (x,g) \in M \times G$.
\end{enumerate}

\begin{theorem}
    A principal $G$-bundle 
    \begin{equation}
        \begin{tikzcd}[sep=1.2cm]
            P \arrow[r, thick, "\cdot G"] & P \arrow[d, thick, "\pi"] \\ {} & M    
        \end{tikzcd}
    \end{equation} 
    is trivial if and only if there exists a smooth (global) section $\sigma : M \longrightarrow P \;\;\;(\pi \circ\sigma = \text{id}_M)$.
\end{theorem}
Note that there was a similar theorem we saw before when discussing the theory of vector bundles.
\begin{proof}
    For the $\implies$ part of the theorem, assume that the bundle  is trivial. Then, there exists a principal bundle map, $ P \xlongrightarrow[\cong(\text{diffeo.})]{\varphi}M\times G$, with the commuting diagram

\begin{equation}
    \begin{tikzcd}[sep=.8cm, remember picture]
        |[alias=A]| p \arrow[rr, thick, "\varphi"]& {} & |[alias=B]| M\times G \\ 
        |[alias=C]| p \arrow[rd, thick, swap, "\pi"] \arrow[u, thick, "\cdot G"] \arrow{rr}{\cong}[swap, thick]{\varphi} & {} & |[alias=D]| M\times G \arrow[u,thick, swap, "\circ G"] \arrow[ld, thick, "\pi'"] \\ 
        {} & |[alias=E]| \arrow[ul, dashed, thick, "\sigma", bend left=60]M & {}
    \end{tikzcd}
\end{equation}
where
\begin{itemize}
    \item[a)] $\varphi$ is a diffeomorphism.
    \item[b)] $\pi' \circ \varphi = \pi$.
    \item[c)] $\varphi(p\cdot g) = \varphi(p) \circ g$.
\end{itemize}
Then, we define $\sigma : M \longrightarrow P$ by:
\begin{equation}
    \sigma(x) := \varphi^{-1}(x,e) \;\; \text{where} \; \; (x,e) \in (\pi')^{-1} (x),
\end{equation}
with $\varphi^{-1}$ being smooth. Notice that:
\begin{equation*}
\begin{aligned}
    (\pi \circ \sigma)(x) &= \pi(\varphi^{-1}(x,e)) = (\pi' \circ \varphi \circ \varphi^{-1})(x,e) = x,
    \\
    \pi \circ \sigma &= \text{id}_M.
\end{aligned}
\end{equation*}
where the middle equality in the first line comes from $(b)$.

For the $\impliedby$ part, we assume that there exists a section $\sigma: M\longrightarrow P$, where $\sigma$ is smooth, and $\pi \circ \sigma = \text{id}_M$. We then need to construct a principle bundle map (which is equipped with compatibility conditions and diffeomorphism) such that $P \underbrace{\cong}_{\text{diffeo.}} M \times G$.
\end{proof}

Now, let $p \in P$. Then $\sigma(\pi(p))$ and $p$ lie in the same fiber over $\pi(p)$. Then,     there exists a unique $g_{\sigma}(p) \in G$ such that:
    \begin{equation}
        \sigma(\pi(p)) \cdot g_{\sigma}(p) = p,
    \end{equation}
    where $g_\sigma(p)$ depends both on $\sigma$ and $p$. More explicitly,
    \begin{equation}
        \begin{aligned}
            g_\sigma: P &\longrightarrow G\\
            p &\longrightarrow g_\sigma(p).
        \end{aligned}
    \end{equation}
    \begin{center}
        \includestandalone[width=0.45\textwidth]{Figures/Part1/p71} \\
        \hypertarget{fig20}{Figure 20.} Local picture of a trivial principal $G$-bundle with a local section, $\sigma$.
    \end{center}
    Then, we define
    \begin{equation}\label{eq:ssdiagramtwopart}
        \begin{tikzcd}[sep=1.5cm, remember picture]
            |[alias=A]| p \arrow[r, thick, "\varphi_\sigma"] & |[alias = B]| M\times G \\ 
            |[alias = C]| p \arrow[u, thick, "\cdot G"] \arrow[r, thick, "\varphi_\sigma"] \arrow[d, thick, "\pi"] & |[alias = D]| M\times G \arrow[u, thick, "\circ G"] \arrow[ld, thick, "\pi'"] \\ 
            |[alias=E]| M & {}
        \end{tikzcd}
        \tikz[overlay,remember picture]{%
        \node (Y1) [scale=1.2] at (barycentric cs:A=1,B=1,C=1,D=1) {II};
        \node (Y2) [scale=1.2] at (barycentric cs:C=1,D=1,E=1) {I};}
    \end{equation}
    \begin{equation}
        \begin{aligned}
            \varphi_{\sigma}: P &\longrightarrow M \times G
            \\
            P &\longrightarrow (\pi(p) , g_{\sigma}(p)).
        \end{aligned}
    \end{equation}
    Clearly, $\varphi_\sigma$ is smooth (since both $\pi$ and $g_{\sigma}$ are smooth). We need to show the commutativity of the diagrams I, II. Before, let us put forward a lemma and prove it.
    \begin{lemma}
        $g_{\sigma}(p \cdot g) = g_{\sigma}(p) \cdot g \;\; \forall g \in G$, where the right hand side has the multiplication in $G$.
    \end{lemma}
    \begin{proof}
        To prove this lemma, consider the setup in \hyperlink{fig20}{Figure 20.} For all $ g \in G$; $p \cdot g$ lies in the fiber over $\pi(p)$. Then, there exists a unique (uniqueness comes from the freeness of the action) $ g_{\sigma}(p \cdot g) \in G$ such that:
        \begin{equation}
        \label{eq:1.239}
            \sigma(\pi(p)) \cdot g_{\sigma}(p\cdot g) = p \cdot g.
        \end{equation}
        On the other hand we have:
        \begin{equation}
        \label{eq:1.240}
            \sigma(\pi(p)) \cdot g_\sigma (p) = p.
        \end{equation}
        Act on \ref{eq:1.240} by $g$
        \begin{equation}
        \begin{aligned}
            \label{eq:1.241}
            (\sigma(\pi(p))\cdot g_{\sigma}(p))\cdot g &= p \cdot g
            \\
            &= \sigma(\pi(p)) \cdot \underbrace{g_{\sigma}(p) g }_{\in \mathbb{C}},
        \end{aligned}
        \end{equation}
        where we have used the definition of action. Then, by using \ref{eq:1.239} we have:
        \begin{equation}
            \sigma(\pi(p)) \cdot g_{\sigma}(p) \cdot g \underbrace{=}_{\text{by} \: \ref{eq:1.241}} p \cdot g \underbrace{=}_{\text{by} \: \ref{eq:1.239}} \sigma(\pi(p)) \cdot g_{\sigma}(p \cdot g) 
        \end{equation}
        Since the action is free, we get:
        \begin{equation}
            g_{\sigma}(p) g = g_{\sigma}(p \cdot g). 
        \end{equation}
        $\forall g \in G$ and $\forall p \in P$.
    \end{proof}
    Now, we can see that the sub-diagram I in Eq.~\ref{eq:ssdiagramtwopart} commutes since
    \begin{equation*}
        \begin{aligned}
            (\pi' \circ \varphi_{\sigma}) (p \cdot g) &= \pi' \left(\pi (p\cdot g), g_{\sigma} (p \cdot g) \right)
            \\
            &= \pi(p\cdot g)
            \\
            & \Longrightarrow \pi' \circ \varphi_{\sigma} = \pi.
        \end{aligned}
    \end{equation*}
    Sub-diagram II also commutes:
    \begin{equation*}
        \begin{aligned}
            \varphi_{\sigma}(p \cdot g) &= \left( \pi(p \cdot g),g_{\sigma}(p\cdot g)\right)
            \\
            &\underbrace{=}_{\text{Lemma}} (\pi(p)), g_{\sigma}(p)g)
            \\
            &= (\pi(p),g_{\sigma}(p)) \circ g
            \\
            &= \varphi_{\sigma}(p)\circ g.
        \end{aligned}
    \end{equation*}
    Here $p \cdot g$ and $p$ lie in the same fiber over $\pi(p)$. Also, on the third line $\circ$ comes from the definition of induced action. So, $\varphi_\sigma$ is a principal bundle map. 
    \begin{claim}
        $\varphi_{\sigma}$ is a diffeomorphism.
    \end{claim}
    \begin{proof}
        Firstly, $\varphi_\sigma$ is smooth. To prove the claim, we need to define the smooth inverse of $\varphi_{\sigma}$. To do so, we define
        \begin{equation*}
            \psi_{\sigma}:\;\; M \times G \longrightarrow P,
        \end{equation*}
        by
        \begin{equation*}
            \psi_{\sigma}(x,g) := \sigma(\pi(x,g)) \cdot g = \sigma(x) \cdot g,
        \end{equation*}
        where $\sigma(x) \in P$, and both $\sigma$ and $\cdot$ are smooth.
        \begin{center}
            \includestandalone[width=0.40\textwidth]{Figures/Part1/p73} \\ 
            \hypertarget{fig21}{Figure 21.} Action of a section, $\sigma$, on a fiber.
        \end{center}
        First, observe that $\sigma(x)$ and $\sigma\left(\pi(\sigma(x))\right)$ lie in the same fiber over $x$. So, there exists a unique $g_{\sigma}(\sigma(x)) \in G$ such that:
        \begin{equation*}
            \underbrace{\sigma(\pi(\sigma(x)))}_{\sigma(x) \: \text{since} \ \ \pi \circ \sigma=id_M} \cdot \ g_{\sigma}(\sigma(x)) = \sigma(x).
        \end{equation*}
        Freeness implies that $g_{\sigma}(\sigma(x)) = e$. Then, for $(x,g) \in M \times G$ we have:
        \begin{equation*}
            \begin{aligned}
                (\varphi_\sigma \circ \psi_\sigma)(x,g) &= \varphi_\sigma (\sigma(x)\cdot g)
                \\
                &= ( \pi(\sigma(x)\cdot g), \underbrace{g_\sigma(\sigma(x) )}_{\text{lemma}}\cdot g)
                \\
                &= (\underbrace{\pi(\sigma(x))}_{id_M}, \underbrace{g_{\sigma}(\sigma(x))}_{e} \cdot g)
                \\
                &= (x,g),
                \end{aligned}
        \end{equation*}
        and
        \begin{equation*}
            \begin{aligned}
                (\psi_\sigma \circ \varphi_\sigma)(p) &= \psi_{\sigma}(\pi(p)), g_{\sigma}(p))
                \\
                &= \sigma(\pi(p)) \cdot g_{\sigma}(p)
                \\
                &= p,
            \end{aligned}
        \end{equation*}
        by definition of $g_\sigma:$ unique element $g_{\sigma}(p) \in G$ such that $\sigma(\pi(p)) \cdot g_{\sigma}(p)=p$.
        \begin{equation*}
            \therefore P \xlongrightarrow[\varphi_\sigma]{\cong} M\times G 
        \end{equation*}
        is a diffeomorphism.
    \end{proof}   
\begin{example}
    \textbf{Frame bundle} over a smooth $\textit{n}$-manifold: $LM \xlongrightarrow{\pi}M$. For each $x \in M$,
    \begin{enumerate}
        \item \begin{equation*}
            L_x M := \{(e_{1}^{x}, e_{2}^{x},\cdots,e_{n}^{x}) \; | \; e_{1}^{x}, e_{2}^{x},\cdots,e_{n}^{x} \; \text{is a basis for} \; T_xM  \} \cong GL(n;\mathbb{R}),
        \end{equation*}
        and $LM := \amalg_{x \in M} L_x M$. Note that $\dim(LM) = \dim M + \dim M^2$. $LM$ inherits smooth atlas from $M$. Here the map $\pi$ is given as
        \begin{equation*}
        \begin{aligned}
            \pi: LM &\longrightarrow M
            \\
            (e_1 , ..., e_n) &\longrightarrow x \;\; (\text{unique $x \in M $ such that $e_1,...,e_n$ form a basis for $T_x M$}).
        \end{aligned}
        \end{equation*}
        So, $LM\xlongrightarrow{\pi} M$ is a smooth bundle.

        \item Construct a right $GL(n;\mathbb{R})$-action on $LM$;
        \\
        Let 
        \begin{equation*}
            g \in GL(n;\mathbb{R}) = \{A \in \mathbb{R}^{n \times n} \; | \; \det A \neq 0\}.
        \end{equation*} 
        Denote $g$ by $g = [g^{i}_{\:j}]$ where $i,j = 1\cdots,n$. For $(e_1,\cdots,e_n) \in T_x M$ define:
        \begin{equation*}
           (e_1,\cdots,e_n) \cdot g := (e_1,\cdots,e_n)  [g^{i}_{\:j}] = (g^{i}_{\:1} e_i \:,\: g^{i}_{\:2} e_i \:,\cdots, g^{i}_{\:n} e_i ),
        \end{equation*}
        which follows from matrix multiplication.
        i.e.:
        \begin{equation*}
            \left((e_1,\cdots,e_n) \cdot g\right)_K = g^{l}_{\:k}e_l \;\;\; l,k = 1,\cdots,n.
        \end{equation*}
        Note that the action is free.
        \item 
            \begin{equation}
                \begin{tikzcd}[sep=1.2cm, remember picture]
                   |[alias=A]| LM \arrow[d, thick, "\pi"] & |[alias=B]|LM \arrow[d, thick, "\pi'"] \\ 
                    |[alias=C]|M & |[alias=D]|LM/GL(n;\mathbb{R})
                \end{tikzcd}
                \tikz[overlay,remember picture]{%
                \node (Y1) [scale=1.2] at (barycentric cs:A=1,B=1,C=1,D=1) {$\cong$};}
            \end{equation}    
            $(e_1 , ...,e_n) \sim (\Tilde{e}_{1},...,\Tilde{e}_{n})$ if and only if there exists a $g \in GL(n;\mathbb{R})$ such that:
        \begin{equation}
        \label{eq:1.244}
            \Tilde{e}_K = g_{k}^{\: \: l} e_l.
        \end{equation}
        We define:
        \begin{equation*}
            \begin{aligned}
                M &\longrightarrow LM/GL(n;\mathbb{R})
                \\
                x &\longrightarrow [e_1 ,\cdots,e_n] \in T_x M,
            \end{aligned}
        \end{equation*}
        assigning local basis whose transformation rule in the overlaps is determined by equation \ref{eq:1.244}.
        \\
        Clearly smooth, inverse map $[e_1 ,...,e_n] \longrightarrow \underbrace{x}_{\exists! x \in M}$ smooth implying that:
        \begin{equation*}
            M \underbrace{\cong}_{\text{diffeo}} LM/GL(n;\mathbb{R}).
        \end{equation*}
        Therefore, $L\xlongrightarrow{\pi}M$ frame bundle is a principal $GL(n;\mathbb{R})$- bundle over a smooth $n$-manifold $M$.
        (Later we shall see its associated vector bundle is isomorphic to the "tangent bundle" over $M$.) Note that the construction is a suitable setup for the theory of General Relativity. Moreover, the formalism serves as a "background geometry" for general relativity.
    \end{enumerate}

\end{example}

\begin{remark}
    Let
    \begin{equation}
        \begin{tikzcd}[sep=1.2cm]
            P\arrow[r, thick, "\cdot G"] & P  \arrow[d, thick, "\pi"] \\ {} & M
        \end{tikzcd}
    \end{equation}
    be a principal $G$-bundle. Then, let $\sigma: U \subseteq_{\text{open}} M \longrightarrow P$ be a local section. Then $\sigma$ defines a local trivialization, i.e.: $P \cong U \times G$ locally.
        \begin{center}
            \includestandalone[width=0.55\textwidth]{Figures/Part1/p76}\\ 
            \hypertarget{fig22}{Figure 22.} Local trivialization of a principal $G$-bundle.
        \end{center}
\end{remark}
This is just as we did in global case.
\subsection{Associated Bundles}
\begin{definition}
    Let $(P,\pi,M)$ be a principal $G$-bundle.
    \begin{equation}
        \begin{tikzcd}[sep=1.2cm, remember picture]
           G \arrow[r, thick, hook] & |[alias=A]| P \arrow[d, thick, "\pi"] & |[alias=B]|P \arrow[d, thick, "\pi'"] \\ 
          {} &  |[alias=C]|M & |[alias=D]|LM/GL(n;\mathbb{R})
        \end{tikzcd}
        \tikz[overlay,remember picture]{%
        \node (Y1) [scale=1.2] at (barycentric cs:A=1,B=1,C=1,D=1) {$\cong$};}
    \end{equation}    

    Let $F$ be a smooth manifold which is a left $G$-space. i.e.: There exists a left $G$-action on $F$.
    \begin{equation*}
    \begin{aligned}
        \cdot : G \times F &\longrightarrow F
        \\
        (g,f) &\longrightarrow g \cdot f.
    \end{aligned}
    \end{equation*}
    Then, we define the \textbf{associated bundle} as follows:
    \begin{enumerate}
        \item Define an equivalence relation on $P\times F$:
        \begin{equation*}
            (p,f) \sim (p' , f') \iff \exists g \in G,
        \end{equation*}
        such that:
        \begin{equation*}
            \begin{aligned}
                p' &= p \cdot g \;\;(\text{there exists a right $G$-action on $P$}),
                \\
                f' &= g^{-1} \cdot f \;\;(\text{there exists a left $G$-action on $F$}).
            \end{aligned}
        \end{equation*}
        \item Consider the quotient space with respect to the relation above:
        \begin{equation*}
            P_F := \frac{P \times F}{\sim}.
        \end{equation*}
        An observation is that a fiber $P$ over $m\in M$ is just $G$. But, after quotioning out, roughly speaking, we attach $F$ as a fiber of $P_F$ over $m$.
        \item Define the projection map:  $P_F\xlongrightarrow{\pi_F}M$ where $\pi_F \left([p,f]\right):= \pi(p)$. Note that $\pi_F$ is well-defined: (Since the "inputs" of $\pi_F$ are equivalence classes, we need to show that $\pi_F$ depends only on the equivalence class, not a representative chosen in a class.)
        Take $[p,f] \in P_F$. Consider another representative $[p' , f'] = [p,f]$ (i.e. $(p',f') \in [p,f]$), and have $p' = p \cdot g$, $f=g^{-1} \cdot f$, for some $g \in G$. Then we have:
        \begin{equation*}
            \pi_F([p',f']) = \pi_{F}([p\cdot g, g^{-1} \cdot f]) = \pi(p \cdot g) =  \pi(p) = \pi_F ([p,f]),
        \end{equation*}
        where in the last equality we have used the fact that $p \cdot  g$ and $p$ lie in the same fiber over $\pi(p)$.
    \end{enumerate}

\newpage 

    Then, 
    \begin{equation}
        P_F := P\times F/ \sim \ \ \xlongrightarrow{\pi_F} M,
    \end{equation}
    is called the \textbf{associated bundle} for 
    \begin{equation}
        G \hookrightarrow F \xlongrightarrow{\pi} M.
    \end{equation}
\end{definition}
\begin{example}
    Consider the following principal $G$-bundle: Frame bundle:
    \begin{equation}
        \begin{tikzcd}[sep=1.5cm]
            LM \arrow[r, thick, "\cdot GL(n; \mathbb{R})"] & LM \arrow[d, thick, "\pi"] \\ 
            {} & M
        \end{tikzcd}
    \end{equation}
    where $n$ = $\dim M$, and $\forall x \in M$:
    \begin{enumerate}
        \item[(a)] $L_x M = \{(e_1,...,e_n)\;|\; e_1 ,...,e_n  \; \text{forms a basis for} \; T_x M\}$
        ($LM = \amalg_{x\in M} L_x M$).
        \item[(b)] $\pi(e_1,...,e_n) = x$.
        \item[(c)] the right $GL(n,\mathbb{R})$- action on $LM$ is given by: For each $(e_1,...,e_n) \in T_x M$, $g = g^{i}_{\;j} \in GL(n;\mathbb{R})$,
        \begin{equation*}
            (e_1,...,e_n)\cdot [g]^{i}_{\; j} = (g^{i}_{\;1} e_i , g^{i}_{\;2} e_i,...,g^{i}_{\;k} e_i,...,g^{i}_{\;n}e_i).
        \end{equation*}
        (i.e. $\underbrace{\left((e_1,...,e_n)\cdot g^{i}_{\;j}\right)_k}_{e'_{k}} = g_{k}^{\; i}e_i$), \;\; $i=1,...,n$.
    \end{enumerate}
    Take $F = \mathbb{R}^{n}$. Define the left $GL(n;\mathbb{R})$-action on $F$;
    For each $f=[f^1,...,f^n]^{T} \in \mathbb{R}^{n}$, $g^{i}_{\;j} \in GL(n,\mathbb{R})$ we define: 
    \begin{equation*}
        g^{i}_{\;j} \cdot \begin{bmatrix}
            f^1 \\ \vdots \\f^n
        \end{bmatrix}
        := \begin{bmatrix}
            (g^{-1})^{1}_{\;k} f^k \\ \vdots \\ (g^{-1})^{a}_{\;k}f^k \\ \vdots \\
            (g^{-1})^{n}_{\; k}f^k
        \end{bmatrix}.
    \end{equation*}
    i.e.:
    \begin{equation}
        (g^i_{j}\cdot f)^{a} \underbrace{:=}_{a^\mathrm{th}  \ \mathrm{ comp.}}  (g^{-1})^a_kf^k.
    \end{equation}
    Then the associated bundle is:
    \begin{equation}
        LM_{\mathbb{R}^n} \xlongrightarrow{\pi_{\mathbb{R}^n}} M,
    \end{equation}
    where 
    \begin{equation*}
        LM_{\mathbb{R}^{n}} = \frac{LM \times \mathbb{R}^n}{\sim} . 
    \end{equation*}
\end{example}
\begin{observation}
    \begin{equation}
        \begin{tikzcd}[sep=1.2cm, remember picture]
           |[alias=A]| LM_{\mathbb{R}^n} \arrow[d, thick, "\pi_{\mathbb{R}^n}"] & |[alias=B]|TM \arrow[d, thick, "\pi'"] \\ 
            |[alias=C]|M & |[alias=D]|M
        \end{tikzcd}
        \tikz[overlay,remember picture]{%
        \node (Y1) [scale=1.2] at (barycentric cs:A=1,B=1,C=1,D=1) {$\cong$};}
    \end{equation}    
    which also means that the following diagram commutes.
    \begin{equation}
        \begin{tikzcd}[sep=1.5cm, remember picture]
           |[alias=A]| LM_{\mathbb{R}^n} \arrow[rd, thick, "\pi_{\mathbb{R}^n}"] \arrow[rr, thick, "\varphi"] & {} & |[alias=B]|TM \arrow[ld, thick, "\pi'"] \\ 
           {} & |[alias=C]|M & {}
        \end{tikzcd}
        \tikz[overlay,remember picture]{%
        \node (Y1) [scale=1.2] at (barycentric cs:A=1,B=1,C=1) {$\circlearrowright$};}
    \end{equation}
    This requires the existence of a diffeomorphism, $\varphi: LM_{\mathbb{R}^n}\xlongrightarrow{\cong} TM$ that is compatible with the bundle structure. Indeed, we may naturally define:
    \begin{equation*}
        \begin{aligned}
            LM_{R^{n}} &\xlongrightarrow{\varphi} TM
            \\
            [e,f] &\longmapsto\underbrace{f^{\mu}e_{\mu}}_{:= X_x \in T_x M},
        \end{aligned}
    \end{equation*}
    where the left hand side is given by:
    \begin{equation*}
        \begin{aligned}
            e &= (e_1 , ... , e_n
            )  \in T_x M
            \\
            f &= \begin{bmatrix}
                \vdots
            \end{bmatrix}
            \in \mathbb{R}^{n}.
        \end{aligned}
    \end{equation*}
    The right-hand side is the component version of vector $X_x \in T_X M$, where on the overlaps the transformation law for components is:
    \begin{equation}
    \label{1.255}
        f'^{a} = (g^{-1})^{a}_{\; k}f^{k},
    \end{equation}
    and for basis is:
    \begin{equation*}
        e^{'k} = g^{i}_{\; k}e_i .
    \end{equation*}
    Conversely, given $X = f^{\mu}e_{\mu} \in TM$, we assign:
    \begin{equation*}
        X \xlongrightarrow{\psi} [e,f],
    \end{equation*}
     where
    \begin{itemize}
        \item $\psi$ is also smooth.
        \item $\psi = \varphi^{-1}$.
    \end{itemize}
    Here, the assignment is independent of coordinates as we identify all $(e,f)$ which are transformed as in \ref{1.255}.
   
\end{observation}

\begin{observation}
    \begin{enumerate}
        \item The action of "the structure group" $GL(n;\mathbb{R})$ is not seen when we directly consider the tangent bundle. $TM \longrightarrow M$. But, if we consider the corresponding associated bundle $LM_{\mathbb{R}^{n}} \longrightarrow M$, we manifestly uncover \textit{the underlying symmetry} given by a certain structure group-action (in that case $GL(n;\mathbb{R})$). Note that the framework serves as an underlying geometry for general relativity.
        \item We may consider some other structure groups! Take $SO(n)$, for instance, the same setup above yields "rotational invariance".
        \item For example, in special relativity, the underlying symmetry group above is the "Lorentz group", i.e., if we take $G = O(n;1)$, then one recovers special relativity.
        \item Taking $GL(n,\mathbb{R})$, 
        \begin{equation*}
            \begin{aligned}
                &F := (\mathbb{R}^{n})^{\times p} \times (\mathbb{R}^{n*} )^{\times q}
                \\
                &\underbrace{\mathbb{R}^{n} \times \cdots \times \mathbb{R}^{n}}_{p} \times \underbrace{\mathbb{R}^{*n} \times \cdots \times \mathbb{R}^{n*}}_{q}
            \end{aligned}
            ,
        \end{equation*}
         The first line denotes a $(p,q)$-tensor bundle, and the corresponding associated bundle is isomorphic to the $(p,q)$-tensor bundle. One can check Schuller's Lecture Notes for more details.
        \item Taking $F = \mathbb{R}$ gives rise to the scalar density bundle.
    \end{enumerate}
\end{observation}
\begin{claim}
    A principal bundle map $(\varphi, \psi)$ induces a bundle map between corresponding associated bundles with the same fiber $F:$
\end{claim}
Let 
    \begin{equation}
        \begin{tikzcd}
            P \arrow[r, thick, "\cdot G"] & P \arrow[d, thick, "\pi"] \\ 
            {} & M
        \end{tikzcd} \qquad \text{and} \qquad 
        \begin{tikzcd}
            P \arrow[r, thick, "\cdot' G'"] & P' \arrow[d, thick, "\pi'"] \\ 
            {} & M
        \end{tikzcd}
    \end{equation}
be two principal bundles, and let $\rho: G\longrightarrow G'$ be a Lie group homomorphism. We assume that $(\varphi,\psi)$ is a principal bundle map, i.e.
\begin{equation}
    \begin{tikzcd}[sep = 1.2cm, remember picture]
        |[alias = A]| P \arrow[r, thick, "\varphi"] & 
        |[alias = B]| P' \\ 
        |[alias = C]| P \arrow[u, thick, "\cdot G"] \arrow[r, thick, "\varphi"] \arrow[d, thick, "\pi"] & |[alias = D]| P' \arrow[u, thick, "\cdot'G'", swap] \arrow[d, thick, "\pi'"] \\ 
        |[alias = E]| M \arrow[r, thick, "\psi"] &  
        |[alias = F]| M
    \end{tikzcd}
    \tikz[overlay,remember picture]{%
        \node (Y1) [scale=1.5] at (barycentric cs:A=1,B=1,C=1,D=1) {$\circlearrowright$};
        \node (Y2) [scale=1.5] at (barycentric cs:C=1,D=1,E=1,F=1) {$\circlearrowright$};
        }
\end{equation}
such that
\begin{itemize}
    \item[(a)] $\varphi: P \longrightarrow P'$, $\psi: M\longrightarrow M$ smooth map,
    \item[(b)] $\pi' \circ \varphi = \psi \circ \pi$,
    \item[(c)] $\varphi (p\cdot g) = \varphi(p) \cdot' \rho (g) \;\; \forall g \in G$.
\end{itemize}

Then, letting $F$ be a manifold, and $P_F \xlongrightarrow{\pi_F}M$, $P'_F \xlongrightarrow{\pi'_F}M$  the associated bundles,  $\pi_F ([p,f]) = \pi(p) \; \forall p \in P, f \in F$. We define an \textbf{associated bundle map} (induced from principal bundle map) as a smooth pair $(\Tilde{\varphi},\Tilde{\psi})$.
\begin{equation*}
\begin{aligned}
    \Tilde{\varphi}:P_F &\longrightarrow P'_{F},
    \\
    \Tilde{\psi}: M &\longrightarrow M,
\end{aligned}
\end{equation*}
such that:
\begin{enumerate}
    \item $\Tilde{\varphi} := [\varphi(p),f]\;\;\; \forall [p,f ]\in P_F$
    \item $\Tilde{\psi}(p) := \psi(x) \;\;\; \forall x \in M$
\end{enumerate}
\begin{observation}
    These items, by definition, imply that the diagram
    \begin{equation}
        \begin{tikzcd}[sep=1.5cm, remember picture]
            |[alias = A]| P_F \arrow[r,thick, "\tilde{\varphi}"] \arrow[d, thick, "\pi_F", swap] & 
            |[alias = B]| P_F' \arrow[d, thick, "\pi_F'"] \\
            |[alias = C]| M \arrow[r, thick, "\tilde{\psi}"] & 
            |[alias = D]| M
        \end{tikzcd}
        \tikz[overlay,remember picture]{%
            \node (Y1) [scale=1.5] at (barycentric cs:A=1,B=1,C=1,D=1) {$\circlearrowright$};
        }
    \end{equation}
    commutes:
    \begin{equation}
        \pi'_F \circ \Tilde{\varphi} = \Tilde{\psi} \circ \pi_F.
    \end{equation}
\end{observation}
Indeed; let $[p,f] \in P_F$ then we have:
\begin{equation}
    \begin{aligned}
        (\pi_F' \circ \Tilde{\varphi})([p,f]) &= \pi'_F (\Tilde{\varphi([p,f])}) = \pi'_F ([\varphi(p),f]) \;\;\; &\text{by definition of $\Tilde{\varphi}$}
        \\
        &=\pi'(\varphi(p)) \;\;\; &\text{by definition of $\pi'_F$}
        \\
        &= (\psi \circ \pi)(p) \;\;\; &\text{by b.}
        \\
        &= \psi(\pi(p))
        \\
        &= \Tilde{\psi}(\pi(p)) \;\;\; &\text{by definition of $\Tilde{\psi}$} 
        \\
        &= (\Tilde{\psi} \circ \pi_F)([p,f]) \;\;\; &\text{by definition of $\pi_F$}
    \end{aligned}
\end{equation}
hence
\begin{equation*}
    \pi'_F \circ \Tilde{\varphi} = \Tilde{\psi} \circ \pi_F.
\end{equation*}
Now, let us present some results and definitions for associated bundles:
\begin{definition}
    An associated bundle is called \textbf{trivial} if the underlying principal bundle map is trivial.
\end{definition}
\begin{theorem}
    A trivial associated bundle is a trivial fiber bundle.
\end{theorem}
\begin{theorem}
    \label{eq:theorem12}
    Let
    \begin{equation}
        \begin{tikzcd}
            G \arrow[hook, r, thick] & P \arrow[d, thick, "\pi"] \\ {} & M    
        \end{tikzcd}
    \end{equation}
    be a principal $G$-bundle. Let $P_F \xlongrightarrow{\pi_F} M$ be its associated bundle for a manifold $F$. Then, the spaces
$$
\left\{
  \begin{array}{l}
    \text{The sections } \sigma: M \to P_F \\
    \text{of an associated bundle}
  \end{array}
\right\}
 \qquad \text{ and }\qquad
\left\{
  \begin{array}{l}
    {\text{$F$-valued, smooth maps on } P} \\
    \phi: P \to F
  \end{array}
\right\}
$$
are in one-to-one correspondence.
\end{theorem}
i.e. given $\sigma$, one can construct $\phi_\sigma:P \longrightarrow F$ and conversely, given $\phi: P \longrightarrow F$, one can construct $\sigma_\phi : M \longrightarrow P_F$, ($\pi_F \circ \sigma_\phi = id$) in a suitable way.
\begin{remark}
    That theorem is crucial when discussing the notion of a "covariant derivative" on a principal $ G$-bundle or an associated bundle.
    \begin{equation}
        \begin{aligned}
            \nabla_Y \sigma \in \Gamma(P_F) \quad \text{where} \; Y \in TM, \sigma \in \Gamma(P_F)
        \end{aligned}
    \end{equation}
\end{remark}
This can be done  by following the  steps below:
\begin{itemize}
    \item Using Theorem \ref{eq:theorem12}, pass to $P \xlongrightarrow{\pi} M$.
    \item Construct  on $P$ (via corresponding $1$-form on $P$ and $Y \in TP$).
    \item Pull it down to $M$.
\end{itemize}
\subsection{Connection, Covariant Derivative, and Curvature on a Principal  Bundle}

\textbf{Our setup:} Let 
    \begin{equation}
        \begin{tikzcd}[sep=1.1cm]
            G \arrow[r, thick, "\cdot G"] & P \arrow[d, thick, "\pi"] \\ {} & M    
        \end{tikzcd}
    \end{equation}
be a principal $G$-bundle. Then
\begin{enumerate}
    \item[(a)] $P$ is a right $G$-space.
    \item[(b)] the action is free.
    \item[(c)] 
    \begin{equation}
        \begin{tikzcd}[sep=1.2cm, remember picture]
            |[alias = A]| P \arrow[d, thick, "\pi", swap] & 
            |[alias = B]| P \arrow[d, thick, "\pi'"] \\
            |[alias = C]| M & 
            |[alias = D]| P/G
        \end{tikzcd}
        \tikz[overlay,remember picture]{%
            \node (Y1) [scale=1.5] at (barycentric cs:A=1,B=1,C=1,D=1) {$\cong$};
        }
    \end{equation}
    where $M\cong P/G$ together with the diagram
    
    \begin{equation}
        \begin{tikzcd}[sep = 1.5cm, remember picture]
            {} & |[alias = A]| P \arrow[ld, thick] \arrow[rd, thick, "\pi'"] & {} \\ 
            |[alias = B]| M \arrow[rr, thick, "\cong"] & {} & |[alias = C]| P/G 
        \end{tikzcd}
        \tikz[overlay, remember picture]{%
            \node (Y1) [scale=1.5] at (barycentric cs:A=1,B=1,C=1) {$\circlearrowright$};
        }
    \end{equation}
\end{enumerate}
\begin{observation}
    Let us recall an observation we made when discussing the moment map. In accordance with the setup above, there exists a natural map
    \begin{equation*}
        \begin{aligned}
            i: T_e G \cong \mathfrak{g} &\longrightarrow \Gamma(TP)
            \\
            X &\longmapsto X^{\sharp},
        \end{aligned}
    \end{equation*}
    which is defined as follows: For each $p \in P$, and $\forall f \in C^{\infty}(p)$, we set
    \begin{equation*}
        X^{\sharp}_p f := \frac{d}{dt} \bigg|_{t=0} f(\gamma(t)),
    \end{equation*}
    where $\gamma(t):= p \cdot \exp(tX)$, $\cdot$ is $G$-action on $P$, and
    \begin{equation*}
        \begin{aligned}
            \exp: \mathfrak{g} &\longrightarrow G
            \\
            X &\longmapsto \exp(tX)
        \end{aligned}
    \end{equation*}
    
    \begin{center}
     \includestandalone[width=0.6\textwidth]{Figures/Part1/p85}\\
     \hypertarget{fig23}{Figure 23.}$\exp$ map and local picture. 
    \end{center}
    
    In fact, $i$ induces an isomorphism onto some subspace.
    \begin{enumerate}
        \item[(a)] $T_e G \cong \Gamma(V P)$, $VP \subseteq TP$ (will be clear later),
        \\
        where \begin{equation*}
            \begin{aligned}
                i^{-1}: \Gamma(VP) &\longrightarrow \mathfrak{g} \;\;\; \text{a natural map}
                \\
                X^{\#} &\longmapsto X \;\;\;(\text{will be clear later})
            \end{aligned}
        \end{equation*}
        \item[(b)] $i$ is a Lie algebra homomorphism:
        \begin{equation}
        \begin{aligned}
            \underbrace{i([X,Y]_\mathfrak{g})}_{= [X,Y]^{\#}} &= [i(X),i(Y)]_0 \;\;\;\;\; X,Y,[X,Y]\in T_e G
            \\
            &= [X^{\#},Y^{\#}]_0
        \end{aligned}
        \end{equation}
    \end{enumerate}
\end{observation}
\begin{definition}
    Let $P \xlongrightarrow{\pi} M$ be a principal $G$-bundle.
    \begin{equation}
        \begin{tikzcd}
            P \arrow[r, thick, "\cdot G"] & P \arrow[d, thick, "\pi"] & {} & TP \arrow[d, thick, "\pi_*"] \\ 
            {} & M & {} & TM
        \end{tikzcd}
    \end{equation}
    We define the \textbf{vertical bundle} $VP \subseteq TP$ as follows: For each $p\in P$, we let
    \begin{equation*}
        \begin{aligned}
            V_p P &:= \ker(\pi_{*})
            \\
            & =\{X_p \in T_p M \; |\; \pi_{*,p}(X_p)=0 \in T_{\pi(p)}M  \},
        \end{aligned}
    \end{equation*}
    where $V_p P$ is the vertical subspace at $p$. And, we have
    \begin{equation*}
        \begin{aligned}
            \pi_{*,p}: T_p P &\longrightarrow T_{\pi(p)}M
            \\
            X_P &\longmapsto\pi_{*,p}(X_p)
        \end{aligned}
    \end{equation*}
\end{definition}

\begin{remark}
    \begin{enumerate}
        \item Try to realize on which "layer" of the structure we study. That is on $P$, \textit{not} on $M!$ $\Longrightarrow$ $X_p \in T_p P$
            \begin{center}
              \includestandalone[width=0.6\textwidth]{Figures/Part1/p86.1}\\
              \hypertarget{fig24}{Figure 24.} Principal $G$-bundle from a tangent bundle of $P$.
        \end{center}
        
        \newpage 
        
        \item In the local picture:
            \begin{center}
                \includestandalone[width=0.5\textwidth]{Figures/Part1/p86.2}\\ 
                \hypertarget{fig25}{Figure 25.} Decomposition of a vector in terms of its vertical and horizontal components which lie on their respective spaces, $\mathsf{Ver}(X_p)$ and $\mathsf{Hor}(X_p)$
            \end{center}
        Note that $X_p$ will have two components: "\textsf{Hor}($X_p$)" and "\textsf{Ver}($X_p$)". Formal definitions will be given later; we can simply write
        \begin{equation*}
            X_p \in V_p P \Longleftrightarrow \mathsf{Hor}(X_p)=0 \;\;\; (\text{no horizontal component})
        \end{equation*}
        \item The choice of so-called "horizontal space $H_p P$" at $p \in P$, (as in (2)) determines how to "connect" points in the neighboring fibers! This leads to the following definition.
    \end{enumerate}
    
\end{remark}
\begin{definition}
    A \textbf{connection} on $P$ is a distribution of subspaces $H_p P$ of $T_p P$ such that
    
    \begin{enumerate}
        \item $T_p P = H_p P$ $\oplus$ $V_p P$ \;\;\;\:$\forall p \in P$,
        \item $(\cdot g)_*$ $(H_p P) = H_{p\cdot g} P \;\;\; \forall p \in O, \forall g\in G$.

        \begin{center}
            \includestandalone[width=0.5\textwidth]{Figures/Part1/p87.1} \\
            \hypertarget{fig25}{Figure 25.} Connection between horizontal subspaces at two different points on the fiber.
        \end{center}
        Note that:
        \begin{equation*}
            \begin{aligned}
                P &\xlongrightarrow[\text{diffeo.}]{\cdot g} P \;\; , \;\; T_p P \xlongrightarrow[\cong,\text{isom.}]{(\cdot g)_*} T_p P
                \\
                (T_p P = H_p P \oplus V_p &P)\rightsquigarrow p \longrightarrow p\cdot g \rightsquigarrow (T_p P = H_p P \oplus V_p P)
            \end{aligned}
        \end{equation*}
        \begin{center}
            \includestandalone[width=0.23\textwidth]{Figures/Part1/p87.2}
        \end{center}
        What we impose here is that $(\cdot g)_*$ maps $H_p P$ to $H_{p\cdot g}P$ isomorphically (a compatibility condition with $G$-action at "push forward" level).
        \item $\forall X_p \in T_p P$, there exists a unique decomposition.
        \begin{equation*}
            X_p = \underbrace{\mathsf{Ver}(X_p)}_{ \in V_p P} \;\; + \;\; \underbrace{\mathsf{Hor}(X_p)}_{ \in H_p P},
        \end{equation*}
        where $\mathsf{Ver}(X_p)$ is \textbf{vertical subspace} at $p$ and $\mathsf{Hor}(X_p)$ is the \textbf{horizontal subspace} at $p$.
    \end{enumerate}
\end{definition}

    \begin{lemma}
        For all $ p \in P$ and  $X \in T_e G$ the above $X_{p}^{\#}$ is an element of $V_p P$.
        \begin{equation}
        \begin{tikzcd}[sep = 1.5cm]
            P \arrow[d, thick, "\pi"] & {} \\ 
            M \arrow[r, thick, "f"] & \mathbb{R} \text{ (or } \mathbb{C})
        \end{tikzcd}
        \end{equation}
    \end{lemma}
    \begin{proof}
    We need to show that $X_{p}^{\#} \in \text{ker}\pi_*$. Let $f \in C^{\infty}$. Then we consider $\pi_{*,p}(X_{p}^{\sharp}) \in T_{\pi(p)}M$ and try to show that $\pi_{*,p}(X_{p}^{\sharp})f= 0 \;\; \forall f.$
    \begin{equation}
        \begin{tikzcd}[row sep = 1.5 cm, column sep = 0.2cm]
            { T_pP} \arrow[d, "{\pi_{*,p}}"] & \ni & X_p^\sharp \arrow[d, maps to] \\
            T_{\pi(p)}M  & {} & {\pi_{*,p}(X_p^\sharp)}      
\end{tikzcd}
    \end{equation}
    We have:
    \begin{equation*}
        \begin{aligned}
            \pi_{*,p}(X_{p}^{\#})f &= X_{p}^{\#} (f \circ \pi) \;\;\;\; \text{where} \;\; f\circ \pi \in C^{\infty}(p)
            \\
            &= \frac{d}{dt} \bigg|_{t=0} f(\pi(p \cdot \exp(tX)))
            \\
            &= \frac{d}{dt} \bigg|_{t=0} (f \circ \pi)(p)
            \\
            &= 0 \;\;\; \forall f
        \end{aligned}
    \end{equation*}
    Here, in the first line, we have used the definition of $\pi_*$, and in the second line, we have used the definition of $X_{p}^{\sharp}$. Note that on the third line, $f(\pi(p))$ is independent of $t$ since $p$ and $p\cdot\exp(tX)$ lie on the same fiber.
\end{proof}
\begin{corollary}
    $i : \mathfrak{g} \xlongrightarrow{\cong} V_p P \subset T_p P \;\; \forall p \in P$. Given a basis $e_1,...,e_n$ for $\mathfrak{g}$, $\forall p \in P$  $e_{1}^{\#},...,e_{n}^{\#}$ forms a basis for $V_p P$.
\end{corollary}
    Given a connection, one can define a Lie-algebra valued one-form on $P$, say $\omega \in \Omega^{1}(p) \otimes \mathfrak{g}$ that captures the behavior of the distribution of $H_p P$.
\begin{definition}
    Formally, given a connection, i.e. a distribution $T_p P = H_p P \oplus V_p P \;\; \forall p \in P,$ we define for each $p\in P$,
    \begin{equation*}
        \begin{aligned}
            \omega_p : T_p P &\longrightarrow \mathfrak{g} \;\;\;\;\;\;             \\
            X_p &\longmapsto i^{-1}(\mathsf{Ver}(X_p)) \;\;\;\; \forall X_p \in T_p P.
        \end{aligned}
    \end{equation*}
   Note that $ \omega \in \Omega^{1}(p) \otimes \mathfrak{g}$,  where $\Omega^{1}(p)\otimes \mathfrak{g}$ denotes the space of $\mathfrak{g}$-valued one-forms on $P$. In other words, $\omega \in \Gamma (T^{*}P) \otimes \mathfrak{g} $. Then, for each $p\in P$ we have
    \begin{equation*}
        \boxed{\omega_p (X_p) := i^{-1}(\mathsf{Ver}(X_p))}
    \end{equation*}
    where
    \begin{equation*}
        i : \mathfrak{g} \xlongrightarrow{\cong} \Gamma(V_p P) \subset \Gamma(TP).
    \end{equation*}
    ($i$ maps $\mathfrak{g}$ isomorphically onto $\Gamma(VP)$). Call such form $\omega$ \textbf{the connection one-form.}
\end{definition}

\begin{observation}
    Conversely, given a connection one-form $\omega \in \Omega^{1}(P) \otimes \mathfrak{g}$, we can introduce a connection as follows:
    \begin{equation}
        \forall p \in P, \text{ set}\ \ H_{p} \coloneqq \ker(\omega_{p}), 
    \end{equation}
    where
    \begin{equation}
        \begin{aligned}
            \omega_{p}: T_{p}P &\longrightarrow \mathfrak{g} \\
            X_{p} &\longmapsto \iota^{-1}(\mathsf{Ver}(X_{p})).
        \end{aligned}
    \end{equation}
    So $X_{p} \in \ker(\omega_{p}) \iff \mathsf{Ver}(X_{p}) = 0$, thus gives the required distribution.
\end{observation}

\newpage
\begin{theorem}
    Let $P \xrightarrow{\pi} M$ be a principal $G$-bundle, and let $\omega$ be a connection one-form on $P$. Then:
    
    \begin{enumerate}
        \item For all $p \in P$ and $X \in \mathfrak{g}$,
        \[
            \omega_{p}(X^{\sharp}_{p}) = X,
        \]
        \begin{flushleft}
        where
        \[
        \begin{aligned}
            \mathfrak{g} \xrightarrow[\iota]{\cong} \Gamma(VP), \qquad &\omega_{p}: T_{p}P \to \mathfrak{g}, \\
            X \mapsto X^{\sharp} \qquad &Y \mapsto \iota^{-1}(\mathsf{Ver}(Y)) \\
            X^{\sharp}_{p} \in V_{p}P; \quad \iota^{-1}(\mathsf{Ver}(X^{\sharp}_{p})) &= \iota^{-1}(X^{\sharp}_{p}) \Rightarrow\; \omega_{p}(X^{\sharp}_{p}) = X.
        \end{aligned}
        \]
        \end{flushleft}
        \item For all $g \in G$ and $X_{p} \in T_{p}P$,
        \begin{equation}
            (\cdot g)^{*}\omega(X_{p}) = (\mathrm{Ad}_{g^{-1}})_{*}\,\omega_{p}(X_{p}),
        \end{equation}
        where
        \begin{equation}
            \begin{aligned}
                \mathrm{Ad}_{g}: G &\to G \\
                h &\mapsto g h g^{-1}, \\
                (\mathrm{Ad}_{g^{-1}})_{*}: \mathfrak{g} &\to \mathfrak{g}. \\[0.4em]
                &\rule{6cm}{0pt}
            \end{aligned}
        \end{equation}
        \begin{equation}
            \begin{tikzcd}[sep=1.5cm]
                P \arrow[r, thick, "\cdot g"] & P \arrow[d, thick, "\pi"] & {} & T_pP \arrow[r,thick, "\cdot g_*"] \arrow[rd, thick, dashed, swap, "\cdot g^*\omega"]& T_{p\cdot g}P \arrow[d, thick, "\omega_q"] \\ 
                {} & M & {} & {} & \mathfrak{g} 
            \end{tikzcd}
        \end{equation}
    \end{enumerate}
\end{theorem}

Let us look at \emph{the role of local sections in that setup:} 

\noindent Let 
\begin{equation}
    \begin{tikzcd}[sep = 1.5cm]
        P \arrow[r, thick, "\cdot G"] & P \arrow[d, thick, "\pi"] \\ 
        {} & M
    \end{tikzcd}
\end{equation}
be a principle G-bundle. Also let $\omega$ be a connection one-form, where $\omega$ is geometric object on the principal bundle $\omega \in \Omega^{1}(P)\otimes\mathfrak{g}, \;\mathfrak{g}\text{-valued one-form on } P$. Our aim is to obtain familiar geometric objects by pulling $\omega$ down to $M$. Given a local section $\sigma = U \subset M \to P, \;\pi \;\circ \sigma = \text{id}_{u}$.
Then we have,
\begin{enumerate}
    \item 
        \begin{equation}
            \begin{tikzcd}[sep=1.5cm]
                P \arrow[d, thick, "\pi"] & \boxed{\omega} \arrow[l, dashed, thick, no head] \\ 
                M \arrow[u, thick, dashed, "\sigma", bend left = 60]
            \end{tikzcd}
        \end{equation}
        $\omega \in \Omega^{1}(P)\otimes\mathfrak{g}$. Consider its pull back via $\sigma$: 
        \begin{equation}
            \sigma^{*}\omega \in \Omega^{1}(\mathnormal{u}) \otimes \mathfrak{g}
        \end{equation}
        Denote it by $\omega^{\mathnormal{u}} \coloneqq \sigma^{*}\omega$ called the \textbf{Yang-Mills field}. Via the local section, one can pull the geometric structure upstairs to downstairs to obtain a geometric object on the base manifold.
    \item Define "local" trivialization of $P$ by: (as we did before)
        \begin{equation}
            \begin{aligned}
                \varphi: U \cross G &\xrightarrow{\subseteq} P \\
                (x,g) &\to \sigma(x)\cdot g 
            \end{aligned}
        \end{equation}
        \begin{center}
            \includestandalone[width=0.50\textwidth]{Figures/Part1/p90.1}\\ 
            \hypertarget{fig26}{Figure 26.} Trivialization of $P$.
        \end{center}
    \item Let $\sigma_{1}: U_{1} \to P$ and $\sigma_{2}: U_{2} \to P$ be two local sections. Then both give a local trivial case where on the overlap $U_{1} \cap U_{2}$, we get:
    \begin{center}
        \includestandalone[width=0.66\textwidth]{Figures/Part1/p90.2}\\
        \hypertarget{fig27}{Figure 27.} Trivialization of $P$ on an overlap $U_1\cap U_2$.
    \end{center}
    Note that both $\sigma_{1}(x)$ and $\sigma_{2}(x)$ lie on the same fiber over $x \in U\subseteq M,\; (U \coloneqq U_{1} \cap U_{2})$.
    Then for all $x$ in $U$ there exists $ g_{x} \in G$ such that $\sigma_{2}(x) = \sigma_{1}(x)\cdot g_{x}$ for fixed $\sigma_{1}$ and $\sigma_{2}$.

Define a map,
\begin{equation}
    \begin{aligned}
        U_{1} \cap U_{2} \xrightarrow{g} G \\
        x \to g_{x}
    \end{aligned}
\end{equation}
called the \textbf{gauge map}.
\end{enumerate}

\begin{example}
    When $G$ is a matrix Lie group, under the gauge transformations above, we have
    \begin{equation}
        \omega \mapsto g \cdot \omega \cdot g^{-1} + dg \dot g^{-1}
    \end{equation}
    where $\cdot$ denotes matrix multiplication.
\end{example}

\begin{example}
    Consider the fiber bundle, $LM \xrightarrow{\pi}M$, $M$ is a manifold where $\forall x \in M$, $L_{x}M = \qty{(c_{1},..,c_{n})|e_{1},..,e_{n} \; \text{forms a basis for} \; T_{x}M}$. Take  $G = GL(n,\mathbb{R})$ and
    \begin{equation}
        \begin{tikzcd}[sep = 1.5cm]
            LM \arrow[d, thick, "\pi"] \\ 
            M \arrow[u, thick, dashed, "\sigma", bend right = 60]
        \end{tikzcd}
    \end{equation}
    Any choice of local coordinate chart $x=(x^{i})_{i=1,..,n}$ for $M$ yields a local section $\sigma$:
    \begin{equation}
        \begin{aligned}
            \sigma :  U &\longrightarrow LM \\
            m &\longmapsto \left(\frac{\partial}{\partial x^{i}}|_{m},..,\frac{\partial}{\partial x^{n}}|_{m}\right).
        \end{aligned}        
    \end{equation} 
    Then the Yang-Mills field $\omega \coloneqq\sigma^{*}\omega$ is a one form on $U$ with the $\mathfrak{g}$-valued components
    \begin{equation}
        \Gamma^{i}_{jk} \coloneqq (\omega^{\mathnormal{u}})^{i}_{jk}, \quad i,k = 1,..,n.
    \end{equation}
    on the base manifold, we recover the $\Gamma$-coefficients.
\end{example}

\begin{definition}
    Let 
    \begin{equation}
        \begin{tikzcd}[sep = 1.5cm]
            P \arrow[r, thick, "\cdot G"] & P \arrow[d, thick, "\pi"]  & {\boxed\omega} \arrow[l, thick, dashed, no head]\\ 
            {} & M & {}
        \end{tikzcd}
    \end{equation}
    be a principal G-bundle, and $\omega$ a connection one-form on $P$.
    Let $\phi$ be a $A$- valued $k$-form on $P$. i.e.
    \begin{equation}
    \begin{aligned}
        \phi \in \Gamma(\Lambda^{k}T^{*}P)\otimes \mathfrak{g} = \Omega^{k}(P) \otimes A \\
        \phi : \underbrace{TP \cross ... \cross TP}_{\text{k-times}} \rightarrow A 
    \end{aligned}
    \end{equation}
    \newpage
    Then we define the \textbf{covariant exterior derivative} $D\phi \in \Omega^{k+1}(P) \otimes A$ as follows:
    $\forall x_{1},..,x_{k+1} \in TP$,
    \begin{equation}
        D\phi(x_{1},..,x_{k+1}) \coloneqq d\phi(\mathsf{Hor}(x_{1}),..,\mathsf{Hor}(x_{k+1})), 
    \end{equation}
    where $d\phi$ is the usual exterior derivative, and $\omega$ provides $\mathsf{Hor}(-)$,  and hence the splitting $TP = HP \oplus VP$.
\end{definition}

\begin{definition}
    Let 
    \begin{equation}
        \begin{tikzcd}[sep = 1.2cm]
            P \arrow[r, thick, "\cdot G"] & P \arrow[d, thick, "\pi"]  & \boxed{\omega} \arrow[l, thick, dashed, no head]\\ 
            {} & M & {}
        \end{tikzcd}
    \end{equation}
    be a principal $G$-bundle with a connection one-form $\omega$. The \textbf{curvature}, $\Omega$, of the connection is the $\mathfrak{g}$-valued 2-form on $P$ given by $\Omega \in \Omega^{2}(P) \otimes \mathfrak{g} \; \text{ such that } \;\Omega \coloneqq D\omega$, i.e.
    \begin{equation}
        \begin{aligned}
                \forall p \in P, \; \; \Omega_{p} : T_{p}P \times T_{p}P &\longrightarrow g \qquad (\Omega \in \Gamma(\Lambda^{2}T^{*}P)\otimes \mathfrak{g})\\
                (X_{p},Y_{p}) &\longmapsto \Omega_{p}(X_{p},Y_{p}),
        \end{aligned}
    \end{equation}
    where $\Omega(X,Y) = D\omega(X,Y)=d\omega(\mathsf{Hor}(X),\mathsf{Hor}(Y)). \; \forall X,Y \in T_{p}P$.
\end{definition}
\begin{claim}
    We can write the curvature two-form as $\boxed{\Omega = d\omega + \omega \wedge \omega}$, where 
    \begin{equation}\label{eq:omerstar1}
        \omega \wedge \omega(X,Y) \coloneqq [\omega(X),\omega(Y)]
    \end{equation}
    with $\omega(X),\omega(Y) \in g$ and $[.,.]$ is the Lie bracket on $g \cong T_{e}G$. 
    
    Now recall that $(T_{e}G,[.,.]) \cong (L\Gamma(TG),[.,.]_{0})$, where $\cong$ is a Lie algebra isomorphism, $[.,.]_{0}$ is the usual commutator bracket on $\Gamma(TG)$, and $L\Gamma(TG)$ space of left-invariant vector fields on $G$. Also, $L\Gamma(TG) \subseteq \Gamma(TG)$, which inverts a Lie bracket from $\Gamma(TG)$.
\end{claim}
\begin{proof}
    We then prove Eq.~\ref{eq:omerstar1} via the following:
    \begin{enumerate}
        \item[(i)] 
            \begin{equation}
                \begin{aligned}
                    T_{e}G &\xlongrightarrow[\cong]{\varphi} L\Gamma(TG) \\
                    A &\longmapsto \varphi(A), 
                \end{aligned}
            \end{equation}
            where $\forall g \in G, \; \varphi(A)_{g}\coloneqq \ell_{g_{*}}(A)$ (Moreover, for all $g \in G, \ell _{g}:G \xrightarrow{\text{diffeo}} G$ is a left translation.)
        \item[(ii)]
            \begin{equation}
                [A,B] \coloneqq [\varphi(A),\varphi(B)]_{0}|_{e} \quad \forall A,B \in T_{e}G.
            \end{equation}
            Here $[\eta,\xi]_{0}=\eta\xi - \xi\eta \quad \forall \xi,\eta \in \Gamma(TG)$.
    \end{enumerate}
\end{proof}

\begin{remark}
    When $G$ is a matrix Lie group, the curvature components are: $\Omega^{i}_{j}=d\omega^{i}_{j}+\omega^{i}_{k} \wedge \omega^{k}_{j}$ where each $\omega$ is 1-form on $P$.
\end{remark}

\begin{proof}
    Note that
    \begin{enumerate}
        \item[(1)]It is clear that $\Omega$ is a $C^{\infty}(P)$-module.
        \item[(2)] $\omega$ yields a splitting /distribution: $TP = VP \oplus HP$ where $HP \coloneqq \text{ker}(\omega)$ and $\forall p\in P$, 
        \begin{equation}
            \begin{aligned}
                \omega_{p}: T_{p}P &\longrightarrow g  \\
                X_{p} &\longmapsto \iota^{-1}(\mathsf{Ver}(X_{p})), \\
            \end{aligned}
        \end{equation}
        where 
        \begin{equation}
            \begin{aligned}
                \;\iota : g &\xlongrightarrow{\cong}\Gamma(VP)\\
                X &\longmapsto X^{\sharp} \; (\text{Lie algebra homomorphism}).
            \end{aligned}
        \end{equation}
    \end{enumerate}
    Consider now, case by case:
    \begin{itemize}
        \item[(a)] Let $X,Y \in \Gamma(TP)$ be vertical. Then there exists $A,B \in \mathfrak{g}$ such that $X=A^{\sharp}$, $Y=B^{\sharp}$.
        Observe that the left-hand side is
        \begin{equation}
        \begin{aligned}
            \Omega^i_j = \Omega(X,Y) &= \Omega(A^{\sharp},B^{\sharp}) \\
            &= D\omega(A^{\sharp},B^{\sharp})=d\omega(\mathsf{Hor}(A^{\sharp}),\mathsf{Hor}(B^{\sharp})) = 0
        \end{aligned}
        \end{equation}
        
        Since both $\mathsf{Hor}(A^{\sharp})=\mathsf{Hor}(B^{\sharp})=0$. Similarly, the right-hand side is
        \begin{equation}
        \begin{aligned}
            d \omega^i_j + \omega^j_k\wedge \omega^k_j &= \omega \wedge \omega (X,Y) + d\omega(X,Y) = d\omega(A^{\sharp},B^{\sharp}) + [\omega(A^{\sharp}),\omega(B^{\sharp})] \\
            &= \underbrace{XA^{\sharp}(\omega(B^{\sharp})) - B^{\sharp}(\omega(A^{\sharp})) - \omega([A^{\sharp},B^{\sharp}]_{\circ})+[A,B]}_{(\star)}.
        \end{aligned}
        \end{equation}
        Here $\omega(A^{\sharp})=A$ and $\omega(B^{\sharp})=B$, and $[\cdot,\cdot]$ is Bracket on $\mathfrak{g}$. Then we further get

        \begin{equation}
            \begin{aligned}
               (\star)  &= A^{\sharp}(B) - B^{\sharp}(A) - \omega([A^{\sharp},B^{\sharp}]_{0})+[A,B]\\ &= A^{\sharp}B-B^{\sharp}A - \omega([A,B]^{\sharp})+[A,B] \\
                &=-[A,B]+[A,B] = 0,
            \end{aligned}
        \end{equation}where $A,B \in \mathfrak{g}$, and $[A,B]^{\sharp}=[A^{\sharp},B^{\sharp}]_{0}$ since $\iota$ is a Lie algebra homomorphism.
        \item[(b)] Assume both $X,Y\in \Gamma(TP)$ are horizontal. Then, the left-hand side is 
        \begin{equation}
            \Omega(X,Y)= D\omega(X,Y)=d\omega(\mathsf{Hor}(A^{\sharp}),\mathsf{Hor}(B^{\sharp}))=d\omega(X,Y),
        \end{equation}
        and the right-hand side is
        \begin{equation}
            d\omega(X,Y)=+\omega \wedge \omega (X,Y) = d\omega(X,Y) + [\omega(X),\omega(Y)]_{g}=d\omega(X,Y),
        \end{equation}
        where $X,Y \in \ker{(\omega)}$; hence, $[\omega(X),\omega(Y)]=0$.
        \item[(c)] W.l.o.g let $X=HP, \; Y\in VP$. Then  the left-hand side is 
        \begin{equation}
            \Omega(X,Y) = d\omega(\mathsf{Hor}(A^{\sharp}),\mathsf{Hor}(B^{\sharp}))=0
        \end{equation}
        since hor$(X)=X$ and hor$(Y=0)$. Similarly, the right-hand side is 
        \begin{equation}
            d\omega(X,Y) + \omega \wedge \omega(X,Y) = \underbrace{d\omega(X,Y) + [\omega(X),\omega(Y)]_{q}}_{(\star')}
        \end{equation}
        where $\omega(X)=0$ since $X \in \ker{(\omega)}=HP$. So we get
        \begin{equation}
          (\star')  = X\omega(Y)-Y\omega(X)-\omega([X,Y]_{0}).
        \end{equation}
        Here $\omega(A^{\sharp})=A$, and $\omega(Y) = A (\in \mathfrak{g}) = 0$, $\omega(X)=0$ and $[X,Y]_{0} = XY-YX$. So we have
        \begin{equation}
        \begin{aligned}
          (\star')  &=- \omega([X,A^{\sharp}]_{0}) \\
            &= 0
        \end{aligned}        
        \end{equation}
        $\therefore \quad \Omega= d\omega + \omega \wedge \omega$
    \end{itemize} 
\end{proof}
\begin{observation}
    Let 
    \begin{equation}
        \begin{tikzcd}[sep = 1.2cm]
            P \arrow[r, thick, "\cdot G"] & P \arrow[d, thick, "\pi"]  & \boxed{\omega} \arrow[l, thick, dashed, no head]\\ 
            {} & M & {}
        \end{tikzcd}
    \end{equation}
    be a principal G-bundle equipped with a connection one-form $\omega$. Let $\sigma: U \to P$ be a local section. We saw before,
    \begin{enumerate}
        \item[(1)] $\Gamma \coloneqq \sigma^{*}\omega \; \in \; \Omega^{1}(M) \otimes\mathfrak{g}$ is a $\mathfrak{g}$-valued one-form on $M$, which we called the "Yang-Mills field" (with the gauge fields $A$).
        \item[(2)] $R \coloneqq \sigma^{*}\omega \; \in \; \Omega^{2}(M) \otimes \mathfrak{g}$ (or just $F$)
        \begin{enumerate}
            \item When $G=U(1)$; $F$ is the electromagnetic field tensor.
            \item When $G=GL(\dim{(M)},\mathbb{R})$; $F$ is the usual Riemann tensor, say $R$.
        \end{enumerate}
        On $P$, we have $\Omega= d\omega+\omega \wedge\omega$.
        But on $M$ (recovering the Riemann geometry when $G=GL$), one computes
        \begin{equation}
        \begin{aligned}
            \sigma^{*}\Omega &= \sigma^{*}(d\omega+\omega \wedge\omega) = \sigma^{*}(d\omega) + \sigma^{*}(\omega \wedge \omega) \\
            &= \sigma^{*}(d\omega) + \sigma^{*}(\omega) \wedge \sigma^{*}(\omega) \quad (\text{Let }\Gamma \coloneqq \sigma^{*}\omega)
        \end{aligned}
        \end{equation}
        We get the familiar expression: $R = d\Gamma + \Gamma \wedge \Gamma = \partial \Gamma^{\cdot\cdot}_{\cdot\cdot}-\Gamma^{\cdot\cdot}_{\cdot\cdot}+\Gamma^{\cdot\cdot}_{\cdot\cdot}\Gamma^{\cdot\cdot}_{\cdot\cdot}-\Gamma^{\cdot\cdot}_{\cdot\cdot}\Gamma^{\cdot\cdot}_{\cdot\cdot}$, where $\Gamma^{\cdot\cdot}_{\cdot\cdot}$ are the \emph{Christoffel symbols} in the components $R^{i}_{ijk}$ of the Riemann curvature tensor.
    \end{enumerate}
\end{observation}

\subsection{Covariant Derivative on Associated Vector Bundles}
Recall the setup: Let
\begin{equation}
    \begin{tikzcd}[sep=1.2cm]
        P \arrow[r, thick, "\cdot G"] & P \arrow[d, thick, "\pi"] \\ 
        {} & M
    \end{tikzcd}
\end{equation}
be a principal G-bundle. We defined the following geometric objects on $P$:
\begin{equation}
    \begin{tikzcd}[sep=1.2cm]
        {} & {} & \boxed{\omega} \arrow[ld, thick, dashed, no head] \\ 
        P \arrow[r, thick, "\cdot G"] & P \arrow[d, thick, "\pi"] & {\boxed\Omega} \arrow[l, thick, dashed, no head] \\ 
        {} & M & {}
    \end{tikzcd}
\end{equation}
\begin{enumerate}
    \item Given a connection on $P$, i.e., the splitting/distribution on $TP: TP = HP \oplus VP$, where $HP$ is the horizontal bundle and $VP$ is the vertical bundle defined as $ V_{p}P \coloneqq \qty{X_{p} \in TP: \; \Pi _{*,p}(X_{p})=0 } = \ker{(\Pi_{*})} \ \forall p \in P.$
    \item There exists a natural map
    \begin{equation}
        \begin{aligned}
            \iota: \mathfrak{g} \doteq T_{e}G \xlongrightarrow{\cong} \Gamma(VP) \subset \Gamma(TP), \ X \longmapsto X^{\#},
        \end{aligned}
    \end{equation}
    where, $\forall p \in P ,  \forall f  \in C^{\infty}(P)$, we define 
    \begin{equation}
        X^{\sharp}_{p}f = \frac{d}{dt}\bigg\vert_{t=0} f(p \cdot \exp(tX)).
    \end{equation}
    Here $\cdot$ is the right $G$-action, and $\{ \exp(tX) \}$ is the one-parameter group of diffeomorphisms generated by $X\in \mathfrak{g}$.
    \item Then we define: $\forall p \in P$,
    \begin{equation}
     \begin{aligned}
        \omega_{p}: T_{p}P &\longrightarrow \mathfrak{g} \\
        X & \longmapsto \iota^{-1}(\mathsf{Ver}(X)).
     \end{aligned}      
    \end{equation}
    I.e., $\omega \in \Omega^{1}(P)\otimes \mathfrak{g}$.
    \item Conversely, given a connection one-form $\omega \in \Omega^{1}(P)\otimes \mathfrak{g}$, we define a connection by letting $H_{P}P \coloneqq\ker{(\omega)} \; \forall p\in P$.
    \item Curvature two form on $P$ $\Omega \in \Omega^{2}(P)\otimes \mathfrak{g}$ is defined by
    \begin{equation}
        \Omega = D\omega = d\omega \; \circ \mathsf{Hor}
    \end{equation}
    I.e., $\forall p\in P, \;\forall X,Y \in T_{P}(P), \ \; \Omega(X,Y)=D\omega(X,Y)=d\omega(\mathsf{Hor}(X),\mathsf{Hor}(Y))$
    \item We showed that $\Omega = d\omega + \omega \wedge \omega$, where $\omega\wedge\omega \coloneqq [\omega(X),\omega(Y)]_{\mathfrak{g}}$.
\end{enumerate}

    Consider the data 
    \begin{equation}
        \begin{tikzcd}[sep=1.2cm]
            {} & {} & \boxed{\omega} \arrow[ld, thick, dashed, no head] \\ 
            P \arrow[r, thick, "\cdot G"] & P \arrow[d, thick, "\pi"] & \boxed{\Omega} \arrow[l, thick, dashed, no head] \\ 
            {} & M & \boxed{D} \arrow[lu, thick, dashed, no head]
        \end{tikzcd}
    \end{equation}
   such that  $\omega\in \Omega^{1}(P) \otimes \mathfrak{g}, \; \Omega \in \Omega^{2}(P) \otimes \mathfrak{g}, \text{ and }D\phi \in \Omega^{k+1}(P) \otimes A$  defined by $D\phi = d\phi \circ \mathsf{Hor}$, where $\phi \in \Omega^{k}(P) \otimes A$ and 
    \begin{equation}
        \begin{aligned}
            \mathsf{Hor}: \Gamma(TP) &\longrightarrow \Gamma (HP) \\
            X &\longrightarrow \mathsf{Hor}(X).
        \end{aligned}
    \end{equation}
    
Let $F$ be a "vector" space equipped with a  left $G$-action. Then consider an associated "vector" bundle $P_F \xlongrightarrow{\pi_F}M$ which consists of the following data:
\begin{itemize}
    \item $P_{F} =P \cross F / \sim$ where $(p,f) \sim (p',f')$ if and only if there exists a $g \in G$ such that $p'=p\cdot g$ (as $P$ is a right $G$-space) and $f'=g^{-1}\cdot f$ via the right $G$-action on $F$.
    \item $\pi_{F}([p,f]) = \pi(p)$ $\forall[p,f] \in P_{F}$.
\end{itemize}

Since each fibre is a vector space, i.e., a linear space,
    \begin{center}
        \includestandalone[width=0.4\textwidth]{Figures/Part1/p98} \\
        \hypertarget{fig28}{Figure 28.} Change in the section given a direction.
    \end{center}
it makes sense to discuss \emph{how a section $\sigma$ changes in the direction of some $X \in TP$.}

Our aim is then to construct the so-called "covariant derivative" notion:
    \begin{equation}
        \nabla_{X}\sigma \in \Gamma(P_{F}),
    \end{equation}
    where $X \in TM$, $\sigma \in \Gamma(P_{F})$ such that
    \begin{itemize}
        \item $\nabla_{(fX+Y)}\sigma = f\nabla_{X}\sigma+\nabla_{Y}\sigma \quad \forall  f \in C^{\infty}(M)$
        \item $\nabla_{X}(\sigma+\psi)=\nabla_{X}\sigma+\nabla_{X}\psi$
        \item $\nabla_{X}f\sigma = f \nabla_{X}\sigma + (X_{f})\sigma$
    \end{itemize}

The question is how to construct such an object. 

We want to realize $\sigma \in \Gamma(P_{F})$ as a $F$-valued function $S$ on $M$, i.e.,
\begin{equation}
    \begin{aligned}
        S : U \subseteq M \longrightarrow F \\
        \text{with} \quad \pi_{F} \circ S = \textit{id}_{u}
    \end{aligned}
\end{equation}

\begin{itemize}
    \item[STEP 1:] Using the "correspondence" between $P$ and $P_{F}$, pass from $P$ to $P_{F}$, i.e., given $\sigma \in \Gamma(P_{F})$, we consider the map $\phi_{\sigma}: P \to F$.
    To this end, recall the following theorem:
\end{itemize}

\begin{theorem}
The space of sections $\sigma:M \to P_{F}$ of an associated bundle is in one-to-one correspondence with that of $F$-valued $G$-equivariant maps on $P$: Namely, those maps 
\begin{equation}
    \phi: P \to F
\end{equation}
with the commuting diagram
    \begin{equation}
    \begin{tikzcd}[sep = 1.2cm, remember picture]
        |[alias = A]| P \arrow[r, thick, "\phi"] \arrow[d, thick, "\cdot G"] & 
        |[alias = B]| F \arrow[d, thick, "\cdot G"] \\ 
        |[alias = C]| P \arrow[r, thick, "\phi"]& 
        |[alias = D]| F
    \end{tikzcd}
    \tikz[overlay,remember picture]{%
        \node (Y1) [scale=1.5] at (barycentric cs:A=1,B=1,C=1,D=1) {$\circlearrowright$};
        }
\end{equation}

I.e., $\forall g \in G$, $\forall p \in P$, we have
\begin{equation}
    \phi(p\cdot g) = g^{-1} \cdot \phi(p) 
\end{equation}
\end{theorem}
    
\begin{itemize}
    \item[STEP 2:] Define it for $\phi_{\sigma}$ and define everything on $P$ (i.e., $\forall X \in T_pP$).
    \item[STEP 3:] Pull everything down to $M$. Define it on $M$ (i.e., $\forall Y \in T_{P}M$).
\end{itemize}

\begin{proof}
    ("$\Leftarrow$") Given a $G$-invariant map $\phi: P \to F$,   we construct the corresponding section $\sigma_{P}: U\subseteq M \to P_{F}$ as follows:
    
    \begin{equation}
        \begin{tikzcd}[sep = 1.5cm]
            P_F \arrow[d, thick, "\pi_F"] \\ 
            M\ni U \arrow[u, thick, dashed, "\sigma_\phi", bend right = 60, swap]
        \end{tikzcd}
    \end{equation}
    
    \begin{equation}
        \begin{aligned}
            \sigma_{\phi}: U\subseteq M &\to P_{F} = P \cross F / \sim \\
            x &\mapsto [p,\phi(p)] \quad \text{where} \; p \in \pi^{-1}(X)
        \end{aligned}
    \end{equation}
    i.e., $\forall X \in U \subseteq M$, $\sigma_{\phi} \coloneqq [p,\phi(p)]$, for any $p\in \pi^{-1}(X)$

    \begin{center}
        \includestandalone[width = 0.3\textwidth]{Figures/Part1/p99}
    \end{center}
    
    \begin{itemize}
        \item[$i)$] $\sigma_{\phi}$ is well defined: Take another $q\in \pi^{-1}(x)$. Since $p$ and $q$ lie in the same fiber over $X$, there exists a unique $g \in G$ such that $q = p \cdot g$
        Then we have,
        \begin{equation}
            [q,\phi(q)] = [ p \cdot g, \phi( p \cdot g)] = [ p \cdot g, g^{-1}\cdot \phi(p)] = [p,\phi(p)],
        \end{equation}
        where the second equality holds due to $G$-invariance and the third equation holds by the definition of "$\sim$".
        \item[$ii)$] $\pi_{F} \circ \sigma_{\phi}= \text{id}_{u}$: let $X \in U \subseteq M$. Then we have,
        \begin{equation}
            \begin{aligned}
                (\pi_{F} \circ \sigma_{\phi})(x) &= \pi_{F}([p,\phi(p)]) \quad \text{for} \; p \in \pi^{-1}(x) \\
                &= \pi(p) \quad \text{by definition of} \; \pi_{F} \\
                &= x.
            \end{aligned}
        \end{equation}
    \end{itemize} 
    Therefore, for a given $\phi$ above, we construct the corresponding section $\sigma_{\phi}:U\subseteq M \to P_{F}$.\\

    ("$\Rightarrow$") Given a section $\sigma:U\subseteq M \to P_{F}$. $(\pi_{F} \circ \sigma_{\phi}= \text{id}_{u})$ 
    
    We construct a $G$-invariant map $\phi_{\sigma}: P \to F$ as follows.
    \begin{itemize}
        \item[(1)] Define $\phi_{\sigma}:P\to F$ as 
        \begin{equation}
            \begin{tikzcd}[sep = 1.5cm]
               P_F \arrow[d, thick, "\pi_F"] &  P \arrow[d, thick, "\pi"] \\ 
               M \arrow[u, thick, dashed, "\sigma", bend right= 60] &  M
            \end{tikzcd}
        \end{equation}
        \begin{equation}
            \forall p \in P; \; \phi_{\sigma} \coloneqq \iota^{-1}(\sigma(\pi(p))) = f_{p}, \quad \text{where} \; \sigma(\pi(p)) = [p,f_{p}].
        \end{equation}
        \item[(2)] $\phi_{\sigma}$ is $G$-invariant: For this, we need to show that 
        \begin{equation}
            \begin{tikzcd}[sep = 1.5cm, remember picture]
                |[alias = A]| P \arrow[r, thick, "\phi_\sigma"] \arrow[d, thick, "\cdot G", swap] & 
                |[alias = B]| F \arrow[d, thick, "\cdot G"] \\ 
                |[alias = C]| P \arrow[r, thick, "\phi_\sigma"]& 
                |[alias = D]| F
            \end{tikzcd}
            \tikz[overlay,remember picture]{%
                \node (Y1) [scale=1.5] at (barycentric cs:A=1,B=1,C=1,D=1) {$\circlearrowright$};
                }
    \end{equation}
    commutes.
    \end{itemize}
    Observe that
    \begin{equation}
        \begin{aligned}
            \phi_{\sigma}(p\cdot g) &= \iota^{-1}_{p \circ g}(\sigma(\pi(p \cdot g))) \\
            &= \iota^{-1}_{p \circ g}(\sigma(\pi(p))) \quad (\text{since $p$ and $g$ are in the same fiber on} \; \Pi(p).) \\
            &= \iota^{-1}_{p \circ g}([p,f_{p}]) \\
            &= \iota^{-1}_{p \circ g}([p \cdot g, g^{-1} \cdot f_{p}]) \quad (\text{by definition of} \sim) \\
            &= g^{-1} \cdot f_{p} \quad (\text{by definition of} \;\phi_{\sigma}) \\
            &= g^{-1} \cdot \phi_{\sigma}(p).
        \end{aligned}
    \end{equation}
    Therefore, $ \quad \phi_{\sigma}(p\cdot g) = g^{-1} \cdot \phi_{\sigma}(p)$.
    
    Finally, we need to check that no information is lost:
    \begin{itemize}
        \item[(a)] Given $\sigma$, check $\sigma_{\phi_{\sigma}} = \sigma$
        \begin{equation}
            \begin{tikzcd}[row sep = 0.2 cm, column sep = 1.5cm]
                \sigma \arrow[dd, thick, equal, "?", swap] \arrow[rd, thick, bend left = 10] & {} \\ 
                {} & \phi_\sigma \arrow[ld, thick, bend left = 10] \\
                \sigma_{\phi_\sigma} & {}
            \end{tikzcd}
        \end{equation}
        \item[(b)] Given $\phi$, check $\phi_{\sigma_{\phi}} = \phi$
        \begin{equation}
            \begin{tikzcd}[row sep = 0.2 cm, column sep = 1.5cm]
                \phi \arrow[dd, thick, equal, "?", swap] \arrow[rd, thick, bend left = 10] & {} \\ 
                {} & \sigma_\phi \arrow[ld, thick, bend left = 10] \\
                \phi_{\sigma_\phi} & {}
            \end{tikzcd}
        \end{equation}
    \end{itemize}
    Indeed,
     \begin{itemize}
        \item[(a)] Let $x \in M$;
        \begin{equation}
            \begin{aligned}
                \sigma_{\phi_{\sigma}} &= [p,\phi_{\sigma}(p)] \quad \text{for any} \; p \in \pi^{-1}(x) \\
                &= [p,f_p] \quad f_{p} \in F \\
                &= \sigma(\pi(p)) \quad \text{by definition and} \; \pi(p) = x \\
                &= \sigma(x).
            \end{aligned}
        \end{equation}
        \item[(b)] Let $p\in P$. Then we have,
        \begin{equation}
            \begin{aligned}
                \phi_{\sigma_{\phi}}(p) &= \iota^{-1}_{p}(\sigma_{\phi}(\Pi(p))) \\
                &= \iota^{-1}_{p}([p,\phi(p)]) \\
                &= \phi(p).
            \end{aligned}
        \end{equation}
    \end{itemize}
\end{proof}
\newpage
Here, given $\sigma: U \subseteq M \to P_{F}$, with $\pi_{F} \circ \sigma = \text{id}_{U}$, we have constructed
    \begin{itemize}
        \item[($i$)] 
        \begin{equation}
            \phi_{\sigma}: P \to F
        \end{equation} 
        where $\phi_{\sigma}(p)= \iota^{-1}_{p}(\sigma(\pi(p)))$, which is $G$-equivariant.
        
        (Later, we shall see the "pull-back" of $\phi_{\sigma}$ makes much more sense! At this stage, it is a sort of an "abstract" field on $P$ that will be used to "produce" more familiar notions/fields on $M$.)

        Notice also that if $x\in U$, $\sigma(x)=[p,f]$ for some representatives $p\in P$ and $f\in F$. Then $\phi_{\sigma}$ above can be viewed as a tool providing a communication between $P$ and the fibre space $F.$ Using that tool, $\sigma$ will indeed be realized as a $F$-valued map $s:U \rightarrow F$ on $ M$ satifying $\pi_{F} \circ \sigma = \text{id}_{U}$. For example, when $F$ is a vector space, say $F:=T_xM$, $s$ is the map that assigns to each $x\in U$ a unique vector $s(x) \in F$.

        \item [($ii$)] Given a $G$-equivariant $\phi: P \to F$, we construct a section
        \begin{equation}
            \sigma_{\phi}: M \to P_{F},
        \end{equation}
        where $\sigma_{\phi}(X) = [p,\phi(p)]$ for some $p\in \pi^{-1}(X)$.
    \end{itemize}


Here we can make an important observation:
\begin{observation}
    Let 
    \begin{equation}
        \begin{tikzcd}[sep=1.2cm]
            P \arrow[r, thick, "\cdot G"] & P \arrow[d, thick, "\pi"] & \boxed{\omega} \arrow[l, thick, dashed, no head] \\ 
            {} & M & {}
        \end{tikzcd}
    \end{equation}
    be a principle $G$-bundle and  $\omega$ a connection one-form on $P$.

    Let $\phi: P \to F$ be a $G$-equivariant map, where $F$ is a vector space with a left $G$-action.

    I.e., $\phi: P \to F$ is a map such that  the diagram
    \begin{equation}
            \begin{tikzcd}[sep = 1.5cm, remember picture]
                |[alias = A]| P \arrow[r, thick, "\phi"] \arrow[d, thick, "\cdot G", swap] & 
                |[alias = B]| F \arrow[d, thick, "\cdot G"] \\ 
                |[alias = C]| P \arrow[r, thick, "\phi"]& 
                |[alias = D]| F
            \end{tikzcd}
            \tikz[overlay,remember picture]{%
                \node (Y1) [scale=1.5] at (barycentric cs:A=1,B=1,C=1,D=1) {$\circlearrowright$};
                }
    \end{equation} commutes:
    $\phi(p\cdot g) = g^{-1} \cdot \phi(p)$    $\forall p \in P,$ $\forall g \in G$
\end{observation}

Consider the one-parameter group of diffeomorphisms $\qty{\exp(tX) \;| \;X \in g}$. Then
the $G$-equivarience of $\phi$ implies $\underbrace{\phi(p \cdot \exp(tX)) = \exp(-tX) \cdot \phi(p)}_{(\star)}$

It follows from the equation \((\star)\) above that we obtain
\[
\underbrace{
\left.\frac{d}{dt}\right|_{t=0}\phi\bigl(p\cdot \exp(tX)\bigr)
}_{%
\substack{
\text{(we saw before)}\\[3pt]
\eqqcolon\, X^{\sharp}_{p}\phi \\[6pt]
\begin{aligned}
\iota:\; \mathfrak{g} &\to\Gamma(VP) \\
        X&\mapsto X^{\sharp}
\end{aligned}
}}
\;=\;
\underbrace{
\left.\frac{d}{dt}\right|_{t=0}\exp(-tX)\cdot \phi(p)
}_{\substack{
= -\,X\cdot \phi(p)\\[4pt]
X \in \mathfrak{g},\; \phi(p)\in F,\\
\text{and}\; \phi \text{ is an $F$-valued function},\\
\cdot\text{ is the left action of }\mathfrak{g}\text{ on }F\\
\text{induced by the left }G\text{-action on }F
}}
\]

Then we have
\begin{equation}
    X_{p}^{\sharp} \phi = - X \cdot \phi(p).
\end{equation}
Or equivalently,
\begin{equation}
    \forall p \in P, \quad d\phi(X_{p}^{\sharp}) = -\omega(X_{p}^{\sharp})\phi(p) \quad \text{where} \; X_{p}^{\sharp} \in V_{p}P
\end{equation}

In sum,  we simply showed:
\begin{equation}
   \boxed{d\phi(X) + \omega(X)\phi = 0 \quad \forall X \in \Gamma(VP)} \label{eq:omerstar}
\end{equation}

Before we proceed, it should be noted here that
    \begin{itemize}
        \item[1)] 
        \begin{equation}
            \begin{aligned}
                \omega_{p}: T_{p}P &\to \mathfrak{g} \\
                X_{p} &\mapsto \iota^{-1}(\mathsf{Ver}(X_{p}))
            \end{aligned}
        \end{equation}
        \item[2)] $\omega(X^{\sharp})=X \in \mathfrak{g}$ (note that $\iota(X)=X^{\sharp}$)
    \end{itemize}

If we insert a "vertical" vector field into $\omega$, then by using $\iota$, one can recover the element in $\mathfrak{g}$ from which the vertical field is obtained. I.e., let $Y\in V_{p}P \subset T_{p}P$ be a vertical vector at $p \in P$. Then we have:
    \begin{itemize}
        \item[(a)] By $\iota: \mathfrak{g} \xlongrightarrow{\cong } \Gamma(VP), \; \exists A \in \mathfrak{g}$ such that $\iota(A) = A^{\sharp}=Y$.
        \item[(b)] By definition of $\omega$, $\omega(Y)=\omega(A^{\sharp})=A$ (as $\omega = \iota^{-1})$.
    \end{itemize}

Now, we give a proposition generalizing the situation in Eqn. \ref{eq:omerstar} to any vector fields on $P$ --- not just vertical ones---: 
\begin{proposition}
For all $X_{p} \in T_{p}P$, we have
    \begin{equation}
        D\phi(X_{p}) = d\phi (X_{p}) + \omega(X_{p})\phi(p) ,
    \end{equation}
    where $\phi: P\to F$ as above and $D_{X}\phi \coloneqq D\phi(X)$.
\end{proposition}

\newpage
\begin{proof}
    To prove this, consider case by case,
    \begin{itemize}
        \item[(a)] if $X\in V_{p}P$ (i.e. hor$X$=0), then, the right-hand side is zero by Eq.~\ref{eq:omerstar}. The left-hand side has LHS = $D\phi(X) = d\phi(\mathsf{Hor}(X)) = 0$ , where the first equality holds by definition of exterior covariant derivative, and hor$(X) =0 $
        \item[(b)] If $X \in H_{p}P$, then,
            \begin{itemize}
                \item[$\cdot$] RHS = $d\phi(X) + \omega(X)\phi = d\phi(X)$, where $\omega(X) = 0$ since $X \in H_{p}P = \ker(\omega)$.
                \item[$\cdot$] LHS = $D\phi(X) = d\phi(\mathsf{Hor}(X)) = d\phi(X)$, because $X$ is already horizontal.
            \end{itemize}
    \end{itemize}
\end{proof}

What we have done so far is that we completed step 1 and step 2:
\begin{itemize}
    \item 
    \begin{equation}
       \begin{tikzcd}[sep=1.2cm]
        P_F \arrow[d, thick, "\pi_F"] \\ M \arrow[u, thick, dashed, "\sigma", bend right=50, swap]           
       \end{tikzcd} \qquad \xlongrightarrow{\text{by theorem}} \qquad \begin{tikzcd}[sep=1.2cm]
            {} & {} & \boxed{\omega} \arrow[ld, thick, dashed, no head] \\ 
            P \arrow[r, thick, "\cdot G"] & P \arrow[d, thick, "\pi"] & \boxed{\Omega} \arrow[l, thick, dashed, no head] \\ 
            {} & M & \boxed{D} \arrow[lu, thick, dashed, no head]
        \end{tikzcd} 
    \end{equation}
    
    That is, given a section $\sigma \in \Gamma(P_{F})$, we consider the corresponding map 
    $\phi_{\sigma}: P \to F$, which is $G$-equivariant.

    \item Let $X \in \Gamma(TP)$. Then, by Proposition, we defined the covariant derivative of $\phi_{\sigma}$ \underline{on $P$} in the direction of $X$:
    \begin{equation}
        D_{X}\phi_{\sigma} \coloneqq D\phi_{\sigma}(X) = d\phi_{\sigma}(X) + \omega(X)\phi_{\sigma}
    \end{equation}

    \item Still, it remains to define the covariant derivative \underline{on $M$}. To this end, we need to \emph{pull everything we constructed down onto $M$}:
    \begin{equation}
        \begin{tikzcd}[column sep = 0.5cm, row sep = 1.5cm]
            \boxed{\phi_\sigma: P \to F} \arrow[rd, dashed, no head] & {} & \boxed{\omega} \arrow[dashed, ld, no head] & {} \\ 
            {} & P_F \arrow[d, thick, "\pi_F"] & \boxed{D_x} \arrow[l, dashed, no head] \\ {} & M \arrow[u, thick, dashed, "\sigma", bend left=50, swap] & {}
        \end{tikzcd}
    \end{equation}
    \newpage
    Let $\phi: U \subseteq M \to P$ be a local section. Then we have:
    \begin{itemize}
        \item[$i)$] $\varphi^{*}\phi_{\sigma} = \phi_{\sigma} \circ \varphi: U \to F$ ($F$-valued map on $M$) denoted also by $s\coloneqq \varphi^{*}\phi_{\sigma}$.
        \begin{equation}
            \begin{tikzcd}[sep = 1.5cm, remember picture]
                |[alias = A]| U \arrow[r, thick, "\varphi"] \arrow[rd, dashed, "\text{pull-back}", swap] & |[alias = B]| P \arrow[d, thick, "\phi_\sigma"] \\ 
                {} & |[alias = C]| F & 
            \end{tikzcd}
            \tikz[overlay,remember picture]{%
                \node (Y1) [scale=1.5] at (barycentric cs:A=1,B=1,C=1) {$\circlearrowright$};
                }
    \end{equation}

        \item[$ii)$] $\varphi^{*}\omega \in \Omega^{1}(M) \otimes \mathfrak{g}$ denoted also by $\omega^{\mathnormal{u}}\coloneqq\varphi^{*}\omega$, which is Yang-Mills field.
        \item[$iii)$] $\varphi^{*}(D\phi_{\sigma}) \in \Omega^{1}(M) \otimes F$, where $D\phi_{\sigma} \in \Omega^{1}(M) \otimes F$.
    \end{itemize}
\end{itemize}

Then observe that for any $Y\in \Gamma(TM)$, we have 
\begin{equation}
    \begin{aligned}
        \underbrace{\varphi^{*}(D\phi_{\sigma})(Y)}_{\eqqcolon \nabla_{Y}s} &= \varphi^{*}(d\phi_{\sigma}+\omega\cdot\phi_{\sigma})(Y) \\
        &= \varphi^{*}d\phi_{\sigma}(Y) +\varphi^{*}(\omega\cdot\phi_{\sigma})(Y)  \\
        &= d\underbrace{(\varphi^{*}\phi_{\sigma})}_{\eqqcolon s}(Y) +\underbrace{(\varphi^{*}\omega)}_{\omega^{\mathnormal{u}} \in \Omega^{1}(M) \otimes\mathfrak{g}}(Y)\underbrace{\varphi^{*}\phi_{\sigma}}_{s},
    \end{aligned}
\end{equation}
where $s: U \subseteq M \to F$ is as above. Finally, we have
\begin{equation}
   \boxed{ \nabla_{Y}s = ds(Y)+\omega^{\mathnormal{u}}(Y)s.}
\end{equation}
    \begin{center}
        \includestandalone[width = 0.4\textwidth]{Figures/Part1/p105.1}\\ 
        \hypertarget{fig29}{Figure 29.} Local picture of a mapping $s:U\rightarrow F$
    \end{center}
But we still need to check that such an object, in fact, satisfies the conditions of being a covariant derivative that we imposed at the beginning, which is actually straightforward to verify! 

    \begin{center}
        \includestandalone[width = 0.55\textwidth]{Figures/Part1/p105.2}\\ 
        \hypertarget{fig30}{Figure 30.} Schematic for the map $s:U\rightarrow F$
    \end{center}

Let us make a side remark before the next topic.
\begin{remark}
    Given a "local" section $\sigma\in\Gamma(P):$,
    \begin{equation}
        \sigma: U \subseteq M\to P \quad \text{with} \; \pi \circ \sigma = \text{id}_{\mathnormal{u}}.
    \end{equation}
    When we pull-back $\omega$ by $\sigma$, we in fact obtain
    \begin{equation}
        \sigma^{*}\omega \in \Omega^{1}(U) \otimes \mathfrak{g},
    \end{equation}
    where $\Omega^{1}(U)$ is a \textit{local }$\mathfrak{g}$-valued one form on $U \subseteq M$.
Such local data can indeed give rise to a \textit{global} $\mathfrak{g}$-valued one form $\sigma^{*}\omega$ on $M$ by gluing patches over the overlap.

    \begin{center}
        \includestandalone[width = 0.4\textwidth]{Figures/Part1/p106}\\ 
        \hypertarget{fig31}{Figure 31.} Gauge transformation.
    \end{center}

    Consider another local section $\psi: V \to P$. Let $x\in U \cap V$; since $\sigma(x)$ and $\psi(x)$ lie in the same fiber, there exists a unique $g_{x} \in G$ such that $\psi(x)= \sigma(x) \cdot g_{x}$.

    In fact, there exists a natural map, called a \textbf{gauge transformation},
    \begin{equation}
        \begin{aligned}
            U \cap V &\to G \\
            x &\mapsto g_{x} \; \text{(given above)}
        \end{aligned}
    \end{equation}

    Recall, when pulling everything down to $M$ under a gauge transformation, we have
    \begin{equation}
        \sigma^{*}\omega \mapsto \underbrace{g^{-1}\cdot\sigma^{*}\omega\cdot g + dg\cdot g^{-1}}_{\psi^{*}\omega} \quad \text{on the overlap.} 
    \end{equation}
    Define an equivalence relation:
    \begin{equation}
        \sigma^{*}\omega \sim \psi^{*}\omega \iff \exists g\in G \;\text{such that} \; \psi^{*}\omega=g^{-1}\cdot\sigma^{*}\omega\cdot g + dg\cdot g^{-1}
    \end{equation}
    Using such a relation, glue $U$ and $V$ over the overlap(s) and obtain a \textit{global} $\mathfrak{g}$-valued one-form on $M$.
\end{remark}


\subsection{Parallel Transport and Holonomy on a Principal $G$-Bundle}

\paragraph{Construction of the parallel transport map.} Consider the following geometric setup:
    \begin{equation}
        \begin{tikzcd}[sep=1.2cm]
            P \arrow[r, thick, "\cdot G"] & P \arrow[d, thick, "\pi"] & {\boxed\omega }\arrow[l, thick, dashed, no head] \\ 
            {} & M & {}
        \end{tikzcd}
    \end{equation}
Here $\omega$ is a connection one-form on $P$, defining a distribution of $TP$
\begin{equation}
    TP = HP \oplus VP, \quad \text{where} \; HP\coloneqq \ker(\omega) .
\end{equation}

    \begin{center}
        \includestandalone[width = 0.35\textwidth]{Figures/Part1/p107.1}\\ 
        \hypertarget{fig32}{Figure 32.} Parallel transport from $\pi^{-1}(\gamma_0)$ to $\pi^{-1}(\gamma_1)$.
    \end{center}
Given a curve $\gamma: [0,1] \to M$, we want to define:
\begin{itemize}
    \item An unique horizontal curve $\tilde{\gamma}_{p}$ for an initial point $p\in\pi^{-1}(\gamma(0))$,
    \item The \textbf{parallel transport map}
    \begin{equation}
        \begin{aligned}
            T_{\gamma}: \pi^{-1}(\gamma(0)) &\longrightarrow \pi^{-1}(\gamma(1)) \\
            p &\longmapsto \tilde{\gamma}_{p}(1).
        \end{aligned}
    \end{equation}
\end{itemize}

\begin{definition}
    Let 
    \begin{equation}
        \begin{tikzcd}[sep=1.2cm]
            P \arrow[r, thick, "\cdot G"] & P \arrow[d, thick, "\pi"] \\ 
            {} & M
        \end{tikzcd}
    \end{equation}
    
    be a principal $G$-bundle, $\gamma: [0,1] \to M$ a curve in $M$.

    The \textbf{horizontal lift} $\tilde{\gamma}_{p}$ of $\gamma$ with initial point $p \in \pi^{-1}(\gamma(0))$ (where for each choice of $p$, there exists a unique horizontal lift  $\tilde{\gamma}_{p}$) is an unique curve in $P$ such that
    
      \begin{center}
        \includestandalone[width = 0.45\textwidth]{Figures/Part1/p107.2}\\ 
        \hypertarget{fig33}{Figure 33.} Parallel transport of vectors.
    \end{center}
    
    \begin{itemize}
        \item $\gamma = \pi \circ \tilde{\gamma}_{p}$.
        \item For $X_{\tilde{\gamma}_{p}(t)}\in H_{\tilde{\gamma}_{p}(t)}P$, ver$(X_{\tilde{\gamma}_{p}(t)}) = 0$ $\forall X_{\tilde{\gamma}_{p}(t)} \in T_{X_{\tilde{\gamma}_{p}(t)}}P$, each vector along $\tilde{\gamma}_{p}(t)$ is horizontal.
        \begin{equation}
            \begin{tikzcd}[sep = 1.5cm]
                T_{\tilde{\gamma}_p(t)}P \arrow[d, thick, "\pi_{*,\tilde{\gamma}_p(t)}"] \\ T_{\gamma(t)}M
            \end{tikzcd}
        \end{equation}
        \item For $X_{\tilde{\gamma}_{p}(t)} \in T_{X_{\tilde{\gamma}_{p}(t)}}P$, $\pi_{*,\tilde{\gamma}_{p}(t)}(X_{\tilde{\gamma}_{p}(t)})=X_{\gamma(t)} \in T_{{\gamma}(t)}M$.
    \end{itemize}
\end{definition}

So we have a couple of questions concerning this situation:
\begin{itemize}
    \item How to construct/characterize such a lift?
    \item What is the governing ODE for defining such a lift $\tilde{\gamma}_{p}(t)$?
\end{itemize}
KEY: A connection indeed defines such a lift with initial point $\tilde{\gamma}_{p}(0)$. Here are the essential steps:
      \begin{center}
        \includestandalone[width = 0.45\textwidth]{Figures/Part1/p108.1}\\ 
        \hypertarget{fig34}{Figure 34.} Lift of a curve.
    \end{center}
\begin{itemize}
    \item[\text{STEP 1:}] Take an arbitrary curve $\delta: [0,1]\to P$ starting from $p\in \pi^{-1}(\gamma(0))$ such that $\pi \circ \delta = \gamma$
    \item[\text{STEP 2:}] Find a suitable curve  $g: [0,1] \to G$ such that $\tilde{\gamma}(t) \coloneqq \delta(t)\cdot g(t)$ where $\cdot$ is a right $G$-action on $P$. 
\end{itemize}

So, the problem of defining $\tilde{\gamma}$ reduces to finding a suitable ODE that defines the curve $g(t)$.

\begin{theorem}
    For a matrix Lie group $G$, the ODE defining $g(t)$ above is,
    \begin{equation}
        g^{-1}(t)\cdot\omega_{\delta(t)}(X_{\delta(t)}) \cdot g(t) +  g^{-1}(t)\cdot \dot{g}(t) = 0 \quad 
    \end{equation}
    or equivalently,
    \begin{equation}
        \dot{g}(t) = -\omega_{\delta(t)}(X_{\delta(t)}) \cdot g(t) \label{eq:omerinsidetheorem1}
    \end{equation}
    where "$\cdot$" is the matrix multiplication. 
\end{theorem}

\begin{observation}
    Pick a local chart $(U,x)$ on $M$ together with a local section $\sigma: U \to P$. 
      \begin{center}
        \includestandalone[width = 0.5\textwidth]{Figures/Part1/p108.2}\\
        \hypertarget{fig35}{Figure 35.} Schematic for $\delta=\sigma\circ\gamma$.
    \end{center}
    
    Then we have
    \begin{itemize}
        \item $\delta \coloneqq\sigma \circ\gamma$, and hence $\sigma_{*}(X_{\gamma(t)}) = X_{\delta(t)} \; \forall t$, where $X_{\gamma(t)} \in T_{\gamma(t)}M$ and $X_{\delta(t)} \in T_{\gamma(t)}P$.
        \item The ODE above becomes: (let's focus on the term with the connection one-form)
        \begin{equation}
        \begin{aligned}
            \omega_{\delta(t)}(X_{\delta(t)}) &= \omega_{\delta(t)}(\sigma_{*,\gamma(t)}(X_{\gamma(t)}))  \quad \text{for} \; X_{\gamma(t)}\in T_{\gamma(t)}M \\
            &= \underbrace{(\sigma^{*}\omega)_{\gamma(t)}}_{\text{Yang-Mills field}}(X_{\gamma(t)}) \\
            &= \omega^{\mathnormal{u,\sigma}}_{\gamma(t)}(X_{\gamma(t)}) \\
            &= \Gamma_{\mu}(\gamma(t))\underbrace{X^{\mu}_{\gamma(t)}}_{\dot{\gamma}^{\mu}(t)}, 
        \end{aligned}
        \end{equation} 
    \end{itemize}
    where the first equation holds due to the  condition above, the second equation holds by the definition of pull-back, and we use the component form of the equation in the last line with the map
    \begin{equation}
        \begin{aligned}
            \Gamma_{\mu}: M &\to \mathbb{R} \\
            x &\mapsto \Gamma_{\mu}(x).
        \end{aligned}
    \end{equation}
    Hence, Eq.~\ref{eq:omerinsidetheorem1} becomes
    \begin{equation}
        \dot{g}(t) = -\Gamma_{\mu}(\gamma(t))\dot{\gamma}^{\mu}(t) g(t), \label{eq:omerinsidetheorem2}
    \end{equation}
    with the initial condition $g(0) = g_{0}$.
\end{observation}

Now, the question is how to solve Eq.~\ref{eq:omerinsidetheorem2}. We can try the following.
\begin{itemize}
    \item[(1)] Integrate Eq.~\ref{eq:omerinsidetheorem2}, and get
    \begin{equation}
        g(t) = g_{0} - \int^{t}_{0}{\Gamma_{\mu}(\gamma(\lambda))\dot{\gamma}^{\mu}(\lambda) g(\lambda)d\lambda}.
    \end{equation}
    \item[(2)] Substitute recursively; i.e., substitute $(1)$ to the right-hand side of $(1)$,
    \begin{equation}
        \begin{aligned}
            g(t) &= g_{0} - \int^{t}_{0}{d\lambda_{1}}\Gamma_{\mu}(\gamma(\lambda_{1}))\dot{\gamma}^{\mu}(\lambda_{1}) \Big[ {g_{0} - \int_{0}^{t}{\Gamma_{\mu}(\gamma(\lambda_{2}))\dot{\gamma}^{\mu}(\lambda_{2}) g(\lambda_{2})d\lambda_{2}}}  \Big] \\
            &= g_{0} - g_{0}\int_{0}^{t}{d\lambda_{1}\Gamma(\lambda_{1})} + \int^{t}_{0}{d\lambda_{1}\int^{\lambda_{1}}_{0}{d\lambda_{2}}{\Gamma(\lambda_{1})\Gamma(\lambda_{2})}g(\lambda_{2})}.
        \end{aligned}
    \end{equation}
    \item[(3)] Repeat the same argument. Then we have 
    \begin{equation}
         \begin{aligned}
            g(t) = &g_{0}\Big[ 1_{g}-\int_{0}^{t}{d\lambda_{1}\Gamma(\lambda_{1})}+
            \underbrace{
            \int_{0}^{t}{d\lambda_{1}\int^{\lambda_{1}}_{0}{d\lambda_{2}\Gamma(\lambda_{1})\Gamma(\lambda_{2})}}}_{\frac{1}{2}\int_{0}^{t}{d\lambda_{1}\int^{t}_{0}{d\lambda_{2}}}\Gamma(\lambda_{1})\Gamma(\lambda_{2})} \\
            &-\underbrace{\int_{0}^{t}{d\lambda_{1}\int^{\lambda_{1}}_{0}{d\lambda_{2}\int_{0}^{\lambda_{3}}{d\lambda_{3}\Gamma(\lambda_{1})\Gamma(\lambda_{2})\Gamma(\lambda_{3}) }}}}_{\frac{1}{6}\int^{t}_{0}\int^{t}_{0}\int^{t}_{0}\cdots} 
            +\dots \Big] \\
            &= \mathcal{P} \exp{-\int^{t}_{0}{d\lambda \Gamma(\lambda)}} g_{0},
        \end{aligned}
    \end{equation}
    where $\mathcal{P}$ is the \textit{path ordering} that controls the order of the iterated integral. 
    
    We may also denote the integral term above simply by  $\int^{t}_{0}{\omega d\lambda}$, with $\omega$ a connection one-form. Additionally, $\Gamma(\lambda)$ was
    \begin{equation}
        \begin{aligned}
            \Gamma(\lambda) &\coloneqq \Gamma_{\mu}(\gamma(\lambda))\dot{\gamma}^{\mu}(\lambda) \\
            &= (\sigma^{*}\omega)_{\mu}(\gamma(\lambda))\dot{\gamma}^{\mu}(\lambda) \\
            &= (\sigma^{*}\omega)_{\gamma(\lambda)}(X_{\gamma(\lambda)}),
        \end{aligned}
    \end{equation}
    called the \textit{Yang-Mills field connection one-form} on $M$.
\end{itemize}

Then we define any horizontal lift $\tilde{\gamma}$ as follows:
\begin{equation}
    \tilde{\gamma} \coloneqq \delta(t) \cdot g(t) = (\sigma \circ \gamma)(t) \cdot \mathcal{P} \exp{-\int^{t}_{0}{d\lambda \Gamma(\lambda)}} g_{0},
\end{equation}
with $g(0) = g_{0}$.\\ 

Hence, the \textbf{parallel transport map} is (as claimed) given as 

  \begin{center}
        \includestandalone[width = 0.5\textwidth]{Figures/Part1/p110}\\ 
        \hypertarget{fig36}{Figure 36.} Schematic for the parallel transport, $T_\gamma(p)$, over a curve $\gamma$.
    \end{center}
Here
\begin{equation}
    \begin{aligned}
        T_{\gamma}: \pi^{-1}(\gamma(0)) &\xlongrightarrow{\cong} \pi^{-1}(\gamma(t)) \\
        p &\longmapsto \tilde{\gamma}_{p}(1),
    \end{aligned}
\end{equation}
where there exists a unique lift $\tilde{\gamma}$ starting from $p \in \pi^{-1}(\gamma(0))$. For each $p \in \pi^{-1}(\gamma(0))$, $\tilde{\gamma}$ is defined uniquely as above.

\begin{remark}
    Since $(\cdot g)_{*} \; H_{p}P = H_{p \cdot g}P \quad \forall p \in P$ and $\forall g \in G$, $T_{\gamma}$ is a bijection.   
\end{remark}

\paragraph{The case of loops and the Holonomy groups.} Using the above setup, let us consider now the parallel transport map acting on loops: Given a loop $\gamma$ in $M$,

  \begin{center}
        \includestandalone[width = 0.5\textwidth]{Figures/Part1/p111}\\
        \hypertarget{fig37}{Figure 37.} Parallel transport along loops.
    \end{center}

\begin{itemize}
    \item[1)] The lift $\tilde{\gamma}$ of a loop $\gamma$ is \textit{not} necessarily a loop at $p$.
    \item[2)] However, it still returns to the same fibre over the base point $x_{0}$ of the loop $\gamma$, i.e.,
    \begin{equation}
        \begin{aligned}
            &p = \tilde{\gamma}(0) \in \pi^{-1}(x_{0}), \\
            &q = \tilde{\gamma}(1) \in \pi^{-1}(x_{0}),
        \end{aligned}
    \end{equation}
    where $x_{0} \coloneqq \gamma(0) = \gamma(1)$. But it may land some other point $q \neq p$ in $\Pi^{-1}(x_{0})$. 
     \item[3)] So, there exists a unique $g_{p,\gamma} (\text{or just} \; g_{p}) \in G$ such that $q = p \cdot g_{p}$, there exists a map,
     \begin{equation}
         \begin{aligned}
             \pi^{-1}(x_{0}) &\longrightarrow G \\
             p &\longmapsto g_{p,\gamma}.
         \end{aligned}
     \end{equation}
     Therefore, the parallel transport map $T_{\gamma}$ reduces to,
     \begin{equation}
         \begin{aligned}
             T_{\gamma}: \pi^{-1}(\gamma(0)=x_{0}) &\longrightarrow \pi^{-1}(\gamma(1)=x_{0}) \\
             p &\longmapsto \tilde{\gamma}_{p}(1),
         \end{aligned}
     \end{equation}
     where $\tilde{\gamma}_{p}(1)=  p \cdot g_{p}$ for some $g_{p} \in G$.
     \item[4)] We define 
     \begin{equation}
         \mathsf{Hol}_{p}(\omega) \coloneqq \qty{g_{p,\gamma}:\tilde{\gamma}_{p}(1) = p \cdot g_{p}} = \qty{g_{p,\gamma}: p \sim p \cdot g_{p,\gamma}}
     \end{equation}
     This is the \textbf{holonomy group} of $\omega$ at $p \in \pi^{-1}(x_{0})$. Or we can just introduce an equivalence relation $p \sim q$ in $\pi^{-1}(x_{0})$ if and only if there exists a unique $g_{p} \in G$ such that $g=p\cdot g_{p}$ as we denoted in the second equality.
     It is subgroup of $G$ such that $$p\sim q \implies \mathsf{Hol}_{p}(\omega) = \mathsf{Hol}_{q}(\omega)$$
\end{itemize}

\newpage
Some properties of $\mathsf{Hol}_{p}(\omega)$:
\begin{itemize}
    \item[(1)] The \textbf{restricted holonomy group} at $p$ is the subgroup $\mathsf{Hol}^{0}_{p}(\omega)$ coming from horizontal lifts of contractible loops $\gamma$.
    \item[(2)] If $M$ and $P$ are connected, then $\mathsf{Hol}_{p}(\omega)$ depends on $p$ up to conjugacy, i.e., for any $q= p \cdot g$.
    \begin{equation}
        \mathsf{Hol}_{p}(\omega) = g^{-1} \mathsf{Hol}_{p}(\omega) g.
    \end{equation}
    
    \item[(3)] $\mathsf{Hol}^{0}_{p}(\omega) \subseteq G$ is connected Lie subgroup.
    \item[(4)] There exists a surjective map $\pi_{1}(M) \to \mathsf{Hol}^{0}_{p}(\omega)/\mathsf{Hol}_{p}(\omega)$
    \item[(5)] If $M$ is simply connected (i.e. $\pi_{1}(M) =0$), then $ \mathsf{Hol}_{p}(\omega) = \mathsf{Hol}^{0}_{p}(\omega)$.
    \item[(6)] $\omega$ is flat if and only if $\mathsf{Hol}^{0}_{p}(\omega)$ is trivial.
\end{itemize}

\section{A Glimpse at Witten's Setup}

We essentially consider the $(2+1)$ dimensional $SU(2)$ Yang-Mills theory with purely Chern-Simons functional: Let us consider a principal $SU(2)$-bundle over a $3$-manifold $M$ (in particular, we will take $M = S^3$) equipped with a connection one-form $\omega$ on $P$. That is, we have the diagram
\begin{equation}
    \begin{tikzcd}[sep=1.2cm]
        P \arrow[r, thick, "\cdot SU(2)"] & P \arrow[d, thick, "\pi"] & \boxed{\omega} \arrow[l, thick, 
        dashed, no head] \\ 
        {} & M & {}
    \end{tikzcd}
\end{equation}
Here, $\omega \subset \Omega^1 (P) \otimes g$ and $\mathcal{A}$ is the \textit{space of all connections} on $P$ Note that since $SU(2)$ is connected, any principal bundle over $M$ is trivial. Now, consider the Chern-Simons functional
\begin{equation}
    \text{CS}: \mathcal{A} \xlongrightarrow{} \mathbb{R}/\mathbb{Z} \cong S^1,
\end{equation}
where
\begin{equation}
    \text{CS}(A) : = \frac{k}{4\pi} \int_M \Tr{\omega \wedge d \omega + \frac{2}{3}  \omega \wedge \omega \wedge \omega}
    \quad \quad
    k \in \mathbb{Z}.
\end{equation}
The above integrand is independent of the metric. Hence, it produces a "generally covariant" theory.
\begin{observation}
    Critical points of the CS functional form the set
    \begin{equation}
        \{ \omega \in \mathcal{A} \;\big|\; \underbrace{d\omega + \omega \wedge \omega}_{\Omega \in \Omega^2 (p) \otimes g} = 0 \}.
    \end{equation}
    That is, the critical points are the \textit{flat connections}. We will work on the moduli space of flat connections in more detail!
\end{observation}

Here are some more essential observations. From now on, we will denote the connection $1$-form by $A$.

\begin{theorem}
    (Atiyah-Bott I)\\ 

    $\mathcal{A}$ above is infinite-dimensional symplectic manifold together with a symplectic form $\omega_0: T_A \mathcal{A} \times T_A \mathcal{A} \xrightarrow{} \mathbb{R}$ given by

    \begin{equation}
        \omega_0 (a, b) = \int_\Sigma \Sigma_{i,j} \underbrace{a_i \wedge b_j \langle X_i, X_j \rangle}_{\in \Omega^2 (\Sigma)},
    \end{equation} where the inner product is $ G$-invariant, 
    and

    \begin{equation}
        a = \sum_i a_i \otimes X_i \;, \quad b = \sum a_i \otimes X_i  \in (\Omega^1_{\mathsf{Hor}} (p) \otimes \mathfrak{g})^G,
    \end{equation} such that $X_1, \cdots, X_n$ is a basis for $\mathfrak{g}$.


    \begin{center}
        \includestandalone[width=0.3\textwidth]{Figures/Part1/p114}\\
        \hypertarget{fig38}{Figure 38.} A Riemannian surface.
    \end{center}
    On \hyperlink{fig38}{Fig. 38}, $\Sigma$ is a 2-dimensional Riemannian manifold, indeed a \textit{Riemann surface}. Here, we have cut out $M$ by $\Sigma$. 
\end{theorem}

\begin{observation}
    Consider, 
    \begin{equation}
        \begin{tikzcd}
            P \arrow[r, thick, "SU(2)"] & P \arrow[d, thick, "\pi"] \\ 
            {} & M
        \end{tikzcd}  ,  
    \end{equation}

    and a diffeomorphism $f: P \xlongrightarrow{\cong} P$ commuting with the $G$-action which gives rise to a diffeomorphism $f_0: M \xlongrightarrow{\cong}M$. This is something we did before.
    \begin{equation}
    \begin{tikzcd}[sep = 1.5cm, remember picture]
        |[alias = A]| P \arrow[r, thick, "f"] & 
        |[alias = B]| P \\ 
        |[alias = C]| P \arrow[u, leftrightarrow, thick, "\cdot g"] \arrow[r, thick, "f"] \arrow[d, thick, "\pi"] & 
        |[alias = D]| P \arrow[u, thick, "\cdot g", swap] \arrow[d, thick, "\pi"] \\ 
        |[alias = E]| M \arrow[r, thick, dashed, "f_0"] &  
        |[alias = F]| M
    \end{tikzcd}
    \tikz[overlay,remember picture]{%
        \node (Y1) [scale=1.5] at (barycentric cs:A=1,B=1,C=1,D=1) {$\circlearrowright$};
        }
\end{equation}
\end{observation}

\begin{definition}
    A diffeomorphism $f: P \xrightarrow{\cong} P$ commuting with $G$-action is called a \textbf{gauge transformation} if the induced map $f_0 = \text{id}_M$. We denote by $\mathcal{G}$, the group of all gauge transformations.
\end{definition}

Note that a connection $1$-form $A$ on $P$ defines a connection via $HP = \text{ker} (A)$. i.e.

\begin{equation}
    TP = VP \oplus HP
\end{equation}

If $f \in \mathcal{G}$, then $f_{*, p} : T_p P \xlongrightarrow{\cong} T_{f(p)} P$ induces an action of $\mathcal{G}$ on $\mathcal{A}$.

\begin{equation}
    \begin{tikzcd}[sep=1.5cm]
        P \arrow[r, thick, "f"] \arrow[d, thick] & P \arrow[d, thick] \arrow[r, thick, squiggly] & f_*=TP \rightarrow TP  \\ 
        M \arrow[r, thick, "f_0=\text{id}"] & M & {}
    \end{tikzcd}
\end{equation}
where $TP=VP\oplus HP$ and $f_*$ takes $VP\oplus HP$ to $VP\oplus HP_f$. That is, \emph{$\mathcal{G}$ acts on $\mathcal{A}$ by $f_*$.}

\begin{theorem}
    (Atiyah-Bott II)\\
    
    $(\mathcal{A}, \omega_0, \mathcal{G}, \mu)$ is indeed a Hamiltonian $\mathcal{G}$-space with the moment map
\begin{align}
        \mu : & \ \mathcal{A} \xlongrightarrow{} \mathfrak{g}^*_{\mathcal{G}} \nonumber \\
        & A \longmapsto \Omega_A \quad \text{ curvature }  of A.
    \end{align}
\end{theorem}

    By the symplectic reduction theorem, we get
    \begin{equation}
        M : = \mu^{-1}(0) / \mathcal{G}
    \end{equation}
    is a symplectic manifold. This is the moduli space of flat connections mod gauge transformations.

\begin{theorem}
    (Atiyah-Bott III)\\

    $\mathcal M$ above is a finite-dimensional symplectic manifold. 
\end{theorem}
This is where geometric quantization comes into play, since we now have a suitable symplectic manifold to quantize. Later, we shall elaborate on the following arguments.

\begin{itemize}
    \item $\displaystyle  Z(M) = \int_{\mathcal M} DA' \exp(i\text{CS})$ is the partition function. The integral is over the moduli space of all flat connections. $Z(M)$ yields a $3$-manifold invariant!
    \item More generally, by introducing a functional $\theta(A)$ associated to a connection one-form $A$ on $P$, one can construct an invariant for the data definig $\theta(A)$ as follows:

    \begin{equation}
        Z(M, \theta) = \int DA' \exp(i\text{CS}) \theta(A). 
    \end{equation}

    To derive a knot invariant, we let $M=S^3$, and $C$ a knot in $S^3$, with $G = SU(2)$, such that  
    \begin{equation}
        \theta (A) := \Tr_R (\mathsf{Hol}_C(A)) \quad \left(\text{recall our shorthand notation} \; \theta(A):= \Tr_R P \exp \oint_C A \right),
    \end{equation}

    where $R$ is a certain representation of $SU(2)$.
    \begin{center}
        \includestandalone[width=0.2\textwidth]{Figures/Part1/p116.1}\\ 
        \hypertarget{fig39}{Figure 39.} A trefoil knot.
    \end{center}
    \item

    In accordance with Sigma model (or  TQFT in the sense of Atiyah). Let
    \begin{equation}
        M = (M_1\amalg M_2)/\Sigma
    \end{equation}
    where $M_1$ and $M_2$ are glued together along their boundaries. I.e., take a homeomorphism $\Sigma \xrightarrow{\varphi} \Sigma$ and define

    \begin{equation}
        M = M_1 \amalg M_2/x \sim \varphi(x) \quad \forall x \in \mathbb{Z}.
    \end{equation}

    \begin{center}
        \includestandalone[width=0.7\textwidth]{Figures/Part1/p116.2} \\ 
        \hypertarget{fig40}{Figure 40.} A manifold $M$ cut by a Riemannian surface $S$, yielding two manifolds $M_1,M_2$.
    \end{center}

    Then we have

    \begin{equation}
        Z(M) = \langle Z(M_1, Z(M_2)\rangle,
    \end{equation}

    where $Z_\Sigma$ is a vector space associated to $\Sigma$, the \textit{quantum Hilbert space}, and $ Z(M_1) \in Z_\Sigma$ is a vector in that quantum Hilbert space; $Z(M_2) \in Z_{\Bar{\Sigma}} = Z_\Sigma^*$ is a covector in the dual space of $Z_\Sigma$, with $Z_\Sigma^* = \mathsf{Hom} (Z_\Sigma; \mathbb{C})$. Note that the identification $Z_{\Bar{\Sigma}} = Z_\Sigma^*$ is by the axioms of TQFT.  

    In fact, a TQFT $Z$ is a functor in the sense of category theory, as we will see later!
    
    \begin{center}
        \includestandalone[width=0.5\textwidth]{Figures/Part1/p117}\\
        \hypertarget{fig41}{Figure 41.} The manifold $M_1$ with a vector space $Z_\Sigma$ assigned to its boundary, $\partial M_1 = \Sigma$.
    \end{center}
    
    The moduli space $\mathcal M$ of flat connections serves as the desired symplectic manifold assigned to $\Sigma$ so that one can construct the quantum Hilbert space $Z_\Sigma$ by means of the geometric quantization procedure designed for special symplectic manifolds.

\end{itemize}

Let's give a quick preview of what is coming next.

\begin{itemize}
    \item We shall see $Z_\Sigma$ will be the space of holomorphic section of a "certain" complex line bundle over $M$.
    \item By using dimensionality of $Z_\Sigma$, we will derive some relation in terms of a partition function $Z$ depending on some possible line configuration.
    \item That derived relation turns out to be the Skein-like defining relations for the Jones polynomials.
    \item That framework can be extended to any $3$-manifold, not just $S^3$!
\end{itemize}

\chapter{Lectures on Geometric Quantization}
The main goal of this section is to understand how to appropriately assign a suitable Hilbert space to a given symplectic manifold \( M \) based on a specific set of axioms and the formalism called \textit{geometric quantization}. 

More formally, we aim to construct a Lie algebra representation of a Poisson algebra—the \textit{algebra of observables}— defined as the space of smooth functions on a given symplectic manifold \(M\)-often called the \textit{phase space}, equipped with a Poisson bracket. 

Additionally, we will examine the Sigma model of quantum field theory (QFT) and the formulation of topological quantum field theory (TQFT) in the sense of  Atiyah. \\

\noindent \textbf{Main references:}

Ko HONDA, Lecture notes for MATH 635: Topological Quantum Field Theory \cite{HondaTopologicalQFT}.

Matthias BLAU, Symplectic Geometry and Geometric Quantization \cite{BlauGQ}.\\

\noindent We will primarily reference Honda’s lecture notes \cite{HondaTopologicalQFT}, specifically chapters 9.2 to 12. The topics we will cover include:  
\begin{enumerate}
    \item Towards the quantization: Poisson Structure, vector bundles, connection and covariant derivative on a vector bundle.
\item 	Axioms of Geometric Quantization, construction of a (pre)quantum line bundle and a suitable Lie algebra representation on (pre)quantum Hilbert space which is a space of certain sections of the (pre)quantum line bundle.
\item Path integrals, Sigma model and formulation of TQFT.
\end{enumerate}

\newpage
\section{Main Ingredients and Towards the Quantization}

Recall the Poisson algebra: Given a symplectic manifold $(M , \omega)$, let $f \in C^\infty (M)$. We defined the \textit{Hamiltonian vector field} $X_f$ for $f$ by the equation
\begin{equation}
    \iota_{X_f} \omega (\cdot) = df (\cdot)
    \quad
    (\text{the one-form $\iota_{X_f} \omega$ is exact}).
\end{equation}

By studying the corresponding \textit{local flow} $\gamma_f$ 
of Hamiltonian vector field
$X_f$, we rederived 
Hamilton equations as outlined in the example below.

\begin{example}
   Consider  $(\mathbb{R}^{2n}, \omega)$ with $\omega = \sum_i^n dq_i \wedge dp_i$, and the
     coordinates $(q_1, \dots, q_n, p_1, \dots, p_n)$.

Given $H \in C^\infty (\mathbb{R}^{2n})$, its Hamiltonian vector field $X_H$ is given by

\begin{equation}
    X_H = \sum_i \pdv{H}{p_i} \pdv{}{q_i} - \pdv{H}{q_i} \pdv{}{p_i}.
\end{equation}

Consider its local flow $\gamma_H (t) = (q_1 (t), \dots, q_n(t), p_1(t), \dots, p_n(t) )$
which gives (by dynamics of flow)

\begin{equation}
    \dot{\gamma}_H = X_H \iff \dot{q}_i = \pdv{H}{p_i}, \quad
    \dot{p}_i = - \pdv{H}{q_i}.
\end{equation}

We have a natural map between vector spaces
\begin{align}\label{naturalmap in example}
    C^\infty (M) &\xlongrightarrow{} \textsf{Hom}(M) \subseteq \Gamma(TM) \nonumber \\
    f &\longmapsto X_f.
\end{align}

Define a bilinear map $\qty{\cdot,\cdot}: C^\infty (M) \times C^\infty(M) \xrightarrow{} C^\infty (M)$ by

$$
\forall f, g \in C^\infty(M), \quad \{f, g\} := - \omega (X_f, X_g).
$$
\end{example}

\begin{observation}
    \begin{itemize}
        \item[$1)$] $\qty{\cdot, \cdot}$ is skew symmetric and satisfies the Jacobi identity. Hence,
    $\qty{\cdot, \cdot}$ gives a Lie algebra structure on $C^\infty(M)$. That is,

    $(C^\infty (M), \; \qty{\cdot, \cdot}$ is a Lie algebra (which is called \textit{Poisson algebra}).
    The vector space is infinite-dimensional. It is an algebra with point-wise multiplication
    $(f \cdot g)(x) = f(x) g(x), \quad \forall x \in M$
    \item[$2)$]
    $\textsf{Hom}(M) \subseteq \Gamma(TM)$ is a $C^\infty (M)$-\text{submodule}.
    It inherits a Lie bracket $[.,.]_0$ from $(\Gamma(TM), \; [\cdot,\cdot]_{0})$
    where $[\cdot,\cdot]_{0}$ is the usual commutator:
    \begin{equation}
        [X,Y]_{0} = XY - YX.
    \end{equation}
    \item[$3)$] \textbf{Lemma}: $\forall X_{f},X_{g} \in \textsf{Hom}(M)$,
    \begin{equation*}
        [X_{f},X_{g}] = X_{\qty{f,g}} \in \textsf{Hom}(M),
    \end{equation*}
    hence, $[\cdot,\cdot]_{0}$ defines a Lie algebra structure on $\textsf{Hom}(M)$, i.e., $(\textsf{Hom}(M),[\cdot,\cdot]_{0})$ is also a \textit{Lie algebra}.
    \item[$4)$] The map in (\ref{naturalmap in example})
    is a Lie algebra homomorphism:
    \begin{equation}
    \begin{aligned}
        (C^\infty (M), \; \qty{\cdot, \cdot}) &\xlongrightarrow{\varphi} (\textsf{Hom} (M), \; [\cdot,\cdot]_{0})\\
        f &\longmapsto X_f\\
        \qty{f,y} &\longmapsto X_{\qty{f, g}} = [X_f, X_g]
        = [\varphi(f), \varphi(g)].
    \end{aligned}
    \end{equation}
    \end{itemize}
    
\end{observation}

    Observe that
    \begin{itemize}
        \item $\varphi$ assigns to $f$ a "differential operator".
        \item The condition $[\varphi(f), \varphi(g)] = \varphi(\{f,g\})$
        looks like \textit{Dirac's quantization condition!} That is,
        \begin{equation}
            [\hat{f}, \hat{g}] = - i \hbar \widehat{\qty{f,g}}.
        \end{equation}
    \end{itemize}

This is the first 'primitive step' towards geometric quantization! However, we have several loose ends:

\begin{itemize}
    \item What is the suitable quantum Hilbert space $\mathcal{H}$ on which 
    $\hat{f}$ above acts?
    (In this case, it is $X_f$)
    \item 
    Is the argument $f \xrightarrow{}X_f$ suitable? If it is, in what sense? We
    later see that we need compatibility with axioms of quantization.. We'll see that one indeed needs to modify it. To determine the $\mathcal{H}$ above, we need to study a particular vector bundle $E$ over $(M,\omega)$ (in fact, a complex line bundle over $M$). The modification of $\varphi$ is then related to a choice of a suitable connection on $E$.
\end{itemize}

\section{Connection on a Vector Bundle}

Let $E \xrightarrow{\pi} M$ be a vector bundle of rank $k$, where rank $k$ means each fibre is a vector space of dimension $k$.
 Let $s: U \subseteq M \rightarrow E$ be a \textit{local section},  $s\in \Gamma(E,U)$, with $\pi \circ s = id_U$. \\ 
Recall that every vector bundle is locally trivial. That is, there exists a cover $\{U_\alpha\}$ of $M$ such that $E|_{U_\alpha}$ is trivial (i.e. $E|_{U_\alpha} \simeq U_\alpha \times \mathbb{C}^k$). Equivalently, there exists $k$ local section $s_1^\alpha, \cdots, s_k^\alpha$ on $U_\alpha$ with
\begin{equation}
    s_i^\alpha: U_\alpha \xrightarrow{} E, \quad \pi \circ s_i^\alpha = \text{id}_{U_\alpha},
\end{equation}
such that $\forall p \in U_\alpha$, $s_1^\alpha(p), \cdots, s_k^\alpha(p)$ forms a basis for $\pi^{-1}(p)$.

\begin{center}
        \includestandalone[width=0.45\textwidth]{Figures/Part2/p4} \\ 
        \hypertarget{fig42}{Figure 42.} Local sections on a bundle.
\end{center}

Let $X \in \Gamma (TM)$. We want to differentiate $s$ at $p$ in the direction of $X_p \in \Gamma(E, U)$.

\begin{definition}
    A \textbf{connection} or \textbf{covariant derivative} $\nabla$ assigns to every vector field $X \in \Gamma(TM)$ a differential operator
    \begin{equation}
        \nabla_X : \Gamma(E) \xrightarrow{} \Gamma(E),
    \end{equation} satisfying the following conditions:
    \begin{itemize}
        \item $\nabla_X$ is $\mathbb{C}$-linear:
        $\nabla_X (c_1s_1 + c_2s_2) = c_1 \nabla_Xs_1 + c_2\nabla_2$.
        \item $\nabla_X$ is $C^\infty(M)$-linear:
        $\forall f,g \in C^\infty(M), \quad f \nabla_X s + g \nabla_y s$.
        \item Leibniz rule:
        $\nabla_X fs = f\nabla_X s + (Xf) s
        \quad \forall f \in C^\infty(M), \forall s \in \Gamma(E)$.
    \end{itemize}
\end{definition}

\begin{remark}
    $\nabla_X$ depends on $X$ \textit{tensorially}; $\nabla_{\qty{\cdot}} \qty{\cdot}$ is such that the subscript slot behave like a tensor and the other slot behaves like a derivation.
\end{remark}
\begin{definition}
    Let $E \xrightarrow{} M$ be a vector bundle of rank $k$, $E|_U$ a local trivialization with sections $s_1, \cdots, s_k$ defined on $U$, and $X \in \Gamma(TM)$.
Let $s \in \Gamma(E, U)$ be an arbitrary section on $U$. Then we have
\begin{equation}
    s = \sum_{i=1}^k f_i s_i, \quad f_i \in C^\infty (U).
\end{equation}
This is because $\forall p \in U, \; s_1(p), \cdots, s_k(p)$ forms a basis for $\pi^{-1}(p)$.

Since $s(p)$ is a vector in $\pi^{-1}(p)$, it must be spanned by $s_1(p), \dots, s_k (p)$. So, $s(p) = f_1 (p) s_1(p) + \dots + f_k(p) s_k(p)$. A better notation is
\begin{equation}
    s = f_i s^i,
\end{equation}which is called the \textit{Einstein summation} convention. With this notation, we define a \textbf{flat connection} as 
\begin{equation}
    \nabla_X s := \sum_i (Xf_i)s_i = (Xf_i) s^i.
\end{equation}
\end{definition}
Therefore, when we differentiate, we differentiate component-wise.

\begin{observation}
    \begin{itemize}
        \item[$1)$]
        \begin{equation}
            \nabla_X s = (X f_i) s^i = df_i (X) s^i, \quad
            \text{where}
            \quad
            s = 
            \begin{pmatrix}
                f_1\\
                \vdots\\
                f_X
            \end{pmatrix}
            = f_i,
        \end{equation} and hence, for a trivial connection, we write
        \begin{equation}
            \boxed{\nabla_{\cdot} \cdot = d.}
        \end{equation}
    Or in matrix form, we can write
    \begin{equation}
        \nabla_X s =
        \begin{pmatrix}
            df_1(X)\\
            \vdots\\
            df_k(X)
        \end{pmatrix},
    \end{equation}
    where $d$ is the usual \textit{exterior derivative}.
    \item[$2)$] In general, flat connections \textit{do not exist globally} on $M$, but they \textit{always exist locally}. When $E \rightarrow M$ is a trivial bundle, it exists!
    \item[$3)$] By patching flat connections, one can construct a globally defined connection on $M$.

    Patching, by definition, is a suitable \textit{equivalence relation} in accordance with the base change and the behaviour of the connection under this base change. We saw a similar discussion before. We shall also see how a connection behaves under a so-called gauge change!
    \end{itemize}
\end{observation}
\begin{observation}
    Let $E \xrightarrow{M}$ be a rank $k$ vector bundle, $\nabla$ and $\nabla'$ connections on $M$.

    Let $X \in \Gamma(TM)$ and $s \in \Gamma(E)$, then  ($\forall f \in C^\infty (M)$) we have
    \begin{equation}
    \begin{aligned}
        (\nabla_X - \nabla_X^{'}) (fs) &= \nabla_X(fs) - \nabla^{'}_X(fs)\\
        &= f \nabla_X s + (Xf)s - f \nabla'_Xs - (Xf) s\\
        &= f (\nabla_X s - \nabla^{'}_X s)\\
        &= f (\nabla_X - \nabla^{'}_X)s.
    \end{aligned}
    \end{equation}
    That means, $\nabla_X - \nabla^{'}_X$ is \textit{tensorial} in $s$. Therefore, if we take local trivialization sections $(U, s^1, \cdots, s^k)$ for $E|_U$, we get
    \begin{equation}
        (\nabla_X - \nabla^{'}_X)s^i = a^i_{j} s^j.
    \end{equation}
    Here $a^{i}_{j} : U \rightarrow \mathbb{C}$ is a $k \times k$ matrix of functions. In particular, we let
    \begin{equation}
        a^i_{j} := A^i_{j} (X), \quad A^i_{j} \in \Omega^1 (M).
    \end{equation}
\end{observation}

What does it all tell us?
\begin{itemize}
    \item[$1)$] Any two connections on $M$ (locally) differ by a $k \times k$ matrix of $1$-form $A = (A^i_{j})$, where $A^i_{j} \in \Omega^1 (M), \; \forall i,j$.
    \item[$2)$] In particular, if we take a trivial connection $\nabla_X^{'} = d$, for an arbitrary connection $\nabla$, we write
    \begin{equation}
        \boxed{\nabla = d + A,}
    \end{equation}
    where $A = (A^i_{j})$, $A^i_{j} \in \Omega^1 (M), \; \forall i,j$. That is, $\forall X \in \Gamma(TM),\; s \in \Gamma(E)$ we have
    \begin{equation}
        \nabla_X s = ds (X) + A(X) s.
    \end{equation}
    For basis field $s^i$, we have
    \begin{equation}
        \nabla_X s^i = A^i_{j} (X) s^j.
    \end{equation}

    \section{Gauge Change}

    Let $E \rightarrow M$ be a rank $k$ vector bundle, $E|_U$ and $E|_V$ be local trivialization with sections $\{s_1, \dots, s_k\}$ and $\{\bar{s}_1, \dots, \bar{s}_k\}$ respectively.

\begin{center}
        \includestandalone[width=0.45\textwidth]{Figures/Part2/p7}\\ 
        \hypertarget{fig43}{Figure 43.} Local trivializations on a bundle. 
\end{center}

    Then, on $U \cap V$, we have
    \begin{equation}
        \bar{s}_i = g^i_{j} s^j,
    \end{equation}
    where $g^i_{j} : U \cap V \xrightarrow{} \mathbb{C}$. In the matrix form

    \begin{equation}
        \bar{s} =
        \begin{pmatrix}
            \bar{s}^1\\
            \vdots\\
            \bar{s}^k
        \end{pmatrix}
        =
        \begin{pmatrix}
            g^1_{\; 1} & g^1_{\; 2}& \dots\\
            \vdots& g^i_{j} & \vdots\\
            g^k_{\; 1} & \dots & g^k_{\; k}
        \end{pmatrix}
        \begin{pmatrix}
            s^1\\
            \vdots\\
            s^k
        \end{pmatrix}.
    \end{equation}

    Here, the matrix $g \in GL(k, \mathbb{C})$.
\end{itemize}

On $U\cap V$, $\forall X \in \Gamma(TM)$ we have
\begin{equation}
\begin{aligned}
    \nabla_X \bar{s}^i &= \nabla_X g^i_{j} s^j\\
    &= g^i_{j} \nabla_X s^j + (X g^i_{j})s^j\\
    &= g^i_{j} A^j_{\; \mu} s^\mu + dg^i_{j} s^j\\
    &= g^i_{j} A^j_{\; \mu} (g^{-1})^\mu_{\; \nu} \bar{s}^\nu + d g^i_{j} (X) (g^{-1})^j_{\; \sigma} \bar{s}^\sigma\\
    &= (g \cdot A(X) \cdot g^{-1})^i_{\; \nu} \bar{s}^\nu + (dg(X) \cdot g^{-1})^i_{\; \sigma} \bar{s}^\sigma\\
    &= \underbrace{(g \cdot A(X) \cdot g^{-1} + d g(X) \cdot g^{-1} )^i_{\; \nu}}_{\bar{A}(x)^i_{\; \nu}} \bar{s}^\nu.
\end{aligned}
\end{equation}
Therefore, under the gauge transformation 
\begin{equation}
    g^i_{j}: U \cap V \xrightarrow{}\mathbb{C}, \quad (\text{or} \; g:U \cap V \to \mathbb{C}^{k} \equiv GL(k, \mathbb{C}) = G),
\end{equation}

we have

\begin{equation}
    A \xrightarrow{} \underbrace{g \cdot A \cdot g^{-1} + dg \cdot  g^{-1}}_{\eqcolon g^{*} A}.
\end{equation}

Note that when we "patch" connections on the overlap, we define an equivalence relation with respect to the above gauge action. Namely, we have:
\begin{equation*}
    A \sim A' \quad \iff \quad
    A' = g^* A = g\cdot A\cdot g^{-1} + dg \cdot g^{-1},
\end{equation*}
for some $g$, where at each point $p \in U \cap  V$, $g(p) \in GL(k, \mathbb{C})$.

\begin{definition}
    For all $X, Y, \text{ and } Z \in \Gamma(TM), \; Z \in \Gamma(E)$,
    the \textbf{curvature }of a connection $\nabla$ is given by
    \begin{equation}
        R(X, Y) Z = \nabla_X \nabla_Y Z - \nabla_Y \nabla_X Z - \nabla_{[X,Y]} Z.
    \end{equation}
\end{definition}

\begin{claim}
    \begin{equation}
        R(X, Y) = (dA + A \wedge A) (X, Y)
    \end{equation}
    in local trivialization. Here 
    \begin{equation}
        A \wedge A (X, Y) := [A(X), A(Y)]_0
        = A(X) \cdot A(Y) - A(Y) \cdot A(X).
    \end{equation}
\end{claim}

\begin{proof}
    Direct computation yields the desired result:

    We start by letting $X, Y \in \Gamma(TM)$ and $s_1, \dots, s_k \in \Gamma(E,U)$ be local trivial sections. It is enough to examine $R(X,Y) s^i$. Observe that
    \begin{equation}
    \begin{aligned}
        \nabla_X \nabla_Y s^i &= \nabla_X (A(Y)^i_{j} s^j)\\
        &= A(Y)^i_{j} \nabla_X s^j + (X A(Y)^i_{j}) s^j\\
        &= A(Y)^i_{j} A(X)^j_{\; k} s^k + (X A(Y)^i_{j}) s^j\\
        &= (A(Y) \cdot A(X))^i_{\; k} s^k + (X A(Y)^i_{j})s^j.
    \end{aligned} 
    \end{equation}
    Similarly,
    \begin{equation}
    \begin{aligned}
        \nabla_X \nabla_Y s^i &= (A(X) \cdot A(Y))^i_{\;k} s^k + (Y A(X)^i_{j}) s^j\\
        \nabla_{[X, Y]}s^i &= A([X, Y])^i_{j} s^j.
    \end{aligned} 
    \end{equation}
    Then we have
    \begin{equation}
    \begin{aligned}
        R(X,Y) s^i &= \underbrace{\left[X (A(Y)^i_{j}) - Y (A(X))^i_{j} - A([X, Y])^i_{j} \right]}_{\text{by definition} \; dA^i_{j} (X, Y), \; A^i_{j} \in \Omega^1 (M)} s^j
        + \left[ A(X) \cdot A(Y) - A(Y) \cdot A(X) \right]^i_{j} s^j\\
        &= \left[ dA(X,Y) + (A \wedge A) (X, Y) \right]^i_{j} s^j.
    \end{aligned}
    \end{equation}
\end{proof}

Consider the gauge transformation $g: U \cap V \xrightarrow{} G$ as before. We saw that the connections $A_1$ and $A_2$ are related via $g$ as follows. On $U \cap V$, we have
\begin{equation}
    A_2 = g A g^{-1} + dg g^{-1}.
\end{equation}
Then we get 
\begin{equation}
\begin{aligned}
    R' &= dA_2 + A_2 \wedge A_2\\
    &= d(gAg^{-1} + dg g^{-1}) + (gAg^{-1} + dg g^{-1}) \wedge (gAg^{-1} + dg g^{-1})\\
    &= g (dA + A \wedge A) g^{-1}\\
    &= g R g^{-1}.
\end{aligned}
\end{equation}
So, under gauge transformations, $R$ transforms as 
\begin{equation}
    R \longmapsto R' = \underbrace{g R g^{-1}}_{R'}.
\end{equation}

\newpage
\section{Geometric Quantization Formalism}

The critical question we ask ourselves is: "In what sense do we quantize?"

Formulate the notion of the \textit{quantization} in the following sense:
Let $(M, \omega)$ be a symplectic manifold of dimension $2n$ with $(C^\infty (M), \qty{\cdot,\cdot}$ being its Poisson algebra. Then, we want the following correspondence:

\begin{table}[h]
\centering
\begin{tabular}{|c|c|c|c|}
\hline
\textit{Concepts} & \textit{Classical System} & \textit{Quantum System}   \\ \hline
Phase space & $(M, \omega)$ & Complex Hilbert space $\mathcal{H}$ \\ \hline
States & $p \in M$ & The ray $[\psi]$ in $\mathcal{H}$   \\ \hline
Algebra of observables & $f \in C^\infty (M)$ & $Q(f): \mathcal{H} \xrightarrow{} \mathcal{H}$  \\ \hline
Structure on the algebra & $\{\cdot , \cdot\}$ & $[\cdot,\cdot]$  \\ \hline
\end{tabular}
\caption{The Correspondence}
\end{table}

Here, we denote the \textit{quantization procedure} by $Q$. It assigns to a function $f$ a linear operator on $\mathcal{H}$, 
written $Q(f) \in \text{End}(\mathcal{H})$.

$\psi \sim \eta$ in $\mathcal{H}$ if and only if $\exists \lambda \in \mathbb{C}$ s.t. $\psi = \lambda \eta$. We call $\psi$ a \textbf{wave-function}.

On the quantized structure of algebra, the commutator is encoded in a Lie algebra homomorphism
\begin{equation}
    (C^\infty(M), \{\cdot , \cdot \}) \xlongrightarrow{} (\text{End} (\mathcal{H}), [\cdot, \cdot]),
\end{equation}
which in fact gives rise to a Lie algebra representation of $C^\infty (M)$ on $\mathcal{H}$. 

\textit{Geometric quantization (GQ)} provides us a suitable mathematical formalism to define such an assignment $Q$ above (based on Dirac's ideas). It has the following properties.

\begin{itemize}
    \item[$1)$] Applicable to any finite-dimensional symplectic manifold $(M,\omega)$.
    \item[$2)$] Keep track of the symmetries of the classical system with a Hamiltonian $G$-action, in the sense that the quantum states form an \textit{irreducible representation} of $G$. This is the so-called \textit{"irreducibility condition"}.
    \item[$3)$] \textbf{$1{\text{st}}$ step of GQ} consists of  the \textit{Prequantization} and introduction of "$\mathcal{H}_{\text{pre}}$". The construction involves a line bundle $\mathcal{L}$, a connection, and certain sections of this bundle. However, we lack the irreducibility condition. In fact, we will set $\mathcal{H}_{\text{pre}} = \Gamma(\mathcal{L})$, where $\Gamma(\mathcal{L})$ is the space of smooth sections on $\mathcal{L}$.
    \item[$4)$] \textbf{$2{\text{nd}}$ step of GQ:} Polarization of the space of functions on $M$.
    \begin{itemize}
        \item[a)] Aim is to select a suitable subspace of functions on $M$ to be quantized so that the irreducibility condition holds.
        \item[b)] In accordance with the polarization $P$ (a particular choice of a $n$-dim sub-bundle $P$ of $TM$, which is $2n$-dim), one restricts $\mathcal{H}_{\text{pre}} = \Gamma(\mathcal{L})$ to $\mathcal{H} = \Gamma_p(\mathcal{L})$, which are the $P$-polarized sectors of $\mathcal{L}$ so that the elements of $\mathcal{H}$ form irreducible representations of $G$. Here, by $P$-polarized sections, we mean
        \begin{equation}
            \Gamma_P(\mathcal{L}):=\{ s \in \Gamma(\mathcal{L}): \nabla_X s = 0, \quad \forall X \in \Gamma(p) \subseteq \Gamma(TM)\}.
        \end{equation}
        The space of sections of $L$ that are covariantly constant along $P \subseteq TM$ (for example, when $M$ is a K\"{a}hler manifold, $\Gamma(\mathcal{L})_P$ is the space of holomorphic sections).
        \item[c)] Independent of the choice of $P$.
        
    \end{itemize}
\end{itemize}

We need to discuss further topics to study these steps, such as almost complex structures, complex structures, and Kähler structures. You can check DaSilva's book [] for details of these topics.

In our current study, we mainly focus on the $ 1^ {\text{st}}$ step and make a brief comment on the $2^{\text{nd}}$ step.

\begin{definition}
    Let $(M, \omega)$ be a classical system and $\mathcal{A}$ a sub-algebra of $C^\infty(M)$. The pair $(\mathcal{H}, {Q})$ is called the \textbf{quantum system associated to $(M, \omega)$} if
    \begin{itemize}
        \item[$0)$] $\mathcal{H}$ is a complex separable Hilbert space:
        \begin{itemize}
            \item[$\cdot$] Its elements $\psi$ are the \textit{quantum wave functions}.
            \item[$\cdot$] The rays $\{\lambda \psi| \lambda \in \mathbb{C}\}$ are the \textit{quantum states}.
            \item[$\cdot$] For all $ f \in \mathcal{A}, \; Q(f) \in \text{End} (\mathcal{H})$.
        \end{itemize}
        \item[$1)$] $\mathbb{C}$-linearity: $Q(c, f+g) = c Q(f) + Q(g), \; \forall c \in \mathbb{C}$.
        \item[$2)$] $Q(f=1) = id_{\mathcal{H}}$.
        \item[$3)$] $Q^*(f) = Q(f)$. I.e., $Q(f)$ is a self-adjoint operator on $\mathcal{H}$.
        \item[$4)$] Quantum Condition: $[Q(f), Q(g)] = -ih Q(\{f,g\})$.
        \item[$5)$] Irreducibility Condition: 

        Suppose that $\{f_1, \dots, f_n\}$ is a complete set of observables, meaning that a function commuting with all $f_i$ must be constant
        \begin{equation}
            \{f, f_i\} = 0, \; \forall i \iff f = \text{constant}.
        \end{equation}

        Then $\{Q(f_1), \dots , Q(f_n)\}$ is also complete set of operators. 
        
        Equivalently, quantum states must form an irreducible representation of $G$. That is, if $R$ is an irreducible representation of $G$, then $R$ can be realized as a quantum Hilbert space.
    \end{itemize}
\end{definition}

\begin{example}
    In this prototype example, we basically examine why one needs to impose the above-mentioned axioms. Take
    \begin{equation}
        (M, \omega) = (\mathbb{R}^{2n}, \sum_i dqi \wedge dp_i),
    \end{equation}

    with the usual coordinates $(q_1, \cdots , q_n, p_1, \cdots, p_n)$.

    Then we have

    \begin{itemize}
        \item Given $H \in C^\infty (M)$, its Hamiltonian vector field $X_H$ is given by

        \begin{equation}
            X_H = \sum_i \pdv{H}{p_i} \pdv{}{q_i} - \pdv{H}{q_i} \pdv{}{p_i},
        \end{equation}

        and $\forall f, g \in C^\infty(M)$
        \begin{equation}
        \begin{aligned}
            \{f, g\} &= - \omega (X_f, X_g), \quad \text{recall}\; \iota_{X_{f}}=df\\  
            &= - df (X_g)\\
            &= dg(X_f)\\
            &= X_f g\\
            &= \sum_i \pdv{f}{p_i} \pdv{g}{q_i} - \pdv{f}{q_i} \pdv{g}{p_i}.
        \end{aligned} 
        \end{equation}
        where in the first step, we have used $\iota_{X_f} \omega = df$,

        \item For $q_j$ and $p_i$, we have 

        \begin{equation}
            X_{q_j} = - \pdv{}{p_j} \quad \text{and} \quad X_{p_i} = \pdv{}{q_i},
        \end{equation}

        with Hamilton equations

        \begin{equation}
            \dot{q}_i = \pdv{H}{p_i} \quad \text{and} \quad
            \dot{p}_i = - \pdv{H}{q_i}.
        \end{equation}

        \item Observe that
        \begin{align}
            \{q_i, p_j\} &= X_{q_i} p_j = - \pdv{}{p_i} p_j = - \delta_{ij},\\
            \{q_i, q_j\} &= X_{q_i} q_j = 0 =  \{p_i, p_j\},\\
            \{p_i, q_j\} &= X_{p_i} q_j = \pdv{q_i}{q_j} = \delta_{ij}.
         \end{align}

         The set $S := \{p_1, \cdots, p_n, q_1, \cdots, q_n\}$ forms a complete set. Hence, we must have
         \begin{align}
             [Q(q_i), Q(q_j)] &= [\hat{q}_i, \hat{q}_j] = - i \hbar Q(\overbrace{\qty{q_i, q_j}}^{=0}) = 0,\\
             [\hat{p}_i, \hat{p}_j] &= - i \hbar \{\widehat{{p_i, p_j}}\} = 0,\\
             [\hat{p}_i, \hat{q}_j] &= - i \hbar \{\widehat{{p_i, q_j}}\} = - i \hbar \delta_{ij}.
         \end{align}

         These commutation relations define a Lie algebra, which is called \textbf{Heisenberg algebra}.

         By Schur's lemma, for the set $S$ with the above commutation relations, we observe:
         \begin{equation*}
             \text{Condition 5} \equiv \text{Finding irreducible representations of Heisenberg algebra}
         \end{equation*}
         That is why we called it the irreducibility condition.

         Using the Stone-von Neumann theorem, we note that any irreducible representation of the Heisenberg algebra is equivalent to $L^2 (\mathbb{R}^{n}) =: \mathcal{H}$, with $q_i$ and $p_j$ represented by
         \begin{equation}
         \begin{aligned}
            \mathbb{R}^{2n} &\xlongrightarrow{Q} \text{End} \left( L^2 (\mathbb{R}^n) \right)\\
            p_i &\longmapsto Q (p_i) \eqcolon \hat{p}_i\\
            q_j &\longmapsto Q(q_j) \eqcolon \hat{q}_j,
         \end{aligned}
         \end{equation}
         where $\forall \psi \in L^2(\mathbb{R}^{n})$
         \begin{equation}
         \begin{aligned}
             \hat{q}_i (\psi(x)) &\coloneq x^i \psi (x),\\
             \hat{p}_i (\psi(x)) &\coloneq- i \hbar \pdv{\psi(x)}{x^i}.
         \end{aligned}
        \end{equation}
         So we have indeed recovered the \textit{Schrödinger picture}.
        
    \end{itemize}
    \end{example}

    Now, we aim to extend the above construction to an arbitrary symplectic manifold $(M, \omega)$ by using a suitable mathematical analogue of $\mathcal{H}$ and the assignment $Q$. To this end, we will present the so-called \textit{geometric quantization formalism.}

    Before providing the actual formalism, let's start with a naive attempt:  let $(M,\omega)$ be a symplectic manifold and $(C^\infty(M), \qty{\cdot,\cdot})$ its Poisson algebra. We saw before that there exists a natural Lie algebra homomorphism
    \begin{equation}
    \begin{aligned}
        (C^\infty (M), \qty{\cdot,\cdot}) &\xlongrightarrow{\varphi_0} (\textsf{Hom}(M), [\cdot,\cdot]_0)\\
        f &\longmapsto X_f\\
        \qty{f,g} &\longmapsto X_{\qty{f,g}} = [X_f, X_g] = [\varphi(f), \varphi(g)].
    \end{aligned}
    \end{equation}
    Firstly, we can naively update the assignment as 
    \begin{equation}
       {\varphi_1}: f \longmapsto - i \hbar X_f.
    \end{equation}

    Then it should be clear to see that  ${\varphi}_1$ above satisfies the  quantization axioms $(1), (3),$ and $(4)$; however, $(2)$ is not satisfied.

    On the other hand, we previously saw that $\varphi_0$ is not injective and observed that any constant function (not just $f=1$) is mapped to $X_f = 0$. Indeed, if $f=c$ is a constant function, then we have
    \begin{equation}
    \begin{aligned}
        df = 0 &\iff \omega(X_f, \cdot) = df (\cdot) = 0\\
        &\iff \; X_f = 0  \quad(\text{since } \omega \ \text{is non-degenerate} ).
    \end{aligned}
    \end{equation}
    The second attempt is to let
    \begin{equation}
        {\varphi_2}: f \longmapsto - i \hbar X_f - f.
    \end{equation} You can check that Axiom $(2)$ is now satisfied. However, Axiom $(4)$ will no longer hold.

    Note that we have not defined any suitable $\mathcal{H}$ on which either of $\varphi_0, \varphi_1 \;\text{or} \; \varphi_2$ acts. A suitable assignment and the definition of $\mathcal{H}$ can be given by modifying the following case. (This is where geometric quantization will come into play.)

    In our third (more serious) attempt, we proceed as follows:

    If $M = T^* N$ for some $n$-manifold $N$, there exists a natural symplectic form $\omega$ on $M$ given by $\omega = d \theta$ globally, where $\theta$ is the canonical one-form. We then define the assignment by
    \begin{equation}
       {\varphi_3}: f \longmapsto\varphi_3 (f) := - i \hbar X_f - \theta (X_f) - f,
    \end{equation} which in fact satisfies all the axioms except the irreducibility condition $(5)$. To generalize the above assignment, we construct a suitable line bundle: Let $(M, \omega)$ be a symplectic manifold of dimension $2n$. Then Observe the following,
    \\
    
\begin{minipage}{0.55\textwidth}
    \begin{itemize}
        \item Since $\omega$ is a closed $2$-form, it defines a de Rham cohomology class $[\omega] \in H^2_{\text{dR}}(M)$.
        \item Since $\omega$ is closed, by Poincaré lemma, $\omega$ is \textit{locally exact}, meaning that there exists an open cover  $\{U_\alpha\}$ of $M$ such that $\omega = dA_\alpha$ on $U_\alpha$ and $A_\alpha \in \Omega^1 (U_\alpha)$.
    \end{itemize}
\end{minipage}
\hspace{0.5cm}
\begin{minipage}{0.4\textwidth}
    \begin{center}
        \includestandalone[width=0.7\textwidth]{Figures/Part2/p17}\\ 
        \hypertarget{fig44}{Figure 44.} A line bundle $L$ over $M$. 
    \end{center}
\end{minipage}   

\begin{itemize}
\item If $\left[ \frac{\omega}{2\pi\hbar} \right] \in H^2(M;2\pi \mathbb{Z})$, i.e. $\dfrac{\omega}{2\pi \hbar}$ represent an \textit{integral cohomology class}\footnote{It means $\displaystyle \int_{\Sigma} \dfrac{\omega}{2\pi \hbar} \in \mathbb{Z}$, with $\Sigma$ a closed oriented 2-manifold in $M$.}, then one can construct a particular complex line bundle $\mathcal{L}$ with a connection $\nabla$ as follows:
\end{itemize}
        
        Let $\{U_i\}$ be a local trivializing cover for $\mathcal{L}$ so that $\mathcal{L}|_{U_i}$ is trivial and $\omega$ is locally exact on each $U_i$. That is, we have
        \begin{equation}
            \omega = dA_i \;\text{on} \; U_i, \text{ for some } \; A_i \in \Omega^1(U_i).
        \end{equation}
        \begin{itemize}
            \item[$\boxed{a}$] Define a connection $\nabla$ as follows: on each $U_\alpha$
            \begin{equation}
                \nabla := d - \frac{i}{\hbar} A_\alpha.
            \end{equation}
            On each $U_\alpha$, connection one form, $-\frac{i}{\hbar}A_\alpha$, is given by $A_\alpha$ above.
            \item[$\boxed{b}$]
            On the overlap $U_\alpha \cap U_\beta$, one has local trivial sections $(U_\alpha, s_\alpha), (U_\beta, s_\beta)$ such that
            \begin{equation}
                s_\alpha: U_\alpha \xrightarrow{} \mathcal{L} \quad \text{and} \quad
                s_\beta: U_\beta \xrightarrow{} \mathcal{L},
            \end{equation}
            which can be seen on \hyperlink{fig45}{Figure 45}.
\begin{center}
        \includestandalone[width=0.4\textwidth]{Figures/Part2/p18}\\ 
        \hypertarget{fig45}{Figure 45.} Two covers $U_\alpha$ and $U_\beta$ with local sections mapping to the same fiber. 
\end{center}

            Then, a gauge transformation
            \begin{equation}
                g: U_\alpha \cap U_\beta \xrightarrow{} S^1 \subset \mathbb{C},
            \end{equation}
            can be defined by
            \begin{equation}
                g(x) := e^{-\frac{i}{\hbar}f_{\alpha \beta}(x)},
            \end{equation}
            for some $f_{\alpha\beta} \in C^\infty (U_\alpha \cap U_\beta)$.

    In fact, on $U_\alpha \cap U_\beta$, \ $dA_\alpha = \omega = d A_\beta$. Hence,
    \begin{equation}
        d(A_\alpha - A_\beta) = 0 \quad \text{on} \quad U_\alpha \cap U_\beta,
    \end{equation}
    meaning it is closed on $U_\alpha \cap U_\beta$. Then, by the Poincaré lemma, $A_\alpha - A_\beta$ is locally exact. So, it can be written as
    \begin{equation}
        A_\alpha - A_\beta = df_{\alpha \beta},
    \end{equation}
    for some $f_{\alpha \beta} \in C^\infty(U_\alpha \cap U_\beta)$ or $C^\infty (V_{\alpha \beta}), \text{with} \; V_{\alpha \beta} \subset (U_\alpha \cap U_\beta)$. 

    \newpage
    As a result, we obtain
    \begin{equation}
        A_\alpha = A_\beta + df_{\alpha \beta},
    \end{equation}inducing the gauge transformation. Moreover, this relation defines how $U_\alpha$ and $U_\beta$ patch together. 

    Recall that under a gauge transformation $g$, we have
    \begin{equation} 
    \begin{aligned}
        g \cdot - \frac{i}{\hbar} A_\beta &= g \left( - \frac{i}{\hbar} \right) g^{-1} + dg g^{-1}  \quad \text{when} \quad g = e^{-if_{\alpha\beta}}\\
        &= - \frac{i}{\hbar} A_\beta + \frac{i}{\hbar} e^{-i f_{\alpha \beta}}df_{\alpha \beta} e^{f_{\alpha \beta}}\\
        &= - \frac{i}{\hbar} (A_\beta + df_{\alpha \beta}).
    \end{aligned}
    \end{equation}
    So,
    \begin{equation}
        \underbrace{g \cdot A_\beta}_{A_\alpha} = A_\beta + df_{\alpha \beta}
        \quad
        (A_\beta \xrightarrow{} A_\beta + df_{\alpha \beta}).
    \end{equation}
    \item[$\boxed{c}$] Now, consider the curvature two-form $F_A = dA + A \wedge A$. On $U_\alpha$, we have
    \begin{equation}
        F_{A_\alpha} = d\left( - \frac{i}{\hbar} A_\alpha \right)
        - \frac{i}{\hbar} A_\alpha \wedge - \frac{i}{\hbar} A_\alpha
        = - \frac{i}{\hbar} d A_\alpha.
    \end{equation}
    Since $1 \times 1$ matrices commute, this holds. Furthermore, recall that under the gauge transformation $g: U_\alpha \cap U_\beta \xrightarrow{}S^1$. We have
    \begin{equation}
        F_{A_\beta} = g \cdot F_{A_\alpha} = g F_{A_\alpha}g^{-1} = F_{A_\alpha}
        \quad 
        \text{on}
        \; \;
        U_\alpha \cap U_\beta.
    \end{equation}
    Therefore, $d A_\alpha$ is invariant under gauge change.

    This implies that $\{dA_\alpha\}$ can be patched into a closed $2$-form on $M$. It defines a cohomology class, called the \textit{first Chern class} of $\mathcal{L}$, and denoted by $c_1 (L) = \frac{i}{2\pi} [F_A] \in H^2_{\text{dR}} (M; \mathbb{R})$.
    \end{itemize}

    Now, calculate $F_{A_\alpha}$ above explicitly. By definition we have, let $(U_\alpha, s_\alpha)$ be a local trivialization section as above. Let $X, Y \in \Gamma(TM)$. Then on $U_\alpha$, we have
    \begin{equation}
    \begin{aligned}
        F_{A_\alpha} &= \nabla_X \nabla_Y s_\alpha - \nabla_Y \nabla_X s_\alpha \nabla_{[X, Y]} s_\alpha = \nabla_X \left( - \frac{i}{\hbar} A_\alpha (Y) s_\alpha \right)
        - \nabla_Y \left( - \frac{i}{\hbar} A_\alpha (X) s_\alpha \right)
        - \nabla_{[X,Y]} s_\alpha \\
        &=- \frac{i}{\hbar} A_\alpha(Y) \nabla_X s_\alpha - \frac{i}{\hbar} X(A_\alpha(Y)) s_\alpha + \frac{i}{\hbar} \big[ A_\alpha (X) \underbrace{\nabla_Y s_\alpha}_{-\frac{i}{\hbar} A_\alpha(Y) s_\alpha} + Y (A_\alpha(X) s_\alpha) \big] + \frac{i}{\hbar} A_\alpha ([X, Y]) s_\alpha \\
        &= \frac{1}{\hbar^2} \underbrace{\left[ - A_\alpha (Y) A_\alpha(X) s_\alpha + A_\alpha(X) A_\alpha(Y) s_\alpha \right]}_{=0 \; \text{since one-dimensional objects commute.}} - \frac{i}{\hbar} \left[ X(A_\alpha(Y)) - Y(A_\alpha(X)) - A_\alpha ([X, Y]) \right] s_\alpha \\
        &= - \frac{i}{\hbar} dA_\alpha (X,Y) s_\alpha = - \frac{i}{\hbar} \omega(X, Y) s_\alpha.
    \end{aligned}
    \end{equation}
    Therefore, on $U_\alpha$, we have
    \begin{equation}
        \boxed{
            F_A = - \frac{i}{\hbar} \omega.
        }
    \end{equation}
    We have the following theorem:
    \begin{theorem}\label{theorem_prequantum line bundle}
        Let $\omega$ be a close $2$-form on $M$ such that $\left[ \frac{\omega}{2\pi \hbar} \right] \in H^2 (M; 2 \pi \mathbb{Z}) \subseteq H^2_{\text{dR}}(M; \mathbb{R})$.

        Then there exists a complex line bundle $L \xrightarrow{} M$ and a connection $\nabla$ such that,
        \begin{itemize}
            \item Locally, $\nabla:= d - \frac{i}{\hbar}A_\alpha$ where $\omega = dA_\alpha$.
            \item Locally, $F_A = - \frac{i}{\hbar} \omega .$
        \end{itemize}
    \end{theorem}

    This line bundle is called the \textbf{pre-quantization line bundle}.

    \paragraph{Construction of GQ assignment.}
    
    Let $(M,\omega)$ be a symplectic manifold with $\left[ \frac{\omega}{2\pi \hbar} \right] \in H^2 (M;\mathbb{Z})$. Then, by the theorem above, there exists a complex line bundle $L \xrightarrow{\pi} M$ and a connection $\nabla$ such that $F_A = - \frac{i}{\hbar} \omega$. Then we have the following descriptions.

    \begin{itemize}
        \item Define $\mathcal{H}_{\text{pre}} := \Gamma(L)$ as the \textit{pre-quantization Hilbert space} and the inner product on $\mathcal{H}_{\text{pre}}$ is given by
        \begin{equation}
            \langle s_1,s_2 \rangle := \left( \frac{1}{2\pi \hbar} \right)^n
            \int_M H_X (s_1(X), s_2(X)) \;\mu .
        \end{equation}

        Here $H_\alpha$ is a Hermitian metric on each fiber $\mathbb{C}$ of $\mathcal{L}$, and $\mu$ is the Liouville measure.

        \item Define the \textit{geometric quantization assignment} to be the map
        \begin{align*}
            (C^\infty (M), \qty{\cdot,\cdot}) &\xlongrightarrow{Q} (\text{End} (\Gamma(L)), \;[.,.])\\
            f &\longmapsto Q(f),
        \end{align*}
        where
        \begin{equation}
            \boxed{
                Q(f) := - i\hbar \nabla_{X_f} - f.
              }
        \end{equation}
        One can check that this construction satisfies all the quantization conditions from $(0)$ to $(4)$ except $(5)$. To achieve $(5)$ as well, we still need to modify $\mathcal{H}_{\text{pre}}$ and $Q(f)$!

        \item The above assignment is a Lie algebra homeomorphism in the sense that
        \begin{equation}
            [Q(f), Q(g)] = - i \hbar Q(\{f, g\}).
        \end{equation}

        If we take $Q(f) = \nabla_{X_f} - \frac{i}{\hbar}f$, then we would have
        \begin{equation}
            [Q(f), Q(g)] := Q(\{f,g\}).
        \end{equation}
    \end{itemize}

    Indeed; $\forall f, g \in C^\infty(M)$ and for all $s \in \Gamma(L)$, we have 
    \begin{align}
        [Q(f), Q(g)] s &= [- i \hbar \nabla_{X_f} - f, -i \hbar \nabla_{X_g} - g]s \notag\\
        &= [- i \hbar \nabla_{X_f}, - i \hbar \nabla_{X_g}] s + \underbrace{ [-f, - i \hbar \nabla_{X_g}] s}_{(1)} + \underbrace{[-i\hbar \nabla_{X_f}, -g]}_{(2)} + \underbrace{[f,g]s}_{=0 \; \text{as} \; fgs - sgf = 0}.
    \end{align}
    Observe that
    \begin{align}
        (1) = [-f, - i \hbar \nabla_{X_g}] s &= i\hbar(f\nabla_{X_g} s - \nabla_{X_g} (fs)) \notag \\
        &= i \hbar (\cancel{f\nabla_{X_g} s} - \cancel{f \nabla_{X_g} s} - (X_g f) s )\notag \\
        &= -i \hbar (X_g f)s.
    \end{align}
    Similarly,
    \begin{align}
        2) = [i\hbar \nabla_{X_f}, g] = i\hbar(X_f g)s.
    \end{align}
    Then we have
    \begin{equation}
        [Q(f), Q(g)] = - \hbar^2 [\nabla_{X_f}, \nabla_{X_g}]s + i \hbar (X_f g - X_g f)s.
    \end{equation}
    Let's recall the definition of $F_A$ for convention
    \begin{equation}
        F_A(X_f, X_g) = \nabla_{X_f} \nabla_{X_g} - \nabla_{X_g} \nabla_{X_f} - \nabla_{[X_f, X_y]}.
    \end{equation}
    
    Also, recall the following
    \begin{equation}
        \{f, g\} = - \omega(X_f, X_g) = \omega(X_g, X_f) = \iota_{X_g} \omega(X_f)
        = dg (X_f) = X_f g.
    \end{equation}

    Using them we have
    \begin{equation}
    \begin{aligned}
        [Q(f), Q(g)] &= - \hbar^2 [\nabla_{X_f}, \nabla_{x_g}]s + i \hbar (X_f g - X_g f) s\\
        &= - \hbar^2 (F_A(X_f, X_g) + \nabla_{[X_f, X_g]})s + 2 i \hbar \{f, g\}\\
        &= i \hbar \underbrace{\omega (X_f, X_g)}_{-\{f,g\}}s - \hbar^2 \nabla_{[X_f, X_g]}s + 2 i \hbar \{f, g\}, \quad \text{using} \; F_A = - \frac{i}{\hbar} \omega.
    \end{aligned}
    \end{equation}
    Thus,
    \begin{equation}
        [X_f, X_g] = X_{\{f,g\}} = - \hbar^2 \nabla_{X_{\{f,g\}}} + i \hbar \{f,g\} = - i \hbar \underbrace{(-i \hbar \nabla_{X_{\{f,g\}}} - \{f,g\})}_{Q(\{f,g\})}.
    \end{equation}

\newpage 
    \begin{example}
        Let's recover the prototype example using the above framework: $(n=2)$

        Let $(M, \omega) = (\mathbb{R}^2, \omega = dq \wedge dp)$. Then, we have

        \begin{itemize}
            \item We saw before
            \begin{equation}
                X_q = - \pdv{}{p} \quad \text{and} \quad X_p = \pdv{}{q}.
            \end{equation}
            \item Note that $\mathbb{R}^2 \cong T^* \mathbb{R}$. Therefore, it admits a natural symplectic form $\omega = dq \wedge dp$ which is \textit{globally exact}. In fact, $\omega = d\theta$, where $\theta = q \wedge dp$ is a canonical $1$-form. Therefore, it defines a trivial cohomology class 
            \begin{equation}
                [\omega]=0 \in H^2 (M; \mathbb{Z}) \subset H^2_{\text{dR}} (M; \mathbb{R}).
            \end{equation}

            So, by Theorem \ref{theorem_prequantum line bundle}, there exists a suitable prequantum line bundle $L \xrightarrow{\pi} \mathbb{R}^2$, which is just the trivial one $L = \mathbb{R}^2 \times \mathbb{C}$, and a connection $\nabla$ is as defined above.

            Note that connection $1$-form $A$ is given by $A = - \dfrac{i}{\hbar} q dp $.

            \item We quantize $f\in C^\infty (\mathbb{R}^2)$ by the $GQ$ assignment
            \begin{equation}
                f \xmapsto{\; \; Q \; \;} Q(f) := - i \hbar \nabla_{X_f} - f,
            \end{equation}            
            where
            \begin{equation}
                \nabla_{X_f} = d - \frac{i}{\hbar} q dp.
            \end{equation}
        \end{itemize}

        For $f=q$, we have:
        \begin{equation}
        \begin{aligned}
            \hat{q} := Q(q) &= -i \hbar \nabla_{X_q} - q\\
            &= -i \hbar \nabla_{-\pdv{}{p}} - q\\
            &= -i\hbar \left( -\pdv{}{p} + \frac{i}{\hbar q} \right) - q  \\
            &= i \hbar \pdv{}{p} +\cancel{q} - \cancel{q},
        \end{aligned} 
        \end{equation}
        where the third line is done by using: \text{by using} $$ \nabla_{-\pdv{}{p}s} = \underbrace{ds \left(-\pdv{}{p}\right)}_{-\pdv{}{p}s} - \frac{i}{\hbar} q \underbrace{dp \left(-\pdv{}{p}\right)}_{-1} \forall s \in \Gamma(L).$$ Therefore, we have
        \begin{equation}
            \boxed{
                q \longmapsto \hat{q} := i \hbar \pdv{}{p}.
            }
        \end{equation}

        For $f=p$, we have
        \begin{equation}
        \begin{aligned}
            \hat{p} := Q(p) &= - i \hbar \nabla_{X_p} - p\\
            &= - i \hbar \nabla_{\pdv{}{q}} - p\\
            &= - i \hbar (\pdv{}{q} + 0) - p\\
            &= - i \hbar \pdv{}{q} - p.
        \end{aligned}
        \end{equation}
        Therefore, we obtain
        \begin{equation}
           \boxed{ p \longmapsto \hat{p} = - i \hbar \pdv{}{q} - p.}
        \end{equation}
    \end{example}

    \begin{remark}
        Even if we recover many familiar objects, the definition of the assignment is \textit{not} fully completed.
    \end{remark}

    \begin{remark}
        Recall that to obtain the complete quantization framework, we need to \textit{restrict the section $\Gamma(L)$} to the so-called \textit{P-polarized sections}, and \textit{modify $Q(f)$ and $\mathcal{H}_{\text{pre}}$} accordingly.
    \end{remark}

    \section{Sigma Model and Topological Quantum Field Theory}

    \subsection{Energy Functional}

    Let $(M, \gamma)$ and $(N,g)$ be a Riemannian manifold of dimension $m$ and $n$ respectively. Here $g$ is the metric. In local coordinates, we write
    \begin{equation}
        \gamma = (\gamma_{\alpha \beta})_{\alpha,\beta = 1, \cdots,m},
        \quad \quad
        g = (g_{ij})_{i,j = 1, \cdots, n}.
    \end{equation}

    Let $f \in \text{Map}(M,N)$ be a smooth map. We define the \textbf{energy functional} as follows.
    \begin{align}
        \text{Map}(M,N) &\xrightarrow{} \mathbb{R},\
        f \longmapsto E(f),
    \end{align}
    where
    \begin{equation}
           \boxed{ E(f) := \int_M \norm{df}^2 dM.}
    \end{equation}

    Here the measure is $dM :=\sqrt{\det{\gamma}} dx^1 \wedge \dots \wedge dx^m$ in local coordinates $(x^\alpha)_{\alpha = 1, \cdots,m}$ on $M$.

    Note that $\norm{\cdot}$ above involves the metric on $T^* M$ and $f^{-1} T^*N$. That is, let $(x^1, \cdots, x^m)$ and $(f^1, \cdots,f^n)$ be local coordinate charts on $M$ and $N$ respectively. Then one can show that $\forall x\in M$,
    \begin{equation}
        \norm{df}^2(x) = \gamma^{\alpha\beta} (x) g_{ij} (f(x)) \pdv{f^i (x)}{x^\alpha} \pdv{f^j}{x^\beta}.
    \end{equation}

    Here $g = (g_{ij})$ is the metric on $N$, and $\gamma^{\alpha \beta} := (\gamma_{\alpha \beta})^{-1}$ is the inverse metric on $M$.

    We evaluate RHS by taking orthonormal basis $e_{1},\dots,e_{m}$ for $T_{x}M$, then
    \begin{equation}
        \norm{df}^{2}(x) = \sum_{i=1}^{m}\expval{f_{x}(e_{i}),f_{x}(e_{i})},
    \end{equation}

    where $f_{x}(\cdot)$ is push forward. Hence, we write: $\forall f \in \text{Map} (M,N)$
    \begin{equation} \label{energy funcl in local}
        E(f) := \int_M \gamma^{\alpha \beta}(x) g_{ij} (f(x)) \pdv{f^i(x)}{x^\alpha} \pdv{f^j (x)}{x^\beta} dM.
    \end{equation}

    \begin{example}
        Take $M:= [0,1]$ and $N:=N$ for some $n$-dimensional Riemannian manifold as above. Also, take $f:[0:1] \rightarrow N$ a smooth curve in $N$.
\begin{center}
            \includestandalone[width=0.60\textwidth]{Figures/Part2/p26}
\end{center}
        \begin{equation}
            f(t)
            =
            \begin{pmatrix}
                f^1(t)\\
                \vdots\\
                f^n(t)
            \end{pmatrix},
            \quad
            \text{where}
            \;
            f^i (t) := g^i_{\; 0} f(t).
        \end{equation}

         Then, $E(f)$ reduces to 
        \begin{equation}
            E(f) = \int_0^1 \underbrace{g_{ij}(f(t)) \dv{f^i}{t} \dv{f^j}{t}}_{\langle \dot{f}, \dot{f}\rangle_N} dt
            \quad \left(\dot{f}^i = \dv{f^i}{t} \right).
        \end{equation}

        This is a familiar version of the energy functional for paths in $N$.
\end{example}
Using the above setting, the corresponding \textit{Euler-Lagrange equations for Eqn. (\ref{energy funcl in local})} above can be given as follows:
        \begin{lemma}
            Euler-Lagrange equation for $i^{\text{th}}$ function $f^i$ (component-wise)  
            \begin{equation}
                \frac{1}{\sqrt{\det \gamma}}
                \pdv{}{x^\alpha}
                \left( \sqrt{\det \gamma} \gamma^{\alpha \beta} \pdv{}{x^\beta} f^i \right)
                +
                \gamma^{\alpha \beta} (x) \Gamma^i_{jk} (f(x))
                \pdv{f^j}{x^\alpha} \pdv{f^k}{x^\beta} = 0.
            \end{equation}
            
            Here $\Gamma$'s are given by               
            \begin{equation}
                \Gamma^i_{jk} = \frac{1}{2} g^{il} (-\partial_l g_{jk} + \partial_j g_{kl} + \partial_k g_{jl}).
            \end{equation}
            The solutions are called \textit{Harmonic maps}. In other words, the solutions are critical points of $E(f)$, where
            \begin{equation}
                \mathsf{crit}(f)=\{f: \delta E(f) = 0\}.
            \end{equation}
        \end{lemma}

    \begin{observation}
        \begin{itemize}
        
        \item Euler-Lagrange equations depend on the metric \textit{on the source} $(\gamma^{\alpha \beta})$ and the \textit{target} via $\Gamma^i_{jk}$.
        
        \item Consider the previous example: $M = [0,1]$ and $f : [0,1] \xrightarrow{} N$ smooth curve.

        For $E(f) = \int_M g_{ij} \dot{f}^i \dot{f}^j dt$: For $f^k$ lemma reduces to
        \begin{equation}
            \ddot{f}^k + \Gamma^k_{ij} (f(x)) \dot{f}^i \dot{f}^j = 0.
        \end{equation}

        This is the usual "geodesic equation" obtained by the variation $\delta E(f)$. It can also be obtained by the standard Euler-Lagrange equations
        \begin{equation}
            \pdv{\mathcal{L}}{q_i} - \dv{}{t} \pdv{\mathcal{L}}{\dot{q}_i} = 0
            \quad \text{where} \; 
            \mathcal{L} (f, \dot{f}) = g_{ij} \dot{f}^i \dot{f}^j.
        \end{equation}

        \item Why the name \textit{harmonic}?

        Consider the case with a smooth map $f:\mathbb{R}^2 \xrightarrow{}\mathbb{R}$. Then the corresponding Euler-Lagrange equation (by Lemma) is given by
        \begin{equation}
            \triangle f := \sum_{i=1}^2 \pdv[2]{f}{x_i} = 0,
        \end{equation}
        where $\triangle = \text{div} (\text{grad} (\cdot))$ is usual Laplacian on $\mathbb{R}^{2}$ with $(x_{1},x_{2})$. This can be found by using
        \begin{equation}
            \sqrt{\det{\gamma}} = 1, \quad \Gamma^i_{jk} = 0 \; \; \forall i, j, k \quad \text{and} \; \gamma^{ab} = \delta^{ab}.
        \end{equation}
        Then one gets
        \begin{equation}
            0 = \delta^{ab} \pdv{}{x^\alpha} \left( \pdv{f}{x^\beta} \right) = \delta^{ab} \frac{\partial^2 f}{\partial x^\alpha \partial x^\beta} = \sum_{i=1}^2 \pdv[2]{f}{(x^\alpha)}.
        \end{equation}
        \end{itemize}
    \end{observation}

    \subsection{Feynman Path Integral}

    Now we focus on the case of paths $\gamma: I \xrightarrow{} (M,g)$, with $\gamma(0) = a, \gamma(t) = b$. Then the energy functional is given by 
    \begin{equation}
        E(\gamma) = \int_0^1 g_{ij} (\gamma) \dot{\gamma}^i \dot{\gamma}^j dt.
    \end{equation}
    In classical mechanics, (by using D'Alambert's principle), we want to consider a particular path $\gamma$ that minimizes the action $E$ (i.e. $\gamma$ such that $\delta E|_{\gamma} = 0$). Here $\gamma$ must be a geodesic.

    Now, in the "quantum" case, to each path $\gamma$, one assigns a "probability function" $\exp\left(\frac{i}{\hbar} E(\gamma)\right)$ and integrate over the space $\mathcal{A} (a,b)$ of all paths connecting $a$ and $b$:

    \begin{equation}
        \int_{\mathcal{A}} \exp \left( \frac{i}{\hbar} E \right) d\mu (\gamma).
    \end{equation}

    Here $d\mu (\gamma)$ is some measure on $\mathcal{A}$. This integral is called the \textbf{Feynman path integral}.

    \begin{remark}
        In general, $\mathcal{A}$ is infinite-dimensional, and there exists no rigorously defined measure $\mu$ on $\mathcal{A}$. However, one can still use some asymptotic approaches.
    \end{remark}

    \begin{remark}
        When $\hbar \xrightarrow{} 0 $, one obtains classical minimizing paths.
    \end{remark}

    \subsection{Sigma Model}

    We will define the model using path integrals.
    Let $C_X := \text{Map} (X, M)$. Then we have:

    (1) If $\partial X = \phi$, then the partition function is
    \begin{equation}
        \forall f \in C_X, \quad Z(X) := \int_{C_X} \exp \left( \frac{i E(f)}{\hbar}  \right) \; \in \mathbb{C}.
    \end{equation}

    (2) If $\partial X = Y \neq \phi$, define $Z(X)$ on $C_Y := \text{Map} (Y, M)$ as follows:

\begin{center}
        \includestandalone[width=0.35\textwidth]{Figures/Part2/p29}\\ 
        \hypertarget{fig46}{Figure 46.} A vector space $Z_Y$ defined on the boundary $Y$ of a manifold $X$.
\end{center}
    \begin{equation}
        \forall \alpha \in C_Y = \text{Map} (Y, M),
    \end{equation}
    \begin{equation}
        Z(X) (\alpha) := \int_{C_X(\alpha)} \exp \left( \frac{i}{\hbar} E(f) \right) \in \; Z_Y,
    \end{equation}
    where 
    \begin{equation}
        C_X(\alpha) := \left\{ f \in C_X : f|_{\partial X = Y} = \alpha \right\}.
    \end{equation}
    Here, $Z_Y$ is some vector space of functions on $C_Y$.

    The idea is that even though $Z(X)$ may \textit{not} be rigorously defined, we can discuss "the expected" properties of $Z(X)$ and the corresponding \textit{vector space $Z_Y$ associated to the boundary $Y$}. This is actually where geometric quantization will come in handy.

\newpage
    \subsection{Axioms for Sigma Model}

    \begin{itemize}
        \item[$1)$] \textit{Orientation}: $\displaystyle Z_{-Y} \cong Z_Y^*$. Here, $-Y$  means $Y$ with the opposite orientation. 
        \begin{center}
                \includestandalone[width=0.35\textwidth]{Figures/Part2/p30.1}\\ 
        \hypertarget{fig47}{Figure 47.} Boundaries with reversed orientation: $\partial X_1 = - \partial X_2$.
        \end{center}
        \begin{equation*}
            Y \longmapsto Z_Y \; \& \; -Y \longmapsto Z_Y^*. \quad (\text{dual of $Z_Y$})
        \end{equation*}
        \item[$2)$] \textit{Multiplication}: $\displaystyle Z_{Y_1 \amalg Y_2} = Z_{Y_1} \otimes Z_{Y_2}, $
        
        \begin{center}
        \includestandalone[width=0.4\textwidth]{Figures/Part2/p30.2}\\ 
        \hypertarget{fig48}{Figure 48.} Both $Y_1$ and $Y_2$ constitute a boundary for $X$ through the disjoint union $\partial X = Y_1 \amalg Y_2$,
        \end{center}
        \begin{equation*}
            Y_1 \amalg Y_2 \longmapsto Z_{Y_1 \amalg Y_2} = Z_{Y_1} \otimes Z_{Y_2}.
        \end{equation*}
        
        \item[$3)$] \textit{Gluing}: $\displaystyle X = \frac{X_1 \amalg X_2}{Y}.$
        \begin{center}
        \includestandalone[width=0.4\textwidth]{Figures/Part2/p30.3}\\
        \hypertarget{fig49}{Figure 49.} The manifolds $X_1$ and $X_2$ are glued along their boundaries.
        \end{center}
\newpage

        That means $X$ can be obtained by gluing $X_1$ and $X_2$ along their boundaries: Pick a diffeomorphism $\varphi: Y \xrightarrow{\simeq} Y$ and define an equivalence relation "$\sim$" by saying "$a\sim b$" in $Y$ if and only if $b = \varphi(a)$. And then we obtain the quotient space
        \begin{equation}
            X:= \frac{X_1 \amalg X_2}{\sim}.
        \end{equation}
        By the above axioms, we have
        \begin{equation}
            Y \longmapsto Z_Y \;, \; -Y \longmapsto Z_Y^*
        \end{equation}
        and 
        \begin{equation}
            Z(X_1) \in Z_Y \; , \; Z(X_2) \in Z_Y^*.
        \end{equation}
        We have a natural pairing
        \begin{equation}
            Z_Y \otimes Z_Y^* \xrightarrow{\langle \cdot, \cdot \rangle } \mathbb{C},
        \end{equation}
        and we impose
        \begin{equation}
            Z(X) = \langle Z(X_1), Z(X_2) \rangle.
        \end{equation}
    \end{itemize}
\subsection{Some Background from Category Theory}
    Now, we aim to reformulate these axioms by means of \textit{category-theoretic} tools and by introducing a suitable notion, so-called \textbf{Topological Quantum Field Theory}, that formalizes the above setup. Let's begin with some terminology and notation.

    \begin{definition}
        A \textbf{category} $\mathcal{C}$ consists of the following data:
        \begin{itemize}
            \item \textit{Objects}, denoted by $\textsf{Obj}(\mathcal{C})$.
            \item \textit{Morphisms}, $\forall A,B \in \textsf{Obj}(\mathcal{C})$, we have a set of morphisms ("arrows") from $A$ to $B$ ($A \xrightarrow{} B$), denoted by $\textsf{Mor}(A,B)$ such that
            \begin{itemize}
                \item[(a)] Composition morphism exists:
                \begin{equation}
                \begin{aligned}
                    \textsf{Mor} (A,B) \times \textsf{Mor} (B,C) &\longrightarrow{} \textsf{Mor}(A,C)\\
                    (f,g) &\longmapsto f \circ g.
                \end{aligned}
                \end{equation}
                \item[(b)] Composition is associative.
                \item[(c)] $\forall A \in \textsf{Obj}(\mathcal{C}), \; \exists\; \text{id}_A \in \textsf{Mor} (A,A)$.
            \end{itemize}
        \end{itemize}
    \end{definition}

    \begin{example}
        \textsf{Top}: The category of topological spaces, 
        \begin{itemize}
        \itemsep0em 
            \item \textit{Objects}: Topological spaces $X$.
            \item \textit{Morphisms}: continuous maps between topological spaces $f:X \rightarrow Y$.
        \end{itemize} 
    \end{example}

    \begin{example}
        $\textsf{Sets}$: The category of sets, 
        \begin{itemize}
        \itemsep0em 
            \item \textit{Objects}: Sets.
            \item  \textit{Morphisms}: Maps between sets $A \xrightarrow{\varphi} B$.
        \end{itemize}
    \end{example}

    \begin{example}
        $\textsf{{Vec}}_k$: The category of vector spaces over a field ${k}$.
        \begin{itemize}
        \itemsep0em 
            \item \textit{Objects}: Finite dimensional vector spaces .
            \item \textit{Morphisms}: $V \xlongrightarrow{L} W$, where $L$ is a linear transformation.
        \end{itemize}
    \end{example}

    \begin{example}
        $\textsf{Ab}$: The category of Abelian groups.
        \begin{itemize}
        \itemsep0em 
            \item \textit{Objects}: Abelian groups.
            \item \textit{Morphisms}: Group homomorphisms between abelian groups.
        \end{itemize}
    \end{example}

    \begin{example}
        $\textsf{{Cob}}(n+1)$: The category of $(n+1)$ dimensional cobordisms.
        \begin{itemize}
        \itemsep0em 
            \item \textit{Objects:} Closed, orientable, smooth $n$-manifolds.
            \item \textit{Morphisms}: Let $X, Y \in \textsf{Obj}(\text{Cob}(n+1))$.
        \end{itemize}
        
        A morphism from $X$ to $Y$ is defined by a \textbf{cobordism} from $X$ to $Y$.

        Note that a cobordism from $X$ to $Y$ is an compact, orientable, smooth $(n+1)$-dimensional manifold $W$ with $\partial W = X \amalg -Y$.

        \begin{center}
        \includestandalone[width=0.34\textwidth]{Figures/Part2/p32}\\ 
        \hypertarget{fig50}{Figure 50.} Cobordism from $X$ to $Y$.
        \end{center} 
        In that case, we denote a cobordism from $X$ to $Y$ simply by $W: X\Rightarrow Y$.
    \end{example}

    As one can realize, category theory is a sort of \textit{unifying theme} in mathematics.
    
    Now, it is natural to as how can we relate two given categories?

    \begin{definition}
        Let $\mathcal{C}$ and $\mathcal{C}'$ be two categories. A \textbf{covariant functor} $\mathcal{F}$ from $\mathcal{C}$ to $\mathcal{C}'$ is defined as follows.

        \begin{itemize}
            \item It is a map from $\textsf{Obj}(\mathcal{C})$ to $\textsf{Obj}(\mathcal{C}')$. That is
            \begin{equation}
                \forall A \in \text{Obj}(\mathcal{C}), \; \exists \mathcal{F}(A) \in \text{Obj}(\mathcal{C}'),
            \end{equation}
            such that $\mathcal{F}$ maps objects to objects.

            \item For any $A_1, A_2 \in \textsf{Obj}(\mathcal{C})$ and $m: A_1 \longrightarrow A_2 \in \textsf{Mor}(A_1, A_2)$, we have
            \begin{equation}
                \mathcal{F} (m): \mathcal{F}(A_1) \longrightarrow \mathcal{F}(A_2).
            \end{equation}

            It maps from morphisms to morphisms.
            \begin{equation}
                \forall m \in \textsf{Mor}(A_1, A_2), \; \mathcal{F} (m) \in \textsf{Mor} (\mathcal{F}(A_1), \mathcal{F}(A_2)),
            \end{equation}
            such that
            \begin{itemize}
                \item It maps identity morphism to identity morphism.
                \begin{equation}
                    \mathcal{F}(\text{id}_A) = \text{id}_{\mathcal{F}(A).}
                \end{equation}
                \item Given $m_1\in \textsf{Mor}(A,B)$ and $m_2 \in \textsf{Mor}(B,C)$, we have
                \begin{equation}
                \begin{aligned}
                    \mathcal{F}(m_2 \circ m_1) &= \mathcal{F}(m_2) \circ \mathcal{F}(m_1),
                \end{aligned}       
                \end{equation}
                which means it respects composition. 
                
                \[\begin{tikzcd}
\mathcal{F}(A) \arrow[r, "\mathcal{F}(m_1)"] \arrow[rr, "\mathcal{F}(m_2 \circ m_1)"', bend right] & \mathcal{F}(B) \arrow[r, "\mathcal{F}(m_2)"] & \mathcal{F}(C).
\end{tikzcd}\]             
                This is why the name "covariant" is used. Otherwise, it would be "contravariant".
            \end{itemize}
        \end{itemize}
    \end{definition}

    \begin{example}
        \textit{Forgetful Functor}: It basically forgets structures!
        \begin{equation}
            \textsf{Top} \xlongrightarrow{\mathcal{F}_o} \textsf{Sets}.
        \end{equation}
It sends a topological space $X$ to its underlying set. Similarly, it forgets the continuity of maps and considers them as maps between sets.
    \end{example}

    \begin{example}
        \textit{Homology Functor}:
        \begin{align}
            \textsf{Top} &\xlongrightarrow{H_i} \textsf{Ab}\\
            X &\longmapsto H_i(X) \quad (\text{i$^{th}$ homology group of} \; X)\\
            (f: X \rightarrow Y) &\longmapsto \left(f_*: H_i (X) \longrightarrow H_i(Y)\right),
        \end{align} where $f_*$ is the {\text{induced map}} on homology.
    \end{example}

    \subsection{Topological Quantum Field Theory (TQFT) as a functor}
    Using the terminology of the previous section (in the sense of Atiyah\cite{Atiyah:1989vu}), we now define a \textbf{TQFT} $\mathcal{Z}$ to be a \textit{functor} from $\textsf{Cob}(n+1)$ to $\textsf{Vect}_k$ acting on objects as
    \begin{align}
        \textsf{Cob} (n+1) &\xlongrightarrow{\mathcal{Z}} \textsf{Vect}_k\\
        X &\longmapsto \mathcal{Z}(X),
    \end{align}
    a finite-dimensional vector space over k, and $\mathcal{Z}$ maps a cobordism $W: X \Rightarrow Y$, as in \hyperlink{fig50}{Figure 50}, to a linear transformation between vector spaces
    \begin{equation}
        \mathcal{Z}(X) \xrightarrow{\mathcal{Z}(W)} \mathcal{Z}(Y).
    \end{equation}
    Observe that the axioms of the sigma model can be recovered from the definition of a functor.
    \begin{example}
        As above, if $W: X \Rightarrow  Y$ a {cobordism} from $X$ to $Y$, with $\partial W = X \amalg - Y$, we have
        \begin{equation}
        \begin{aligned}
            \mathcal{Z}(W) &\in \textsf{Hom} (\mathcal{Z}(X), \mathcal{Z}(Y))\\
            &\simeq \mathcal{Z}(X)^* \otimes \mathcal{Z}(Y).
        \end{aligned}   
        \end{equation}
        This is by recalling that for all finite dimensional vector spaces $V_1, V_2$,
        \begin{equation}
            \textsf{Hom}_k(V_1, V_2) \cong V_1^* \otimes V_2.
        \end{equation}
        If $X = \phi$, i.e. $W$ is a cobordism from $\phi$ to $Y$, with $\partial W = Y$, we have
    \begin{equation}
        Z(W) \in \underbrace{Z(\phi)^*}_k \otimes Z(Y) \underbrace{\simeq Z(Y)}_{\text{since } k \otimes V \simeq V}.
    \end{equation}
    Similarly, if $Y$ has reversed orientation, i.e. $W$ is a cobordism from $Y$ to $\phi$. We get
    \begin{equation}
        Z(W) \in \textsf{Hom}(Z(Y), \underbrace{Z(\phi)}_{k}) = Z(Y)^* ,
    \end{equation} 
    where
    \begin{center}
        \includestandalone[width=0.32\textwidth]{Figures/Part2/p35.2}
    \end{center} 
    Therefore, the "orientation" axiom is recovered!
    \end{example}
    Similarly, the other axioms can also be realized directly from the definition of functor. 

\chapter{Application: Knot theory and Witten's Work
}

Classically, a knot invariant can be built using \textit{algebraic, geometric, and statistical methods} \cite{Wu:1992zzb}. For example, an invariant of ambient isotopy—such as a number, group, or polynomial linked to a knot/link that remains unchanged under Reidemeister moves. 
\vspace{3pt}

Each approach is interesting historically and technically; however, all require a suitable 2D projection of the knot to evaluate the invariant. The algebraic method uses the \textit{braid group representation} for the knot (e.g., Jones' works \cite{Jones:1985dw,Jones:1989ed}), the geometric approach studies \textit{skein relations} \cite{PrasolovSossinsky1997, Wu:1992zzb}, and the statistical mechanical method constructs a \textit{lattice model} where the partition function yields a knot invariant. Some models produce relations similar to skein relations, giving rise to certain knot polynomials. This chapter focuses on Witten's 3D approach \cite{Witten:1988hf}.

The outline is as follows:

\begin{enumerate}
    \item A brief introduction to knot theory. (Main ref.: Wu’s paper \cite{Wu:1992zzb})
    \item Witten’s paper: Main setup and results. (Main ref.: \cite{Witten:1988hf})
\end{enumerate}

\newpage
\section{A Review of Knot Theory}

\subsection{Knots, Links, and Diagrams}

\begin{definition}
    \begin{enumerate}
        \item An \textbf{oriented knot} $K$ is an embedding of a circle $S^1$ into $\mathbb R^3$ (or into its one point compactification $S^3$).
        \item An \textbf{oriented link} $L$ with $m$-components is an embedding of a disjoint union of $m$ copies of $S^1$ into $\mathbb R^3$ (or $S^3$):
        \begin{equation}
        \begin{aligned}
            S^1 \quad\quad\ &\xlongrightarrow[]{ \ \varphi \ } \mathbb R^3, \quad \ \varphi(S^1)  \ \ =: K \\
            \underbrace{S^1 \amalg \cdots \amalg S^1}_{m} &\xlongrightarrow[]{\ \psi \ } \mathbb R^3, \quad \psi (\amalg S^1) =:L
        \end{aligned}
        \end{equation}
    \begin{center}
            \includestandalone[width=0.65\textwidth]{Figures/Part3/p1.1}\\
            \hypertarget{fig51}{Figure 51. }A figure-8 knot formed via an embedding of $S^1$ into $\mathbb{R}^3$
    \end{center}
        \begin{center}
            \includestandalone[width=0.5\textwidth]{Figures/Part3/p1.2}\\
            \hypertarget{fig52}{Figure 52. } A 2-link (Hopf Link) formed via an embedding of two $S^1$s into $\mathbb{R}^3$.
    \end{center}
    \end{enumerate}
\end{definition}

\begin{remark}
    \begin{enumerate}
        \item The trivial embedding is called an \textbf{unknot}, i.e. $\varphi(S^1) = S^1$.
        \item For all knots $K$, we have a homeomorphism $K \cong S^1$. That is to say, each knot (closed, simple, smooth curve) is homeomorphic (topologically equivalent) to a circle.  Therefore, we need a stronger notion to distinguish two given knots $K_1$, $K_2$. This is the notion of isotopy (ambient) classes. For example, the unknot and the basic trefoil knot are both homeomorphic to the circle, but they are not isotopic.
    \begin{center}
            \includestandalone[width=0.35\textwidth]{Figures/Part3/p1.3}
    \end{center}
    \end{enumerate}
\end{remark}

\begin{definition}
    Let $K_1$ and $K_2$ be two knots. An \textbf{ambient isotopy} between $K_1$ to $K_2$ is a one parameter family $h_t$ of diffeomorphisms (smoothly depending on $t$)
    \begin{equation}
        h_t : [0,1] \times \mathbb R^3 \xlongrightarrow[\ \ \  ]{} \mathbb R^3, \quad h(t,x) = h_t(x).
    \end{equation}such that 
    \begin{equation}
    \begin{aligned}
        &1) \ h_0 = \text{id}_{\mathbb R^3}, \\
        &2) \ h_1(K_1) = K_2.
    \end{aligned}
    \end{equation}
    Here, $h_t \in \textsf{Diff}(\mathbb R^3)$ and $h_t : \mathbb R^3 \to \mathbb R^3$ is a diffeomorphism for all $t$. 
    
\end{definition}

\begin{observation}
    An isotopy $h_t$ defines a smooth deformation from $K_1$ to $K_2$.
    \begin{center}
            \includestandalone[width=0.8\textwidth]{Figures/Part3/p2.1}\\
            \hypertarget{fig53}{Figure 53. } Isotopies of $K_1 = S^1$.
    \end{center}
\end{observation}

We can define an \textit{equivalence relation} for knots as follows:

\begin{definition}
    $K_1 \sim K_2$ if and only if there exists an ambient isotopy going from $K_1$ to $K_2$.
\end{definition}

If there exists an ambient isotopy from $K_1$ to $K_2$, $K_1$ and $K_2$ are said to be \textbf{ambient isotopic.}

    \begin{center}
            \includestandalone[width=0.5\textwidth]{Figures/Part3/p2.2}\\
            \hypertarget{fig54}{Figure 54.} The unknot is isotopic to a figure-8 knot.
    \end{center}

\newpage
The study of knot theory aims to classify equivalence classes of knots defined by ambient isotopy. To understand which knots are isotopic (and to construct a convenient algorithm that can be used to detect isotopic behavior), we introduce the notion of "knot diagrams/projections". In other words, to study knots, we consider their suitable, perpendicular, 2D projections onto the plane.

\begin{definition}
    Let $K$ be a knot. A {\bf  knot diagram} $D(K)$ (or just use $K$ again) is the perpendicular projection of $K$ onto the plane such that
    \begin{enumerate}
        \item The projection maps tangent lines to tangent lines at all points $p \in K$.
        \item \emph{No more than two points} of $K$ are projected onto the same point.
        \item The set of "crossing points" is finite, and at each crossing, the projections of two tangents do not coincide.
        \begin{center}
            \includestandalone[width=0.47\textwidth]{Figures/Part3/p3.1}\\
            \hypertarget{fig55}{Figure 55. } (1) The projection preserves the tangent lines. (2) Two points of $K$ cannot be mapped to the same point on the plane. (3) Projections of tangent lines onto the same line on the plane is not allowed.
        \end{center}
    \end{enumerate}
\end{definition}

\begin{example}
    The following figure indicates the three-dimensional nature of the theory and also gives information about the \emph{relative position} of two strip segments in space. Terminology: "overcrossing" and "undercrossing".
      \begin{center}
        \includestandalone[width=0.65\textwidth]{Figures/Part3/p3.2}\\
            \hypertarget{fig56}{Figure 56. } Knots in $\mathbb{R}^3$ are projected onto the $xy$-plane.
    \end{center}
    
\end{example}
\textbf{Convention:} Our sign convention will be as follows:
        \begin{center}
            \includestandalone[width=0.5\textwidth]{Figures/Part3/p4.1}\\
            \hypertarget{fig57}{Figure 57. } A $(+)$ sign is given whenever an arrow coming from the left passes \textit{over} the arrow coming from the right. In the converse, the crossing has a $(-)$ sign.
        \end{center}
        
\begin{example}
    Some examples of knots where the crossings are labeled:
        \begin{center}
            \includestandalone[width=0.75\textwidth]{Figures/Part3/p4.2}\\
            \hypertarget{fig57}{Figure 57. }Some possible orientations of given knots. 
        \end{center}
\end{example} 

\begin{observation}
    Given an unoriented link $K$ consisting of $n$-components, one can "orientize" $K$ in $2^n$ different ways.\footnote{For example, in \hyperlink{fig57}{Figure 57} there are eight examples.} If $n=1$, it is just a knot, and there are $2^1 = 2$ ways to orient it as shown.    
\end{observation}

Why do we care about the planar projection? Because there exist some \textit{line moves} in the planar projection that produce isotopic knots in the space. These are called \textbf{Reidemeister moves}. \\
        \begin{center}
            \includestandalone[width=0.85\textwidth]{Figures/Part3/p4.3}
        \end{center}
In a more concrete picture, with orientations involved: \\ 

\begin{minipage}{0.29\textwidth}
    \begin{center}
            \includestandalone[width=0.7\textwidth]{Figures/Part3/p5.1}
        \end{center}
\end{minipage}
\begin{minipage}{0.29\textwidth}
    \begin{center}
        \includestandalone[width=0.9\textwidth]{Figures/Part3/p5.2}
    \end{center}
\end{minipage}
\begin{minipage}{0.29\textwidth}
        \begin{center}
            \includestandalone[width=0.4\textwidth]{Figures/Part3/p5.4}
        \end{center}
\end{minipage}\\
\begin{center}
    \hypertarget{fig58}{Figure 58. } Type I (Left), Type II (Middle), and Type III (Right) Reidemeister moves.
\end{center}

\begin{definition}
    A quantity\footnote{e.g., a number, group, polynomial assigned to a given knot (link) $K$. } $Z$ assigned to a knot $K$ is said to be an \textbf{invariant of ambient isotopy} if it is invariant under all Reidemeister moves. Such a quantity is called an \textbf{invariant of regular isotopy} if it is invariant under type II and type III moves.
\end{definition}

Our approach to the problem of classifying knots will be to construct invariants for isotopy classes of knots (links), i.e., if $K_1 \sim K_2$ then $Z(K_1) = Z(K_2)$.\\

\textbf{Possible approaches to construct a knot invariant:} 
\begin{itemize}
    \item Algebraic approach: Finding the braid group representation of a given knot.
    \item Geometric approach: Using knot diagrams and skein relations.
    \item Statistical mechanics approach: Assign to a given knot diagram a lattice model and evaluate its partition function.
\end{itemize}
These are classical approaches. They all need the 2-dimensional projection of a given knot (i.e., knot projections), even if the theory itself is 3-dimensional. We also have the following approaches.

\begin{itemize}
    \item Witten's Chern-Simons approach: Assigning a Wilson line to a given knot, and computing its expectation value.
    \item Reshetikin and Turaev approach: Using quantum groups.
    \item many other modern approaches...
\end{itemize}
This part uses 3-dimensional Chern-Simons theory to recover the well-known knot polynomials. Due to the appearance of the partition function $Z$ in the construction, which is challenging to define, technical issues arise.

Our aim is to study Witten's approach. Before doing that, we focus on the geometric approach to warm up.

A question is how we detect the isotopy class of a knot via given knot diagrams?

\begin{theorem}
    \textbf{(Reidemeister)}
    
    Let $K_1$ and $K_2$ be two knots (links). Let $D(K_1)$ and $D(K_2)$ be knot diagrams for $K_1$ and $K_2$, respectively. Then 
$$
   \begin{array}{l}
   K_1 \; \text{and} \; K_2 \; \text{are}\\
   \text{(ambient) isotopic.}
  \end{array}
 \qquad \Longleftrightarrow \qquad
  \begin{array}{l}
    {D(K_2) \; \text{can be obtained from} \; D(K_1)} \\
    \text{through finitely many Reidemeister} \\
    \text{moves and the plane isotopy.}
  \end{array}
 $$
\end{theorem}

\begin{example}
    Let $D(K_1)$ be an unknot and $D(K_2)$ be a figure-8 knot with a crossing at the point $P$ as seen in \hyperlink{fig59}{Figure 59}. By using the first Reidemeister move at the crossing $P$, one has 

    \begin{center}
        \includestandalone[width=0.75\textwidth]{Figures/Part3/p7.1}\\
        \hypertarget{fig59}{Figure 59. } Isotopy between a figure-8 knot and the unknot is shown via Type I Reidemeister move.
    \end{center}

    Hence, $K_1$ and $K_2$ are ambient isotopic.    

    \begin{center}
        \includestandalone[width=0.75\textwidth]{Figures/Part3/p7.2}\\
        \hypertarget{fig60}{Figure 60. } Isotopy between the unknot and the figure-8 knot.
    \end{center}
\end{example}

A naive construction for knot invariants is that there exists a natural "number" assigned to a given knot. Let $K$ be an oriented knot. We define the \textbf{writhe} $w(K)$ of $K$ by
\begin{equation}
    w(K) := n_+ - n_-,
\end{equation}
where $n_\pm$ denotes the number of $\pm$ crossings. This quantity is determined uniquely for a given knot $K$, and is independent of the orientation. 

\newpage
\begin{example} Let $K$ be a trefoil and $K'$  be the figure-8 knot as below.
   \begin{center}
        \includestandalone[width=0.25\textwidth]{Figures/Part3/p8.1}
    \quad 
        \includestandalone[width=0.25\textwidth]{Figures/Part3/p8.2}
    \end{center}Then $w(K)=3$ and $w(K')=-1$.
\end{example}

\begin{observation}
    $w$ is an invariant of regular isotopy as $w$ fails to be invariant under the Type-I move. Clearly, we have
    \begin{center}
        \includestandalone[width=0.35\textwidth]{Figures/Part3/p8.3}
    \end{center}
    
    (similarly for "-" crossing), when we resolve the crossing, the number of crossings has changed. 
\end{observation}

\subsection{The skein relation (for ``oriented" knots) and knot polynomials}

The \textbf{skein relation} is a recurrence relation relating the invariant of knots whose diagrams are identical except for a small neighborhood of a particular crossing.\\ 

    \begin{center}
        \includestandalone[width=0.45\textwidth]{Figures/Part3/p9.1}\\ Such configurations cannot be disentangled by usual line moves.
    \end{center}
Three (independent) line configuration: 

    \begin{center}
        \includestandalone[width=0.5\textwidth]{Figures/Part3/p9.2}\\
        \hypertarget{fig61}{Figure 61. } The three knots are identical except at a crossing point.
    \end{center}

    \begin{center}
        \includestandalone[width=0.4\textwidth]{Figures/Part3/p9.3}\\ 
        \hypertarget{fig62}{Figure 62. } One cannot switch the lines to get one from the other.
    \end{center}
    
We denote the knots in \hyperlink{fig61}{Figure 61} by $L_+, L_-, L_0$ respectively and $P_+,P_-,P_0$ their associated invariants.

The simplest Skein relation reads
\begin{equation}
\begin{aligned}
    x P_{L_+}(x,y,z) + y P_{L_-} (x,y,z) &= z P_{L_0}(x,y,z), \\
    P_{\text{unknot}}(x,y,z) &= 1   ,
\end{aligned}
\end{equation}
where $P$ is a \textit{Laurent polynomial} of homogeneous degree $0$ in variables $(x,y,z)$.

\begin{example}
\begin{enumerate}
    \item Apply the Skein relation for two unlinked knots, written $2l$:
    \begin{equation}
        x \cdot 1+ y \cdot 1 = z P_{2l}(x,y,z)  \Longrightarrow \boxed{P_{2l}(x,y,z) = \dfrac{x+y}{z}.}
    \end{equation}
        \begin{center}
        \includestandalone[width=0.6\textwidth]{Figures/Part3/p10.1}
    \end{center}
    \item Now consider the Hopf link, with the three independent  configurations at a crossing:
    \begin{center}
        \includestandalone[width=0.7\textwidth]{Figures/Part3/p10.2}
    \end{center}

    Then, we have: 
    \begin{equation}
        x P_{\text{Hopf}}(x,y,z) + y \underbrace{P_{2l}(x,y,z)}_{\frac{x+y}{z}} = z \underbrace{P_{\text{unknot}}(x,y,z)}_{1} \Longrightarrow \boxed{ P_{\text{Hopf}}(x,y,z) = \dfrac{z^2 - xy - y^2}{xz}.}
    \end{equation}

    \item For the Trefoil knot, we have

    \begin{center}
        \includestandalone[width=0.60\textwidth]{Figures/Part3/p10.3}
    \end{center}
    
    \begin{equation}
        x P_{\text{trefoil}} + y \cdot 1 = z \times \frac{z^2 - xy - y^2}{xz} \Longrightarrow \boxed{P_{\text{trefoil}} = \frac{z^2 - 2xy - y^2}{x^2}.}
    \end{equation}
\end{enumerate}    
    
\end{example}

\begin{remark}
    For "unoriented" knots, skein relations can be written in a similar fashion, but with minor differences. We will skip that part in these notes.
\end{remark}

\paragraph{\textbf{Polynomial invariants.}} Each of the polynomials below is determined by the skein relation, and they are all ambient isotopy invariant (under Types I, II, III Reidemeister moves) unless otherwise stated. For more details, we refer the reader to \cite{PrasolovSossinsky1997}. 

\begin{enumerate}
    \item \emph{ Alexander (1928) - Conway (1970) polynomial}, $\Delta(t)$: Let 
    \begin{equation}
        x = 1, \quad y= -1, \quad z =- \sqrt t + \frac{1}{\sqrt t},
    \end{equation}
    in $P(x,y,z)$ with Skein relation 
    \begin{equation}
        \Delta_{L_+}(t) - \Delta_{L_-}(t) = \Big( \frac{1}{\sqrt t} - \sqrt t\Big) \Delta_{L_0}(t). \quad \quad (\Delta_{\text{unknot}}(t) =1)
    \end{equation}

    \item \emph{ Jones (1985) polynomial} $V(t)$: This is obtained by analyzing the braid group representations of knots using von Neumann algebras. Take
    \begin{equation}
        x = \frac{1}{t} ,\quad y = -t , \quad z = \sqrt t - \frac{1}{\sqrt t},
    \end{equation}
    and the skein relation reads 
    \begin{equation}
        \frac{1}{t} V_{L_+}(t) - t V_{L_-}(t) = \Big( \sqrt t - \frac{1}{\sqrt t} \Big) V_{L_0}(t). \quad \quad (V_{\text{unknot}}(t) = 1)
    \end{equation}

    \item \emph{HOMFLY(1985-87) polynomial} $P(t,z)$: Jones (1987) rederived this polynomial invariant using the Hecke algebra representation of the braid group. Take 
    \begin{equation}
        x = \frac{1}{t} , \quad y =-t,
    \end{equation}
    with the skein relation 
    \begin{equation}
        \frac{1}{t} P_{L_+}(t,z) - t P_{L_-}(t,z) = z P_{L_0}(t,z). \quad \quad (P_{\text{unknot}}(t,z) =1)
    \end{equation}

    \item \emph {The Akutsu-Wadati (1987) polynomial}: A new knot invariant derived from exactly solvable models in statistical mechanics. 

    \item \emph{Kauffman (1990) polynomial} $L(\alpha,z)$: This polynomial is regular isotopy invariant (i.e., only under Type II and III Reidemeister moves). The Skein relation reads as
    \begin{equation}
        L_{L_+}(\alpha,z) + L_{L_-}(\alpha,z) = z [ L_{L_0}(\alpha,z) + L_{D_{\text{unoriented}}}(\alpha,z) ],
    \end{equation}
    where the extra part in the parentheses on the right-hand side accounts for unoriented knots.
    
\end{enumerate}

\begin{theorem}
    (Kauffman 1988)
    
    Given an invariant of regular isotopy for "unoriented knots", then there exists an invariant of ambient isotopy for oriented knots.
\end{theorem}

Recall that given an oriented link $L$, $V_L(t)$ -the \textbf{Jones polynomial}- is determined by the following relations:
\begin{equation}
\begin{aligned}
    &(i) \quad \quad t^{-1} V_{L_+}(t) - t V_{L_-}(t) = (t^{1/2} - t^{-1/2}) V_{L_0}(t), \\
    &(ii) \quad \quad V( L \amalg \text{unknot)} = - (t^{-1/2} + t^{1/2}) V_L(t), \\
    &(iii) \quad \quad V_{\text{unknot}}(t) =1.
\end{aligned}
\end{equation}

\begin{example}
    One can see that 
    \begin{equation}
        V_{\text{Hopf}} (t) = - t^{-5/2} - t^{-1/2},
    \end{equation}
    and that 
    \begin{equation}
        V_{\text{trefoil}}(t) = -t^{-4} + t^{-3} + t^{-1}.
    \end{equation}
\end{example}

\section{Witten's Work}

\paragraph{Some notation and the setup.} Let us now focus on Edward Witten's work. For our setup, we consider a principal $SU(2)$ bundle over a closed, oriented, smooth 3-manifold $M$, together with a $2+1$ Chern-Simons theory on $M$,  and let
    \begin{equation}
        \begin{tikzcd}[sep=1.5cm]
            P \arrow[r, thick, "SU(2)"] & P \arrow[d, thick, "\pi"] \\ 
            {} & M \arrow[u, thick, "\sigma", bend right=60, swap]
        \end{tikzcd}
    \end{equation}
    On $P$, we have a Lie algebra-valued connection one form, $\omega\in \Omega'(P)\otimes \mathfrak{g}$; a covariant exterior derivative, $D$, which acts on $\theta$ as $ D\theta = d\theta \circ \mathsf{Hor}$; and a curvature 2-form $\Omega\in\Omega^2(P)\otimes \mathfrak{g}$ which is defined as
    \begin{equation}
        \Omega = D\omega = d\omega\circ\mathsf{Hor} = d\omega + \omega \wedge\omega.
    \end{equation}
    This is necessarily a trivial bundle, $P\cong M\times SU(2)$, so let $\sigma$ be the trivial section $\sigma:M\to P$ such that $\pi\circ\sigma = \text{id}_M$. Then, using $\sigma^*$, we can pull everything down to $M$. With these, we define the one-form, $A$, and its curvature two-form, $F$ as
    \begin{align}
        A := \sigma^*\omega \in \Omega^1(M) \otimes \mathfrak{g}, \qquad \text{and} \qquad
        F := \sigma^*\Omega \in \Omega^2(M; \mathfrak{g}),
     \end{align}
     where
     \begin{equation}
         F = d A + A\wedge A.
     \end{equation}
    Taking a  local coordinate chart $x=(x^i)_{i=1}^3$ on $M$, we get 
    \begin{align}
        A = A_\mu(x) d x^\mu, \qquad\ \text{and} \qquad
        F = \frac{1}{2} F_{\mu\nu} d x^\mu \wedge d x^\nu,
    \end{align}
    where
    \begin{equation}
        F_{\mu\nu} = \del_\mu A_\nu - \del_\nu A_\mu + [A_\mu, A_\nu].
    \end{equation}
    Now consider the Chern-Simons three-form on $M$, given by
    \begin{equation}
        Q_3(A,F) = \tr\lrp{A\wedge d A + \frac{2}{3}A\wedge A \wedge A},
    \end{equation}
    or equivalently
    \begin{equation}
        Q_3(A,F) = \tr\lrp{A\wedge F - \frac{1}{3}A\wedge A \wedge A}.
    \end{equation}
    Note that the exterior derivative of the Chern-Simons form is
    \begin{equation}
        d Q_3(A,F) = d \tr\lrp{A\wedge d A + \frac{2}{3}A\wedge A \wedge A} = \tr\lrp{F\wedge F} \in \Omega^4(N;\mathfrak{g}).
    \end{equation}
    Here, $N$ is a compact, oriented four-manifold whose boundary is $M$, i.e. $\del N = M$. Moreover, the quantity
    \begin{equation}
        P_{2n}(F) := \tr(F\wedge F),
    \end{equation}
    is called the \textbf{$n$th Chern form}. With all these, the action of the Chern-Simons theory is given by
    \begin{equation}
        S_\mathrm{CS}[A] = \frac{k}{4\pi}\int_M \tr\lrp{A\wedge d A+ \frac{2}{3}A \wedge A \wedge A},
    \end{equation}
    where $k\in \mathbb{N}^+$. In accordance with the path integral quantization formalism, the corresponding partition function is
    \begin{equation}
        Z(M) = \int_\mathcal{A} \mathcal{DA'} \exp(iS_\mathrm{CS}[A]).
    \end{equation}
    Here $\mathcal{A}$ is the space of all connections $\mathcal{A}\cong \Omega^1(M;\mathfrak{g})$.
    \begin{remark}
        \begin{enumerate}
            \item The integer $k$ is called the \textit{level}.
            \item The consistency of the QFT does not require the single-valuedness of the Chern-Simons action but that of $\exp(iS)$, where
            \begin{equation}
                S_\mathrm{CS} : \mathcal{A} \longrightarrow \mathbb{R}/ \mathbb{Z}.
            \end{equation}
            \item Let $\mathcal G$ be a group of $G$-equivariant diffeomorphisms, $\phi$, of $P$. Then, we saw before that 
            \begin{align}
                \mathcal{G} &= \mathrm{Map}(M; SU(2)) \notag \\ 
                &= \qty{g:M\rightarrow SU(2) \ | \ g \;\;\mathrm{ smooth}},
            \end{align}
             hence, $\mathcal G$ acts on $\mathcal{A}$ as follows: For all $g\in \mathcal G$ and $A \in \mathcal{A}$, we define
            \begin{equation}
                g\cdot A := g^{-1}Ag + g^{-1}d g.
            \end{equation}
            Locally, we write
            \begin{equation}
                A_\mu \longmapsto g^{-1}A_\mu g + g^{-1}\del_\mu g = g^{-1}(A_\mu + \del_\mu)g.
            \end{equation}
            \item We saw before that under gauge transformations,
            \begin{equation}
                F \longmapsto g\cdot F = g^{-1}F,
            \end{equation}
            hence we have 
            \begin{equation}
                \tr (g^{-1}Fg) = \tr(F),
            \end{equation}
            by similarity. This is the reason why $\tr(F^{\wedge 2})$  is invariant under gauge transformation.
            \item We need a term $k/4\pi$ in the dynamics of $S_\mathrm{CS}$ to make $\exp(iS_\mathrm{CS})$ invariant under the gauge transformation, which by a long calculation and recalling $\ker(e^{i\theta})=\qty{2\pi k | k\in \mathbb{Z}}$ , gives 
            \begin{equation}
                S_\mathrm{CS}[A] \longmapsto S_\mathrm{CS}[g\cdot A] = S_\mathrm{CS}[A] - 2\pi kN.
            \end{equation}
        \end{enumerate}
    \end{remark}
    
    The question now is how we compute $Z(M)$, which turns out to be a topological invariant. So, let us cut out $M$ by a Riemann surface $\Sigma$ and consider the above setup with a restriction on $\Sigma$.
    \begin{center}
        \includestandalone[width=0.7\textwidth]{Figures/Part3/p17}\\ 
        \hypertarget{fig63}{Figure 63.} A manifold $M$ cut by a Riemann surface $\Sigma$, on which the space, $\mathcal{A}$, of all connections is defined.
    \end{center}
    Here, $M = (M_1 \coprod M_2) / \Sigma$. Recall that $Z(M_1)$ is an element of a certain vector space $Z_\Sigma$ assigned to $\Sigma$ by means of geometric quantization. Similarly, $Z(M_2)\in Z_{-\Sigma}$. Also, by the axioms of TQFT, we had $Z_{-\Sigma}=Z_\Sigma^*$. Hence,
    \begin{equation}
        Z(M) = \langle Z(M_1),Z(M_2) \rangle,
    \end{equation}
    by gluing action where $\langle \cdot, \cdot \rangle$ is the natural pairing on $Z_\Sigma$. Recall that at the end of Chapter ~\ref{ch:symplectic_geometry}, we saw 
    \begin{itemize}
        \item $\mathcal{A}_\Sigma$ is an infinite-dimensional symplectic manifold.
        \item Furthermore, $(\mathcal{A}_\Sigma, G, \mu)$ is a Hamiltonian G-space with a moment map
        \begin{equation}
            \begin{aligned}
                \mu : \mathcal{A} &\longrightarrow \mathfrak{g}_G^* \\ 
                A &\longmapsto F_A.
            \end{aligned}
        \end{equation}
        This is the curvature map!
        \item By symplectic reduction theorem and the results of Atiyah, the moduli space $\mathcal{M}:=\mu^{-1}(0)/G$ of flat connections on $M$ turns out to be a compact finite-dimensional symplectic manifold. 
    \end{itemize}
    Now, geometric quantization is available for $\mathcal{M}$, and one can canonically associate a vector space $Z_\Sigma$ to $\mathcal{M}$. This is a unique vector space up to isomorphisms (by the Hitchin construction). Then we have the following picture outlining the situation:
     \begin{center}
        \includestandalone[width=0.7\textwidth]{Figures/Part3/p18}\\
        \hypertarget{fig64}{Figure 64. } Assignment of a vector space $Z_\Sigma$ to the moduli space, $\mathcal{M}$.
    \end{center}   
    Here, to each $\Sigma$, we assign canonically a Hilbert space $Z_\Sigma$ such that  $Z(M_1)\in Z_\Sigma$ as we claimed. For a given $\Sigma$, one can talk explicitly about $Z_\Sigma$ in more detail in suitable cases. \\ 
    For $A = A_\mu d x^\mu = A_0d x^0 + A_id x^i$ where $i=1,2$, and noting that locally, $A_0=0$ on $\Sigma$, we write 
    \begin{equation}
        \begin{tikzcd}[sep=1cm]
           {\displaystyle S_\mathrm{CS}[A] = \frac{k}{2\pi}\int_M \epsilon^{\mu\nu\rho}\tr\lrp{A_\mu\del_\nu A_\rho - A_\mu\del_\rho A_\nu + \frac{2}{3}A_\mu [A_\nu, A_\rho]}} \arrow[d, thick, "\text{on } \Sigma\times \mathbb{R}"] \\ 
           {\displaystyle S_\mathrm{CS}[A]_\Sigma = \frac{k}{2\pi}\int dt \int_\Sigma \epsilon^{ij}\tr\lrp{A_i\frac{d}{d t}A_j}}.
        \end{tikzcd}.
    \end{equation}
    Now, \textbf{what if $\mathbf{M}$ includes a knot (link) $\mathbf{K}$?} 
    
    To obtain further information, first we need to consider the generalised version of $Z$ in the following sense: By introducing a functional $\mathcal{O}(A)$ on $\mathcal{A}$ associated to $A$, one can construct an invariant for the "data" defining $\mathcal{O}(A)$ as follows:
    \begin{equation}
        Z(M; \mathcal{O}) = \int_\mathcal{A}\mathcal{DA'}e^{iS_\mathrm{CS}}\mathcal{O}(A).
    \end{equation}
    Taking a link $L$, say with $n$-component, so that $L=\bigcup_{i=1}^nL_i$ in $M$ and letting $M=S^3$ in particular, we can define a function $\theta$ associated to $L$ as follows:
    \begin{itemize}
        \item For each $L_i$, assign an irreducible representation $R_i$ of $G=SU(2)$.
        \item For each loop $L_i$ and its assigned representation $R_i$, define the \textit{Wilson line operator} $W_{R_i}(L_i)$ by
        \begin{equation}
            W_{R_i}(L_i) := \tr_{R_i}\lrp{\mathcal{P}\exp\lrp{i\oint_{L_i} A}},
        \end{equation}
        where $\mathcal{P}$ is the \textit{path-ordering operator}. The term inside the trace is also called the \textit{holonomy} of $A$ over $L_i$, $\mathrm{Hol}_{L_i}(A)$.
        \item We now define the the operator
        \begin{equation}
            \mathcal{O}(A) := \prod_i W_{R_i}(L_i).
        \end{equation}
    \end{itemize}
     \begin{center}
        \includestandalone[width=0.55\textwidth]{Figures/Part3/p19} \\
        \hypertarget{fig65}{Figure 65.} Linked loops $L_i$ embedded in $S^3$ and their corresponding Wilson operators.
    \end{center}     
    Then the partition function is given by
    \begin{equation}
        Z(M, \mathcal{O}) = \int_\mathcal{A}\mathcal{DA'}e^{iS_\mathrm{CS}}\prod_i W_{R_i}(L_i).
    \end{equation}
    When there are no knots or links on $M$, one reclaims the original $Z(M)$ without introducing an extra Wilson line term.

    \subsection{Rederiving the Knot Polynomials}
        Here the idea is to recover the suitable skein-like relations in terms of the partition function $Z$ that uniquely determines the associated knot polynomial. Our setup assumes $M=S^3$ with a knot $K$\footnote{Here, $K$ will only have one component.} in $M$. Considering a 3-ball   $D^3$ around a crossing $P$ with $\del D^3 = S^2$, we can let $S^2$ play the role of $\Sigma$ above. Then, after cutting out the manifold, we have the following figure. 
        \begin{center}
            \includestandalone[width=0.6\textwidth]{Figures/Part3/p20.1}\\
            \hypertarget{fig66}{Figure 66.} $S^3$ with an embedded knot is cut by a Riemann surface at the crossing point of the knot.
        \end{center}  
        \begin{observation}
            \begin{enumerate}
                \item Without touching anything, if we glue back each piece, then we obtain $S^3=M$ with the original knot $K$.
                \item Notice that while $M_1$ includes the complicated part of the knot, $M_2$ consists of parts of the original knot with some actual braiding. Possible cases are as follows:
                 \begin{center}
                    \includestandalone[width=0.80\textwidth]{Figures/Part3/p20.2}\\
            \hypertarget{fig67}{Figure 67.} Each possible braiding for the part of the knot inside $M_2$.
                \end{center}                
                \item Each choice of possible braiding corresponds to one of the independent line configurations that appeared in our previous discussions on knot theory:
                \begin{center}
                    \includestandalone[width=0.60\textwidth]{Figures/Part3/p21.1}\\
                    \hypertarget{fig68}{Figure 68.} Possible line configurations for the part of the knot inside $M_2$.
                \end{center}  
                \item The choice of braiding of the four points (or say labelling) determines different vectors in the assigned vector space $Z_{-S^2_{(4)}}=Z^*_{S^2_{(4)}}$. That is,
                \begin{equation}
                    Z(M_2,L_+), \ Z(M_2,L_0), \ Z(M_2,L_-) \in Z^*_{S^2_{(4)}}.
                \end{equation}
                \begin{center}
                    \includestandalone[width=0.66\textwidth]{Figures/Part3/p21.2}\\
                    \hypertarget{fig69}{Figure 69.} Vector spaces assigned to the Riemann surface $S^2$ with four marked points.
                \end{center}  
            \end{enumerate}
        \end{observation}
        Here, a fact is that $\dim S^2_{(4)}\leq 2$. This can be seen by working in 2D-conformal field theory on the plane, using some representation-theoretical arguments, which we will not get into. For details, we refer to Kohno's book \cite{Kohno2002}, \textit{Conformal Field Theory and Topology.} \\ 

        \noindent All in all, the above vectors should obey the dependence relation due to dimensionality:
        \begin{equation}
            \alpha Z(M_2,L_+) + \beta Z(M_2,L_-) + \gamma Z(M_2,L_0)=0.
        \end{equation}
        When we glue back $M_1$ and $M_2$, we recover $M$, and depending on the choice of braiding, we obtain not just the original one, but $K$ with a modified line configuration at   the crossing. \\ 

        \noindent By using the pairing $Z(L_i) = \langle Z(M_1) , Z(M_2; L_i)\rangle $ for each $i$, we get
        \begin{equation}
             \alpha Z(L_0) + \beta Z(L_-) + \gamma Z(L_+)=0,
        \end{equation}
        where each term is assigned to one of the configurations in the figure below, respectively.
        \begin{center}
            \includestandalone[width=0.60\textwidth]{Figures/Part3/p22}\\
            \hypertarget{fig70}{Figure 70.} Crossings corresponding to each of the terms in the partition function.
        \end{center}  
        This is, in fact, the skein-like relation associated with the line configuration via the partition function $Z$. \\ 

        \noindent The determination of $\alpha,\beta, \gamma$ can be done explicitly as shown by Moore and Seiberg using the polynomial equations for rational conformal field theory, but we will not go into this. \\ 

        \noindent At the end of the day, one can recover the Jones polynomial as follows:
        \begin{equation}
            q^{-1}V(L_+) - qV(L_-) + (q^{1/2}-q^{-1/2})V(L_0)=0,
        \end{equation}
        where
        \begin{equation}
            q = \exp\lrp{\frac{2\pi i}{k+2}} \quad\text{and}\quad V(q) = Z(M,K).
        \end{equation}
        The case with $G=SU(2)$ and $M=S^3$ can then be generalized to any closed, oriented, smooth three-manifold $M$ by Witten's work, which essentially combines the previous arguments with the surgery machinery for 3-manifolds.


\bibliography{biblio}
\bibliographystyle{alpha}

\end{document}